\newcommand{\db}{{\mathbf{db}}}
\newcommand{\block}{{\mathbf{b}}}
\newcommand{\rep}{{\mathbf{r}}}
\newcommand{\pw}{{\mathbf{w}}}
\newcommand{\all}[1]{{\mathsf{vars}}({#1})}
\newcommand{\key}[1]{{\mathsf{key}}({#1})}
\newcommand{\FD}[1]{{\mathcal{K}}({#1})}
\newcommand{\step}[1]{\stackrel{#1}{\smallfrown}}
\newcommand{\fd}[2]{{#1}\rightarrow{#2}}
\newcommand{\cqak}[1]{{\mathsf{CERTAINTY}}({#1})}
\newcommand{\prok}[1]{{\mathsf{PROBABILITY}}({#1})}
\newcommand{\attacksymbol}[1]{\stackrel{#1}{\rightsquigarrow}}
\newcommand{\nattacksymbol}[1]{\stackrel{#1}{\not\rightsquigarrow}}
\newcommand{\attacks}[3]{{#1}\attacksymbol{#3}{#2}}
\newcommand{\nattacks}[3]{{#1}\nattacksymbol{#3}{#2}}
\newcommand{\keycl}[2]{{#1}^{+,{#2}}}
\newcommand{\formula}[1]{({#1})}
\newcommand{\bigformula}[1]{\big({#1}\big)}
\newcommand{\signature}[2]{[{#1},{#2}]}
\newcommand{\tuple}[1]{\langle{#1}\rangle}
\newcommand{\att}[1]{{\mathbf{#1}}}
\newcommand{\con}[1]{{\mbox{`#1'}}}
\newcommand{\ax}{1}
\newcommand{\substitute}[3]{{#1}_{[{{#2}\mapsto{#3}}]}}
\newcommand{\sjarp}{\natural}
\newcommand{\cq}[1]{{\mathsf{C}}({#1})}
\newcommand{\acq}[1]{{\mathsf{AC}}({#1})}
\newcommand{\ACO}{\mbox{\bf AC}$^0$}
\newcommand{\sharpCQA}[1]{\mathsf{\sjarp CERTAINTY}({#1})}
\newcommand{\sharpP}{\mbox{\bf $\sjarp$P}}
\newcommand{\coNP}{\mbox{\bf coNP}}
\newcommand{\PTIME}{\mbox{\bf P}}
\newcommand{\FPTIME}{\mbox{\bf FP}}
\newcommand{\waar}{{\mathbf{true}}}
\newcommand{\onwaar}{{\mathbf{false}}}
\newcommand{\queryvars}[1]{\mathsf{vars}({#1})}
\newcommand{\sequencevars}[1]{\mathsf{vars}({#1})}
\newcommand{\card}[1]{|{#1}|}
\newcommand{\type}[1]{\mathsf{type}({#1})}
\newcommand{\probsymbol}{\mathsf{Pr}}
\newcommand{\prob}[1]{\probsymbol({#1})}
\newcommand{\probset}{[0,1]}
\newcommand{\pworlds}[1]{\mathsf{worlds}({#1})}
\newcommand{\calC}{{\mathcal{C}}}
\newcommand{\calP}{{\mathcal{P}}}
\newcommand{\calV}{{\mathcal{V}}}
\newcommand{\calD}{{\mathcal{D}}}
\newcommand{\calR}{{\mathcal{R}}}
\newcommand{\attacksymbolq}[1]{\rightsquigarrow}
\newcommand{\nattacksymbolq}[1]{\not\rightsquigarrow}
\newcommand{\attacksq}[3]{{#1}\attacksymbolq{#3}{#2}}
\newcommand{\nattacksq}[3]{{#1}\nattacksymbolq{#3}{#2}}
\newcommand{\skolem}[2]{\langle{#1},{#2}\rangle}
\newcommand{\triskolem}[3]{\langle{#1},{#2},{#3}\rangle}
\newcommand{\rv}[1]{\widehat{#1}}
\newcommand{\keyclsup}[2]{{#1}^{\boxplus,{#2}}}
\newcommand{\rest}{{\mathsf{rest}}}
\newcommand{\transformsymbol}{{\mathsf{map}}}
\newcommand{\transform}[1]{\transformsymbol({#1})}
\newcommand{\SE}[1]{\mbox{\sf R{#1}:\phantom{x}}}
\newcommand{\se}[1]{\mbox{\sf R{#1}}}
\newcommand{\kc}[2]{{\mathsf{block}}({#1},{#2})}
\newcommand{\clean}[1]{\llfloor{#1}\rrfloor}
\newtheorem{conjecture}{Conjecture}
\newtheorem{lemma}{Lemma}
\newtheorem{theorem}{Theorem}
\newtheorem{corollary}{Corollary}
\newtheorem{proposition}{Proposition}
\newtheorem{sublemma}{Sublemma}
\newenvironment{proof}[1][]{{\bf Proof #1}\hspace{0.5ex}}{\hfill$\Box$\newline}
\newenvironment{subproof}[1][]{{\bf Proof #1}}{\hfill$\dashv$\newline}
\newtheorem{defi}{Definition}
\newenvironment{definition}{\begin{defi}\rm}{\hfill$\lhd$\end{defi}}
\newtheorem{exa}{Example}
\newenvironment{example}{\begin{exa}\rm}{\hfill$\lhd$\end{exa}}
\newcommand{\myparagraph}[1]{\medskip{\bf{#1}}}
\newcommand{\myhfill}{\medskip\hfill}
\title{\bf Charting the Tractability Frontier of Certain Conjunctive Query Answering}
\author{Jef Wijsen\\Universit\'e de Mons, Belgium}
\date{\vspace{-1\baselineskip}}
\begin{document}
\maketitle
\begin{abstract}
An uncertain database is defined as a relational database in which primary keys need not be satisfied.
A repair (or possible world) of such database is obtained by selecting a maximal number of tuples without ever selecting two distinct tuples with the same primary key value.
For a Boolean query $q$, the decision problem $\cqak{q}$ takes as input an uncertain database $\db$ and asks whether $q$ is satisfied by every repair of $\db$.
Our main focus is on acyclic Boolean conjunctive queries without self-join.
Previous work~\cite{DBLP:journals/tods/Wijsen12} has introduced the notion of (directed) attack graph of such queries,
and has proved that $\cqak{q}$ is first-order expressible if and only if the attack graph of $q$ is acyclic.
The current paper investigates the boundary between tractability and intractability of $\cqak{q}$.
We first classify cycles in attack graphs as either weak or strong, and then prove among others the following.
If the attack graph of a query $q$ contains a strong cycle,
then $\cqak{q}$ is \coNP-complete.
If the attack graph of $q$ contains no strong cycle and every weak cycle of it is terminal (i.e., no edge leads from a vertex in the cycle to a vertex outside the cycle), then $\cqak{q}$ is in \PTIME.
We then partially address the only remaining open case, i.e., when the attack graph contains some nonterminal cycle and no strong cycle. 
Finally,  we establish a relationship between the complexities of $\cqak{q}$ and evaluating $q$ on probabilistic databases.
\end{abstract}

\section{Introduction}\label{sec:intro}

Primary key violations are a natural way for modeling uncertainty in the relational model.
If two distinct tuples have the same primary key value, then at least one of them must be mistaken, but we do not know which one.
This representation of uncertainty is also used in probabilistic databases, 
where each tuple is associated with a probability and distinct tuples with the same primary key value are disjoint probabilistic events~\cite[page~35]{DBLP:series/synthesis/2011Suciu}.

In this paper, the term {\em uncertain database\/} is used for databases with primary key constraints that need not be satisfied.
A repair (or possible world) of an uncertain database $\db$ is a maximal subset of $\db$ that satisfies all primary key constraints.
Semantics of querying follows the conventional paradigm of {\em consistent query answering\/}~\cite{ARENAS99,DBLP:series/synthesis/2011Bertossi}:
Given a Boolean query $q$, the decision problem $\cqak{q}$ takes as input an uncertain database $\db$ and asks whether $q$ is satisfied by every repair of $\db$.
Notice that $q$ is not part of the input, so the complexity of the problem is data complexity.
The restriction to Boolean queries simplifies the technical treatment, but is not fundamental.

\begin{figure}
\begin{tabular}{c|*3{l}}
$\att{C}$ 
& $\underline{\att{conf}}$ & $\underline{\att{year}}$ & $\att{city}$\\[0.5ex]\cline{2-4}
& PODS & 2016 & Rome\bigstrut\\
& PODS & 2016 & Paris\\[0.4ex]\cdashline{2-4}
& KDD  & 2017 & Rome\bigstrut\\
\end{tabular}
\begin{tabular}{c|lc}
$\att{R}$ 
& $\underline{\att{conf}}$ & $\att{rank}$\\[0.4ex]\cline{2-3}
& PODS & A\bigstrut\\[0.5ex]\cdashline{2-3}
& KDD  & A\bigstrut\\
& KDD  & B
\end{tabular}
\caption{Uncertain database.}\label{fig:planning}
\end{figure}

Primary keys are underlined in the conference planning database of Fig.~\ref{fig:planning}.
Maximal sets of tuples that agree on their primary key, called {\em blocks\/}, are separated by dashed lines.
There is uncertainty about the city of PODS~2016, and about the rank of KDD.
The database has four repairs.
The query $\exists x\exists y\formula{\att{C}(\underline{x,y},\con{Rome})\land\att{R}(\underline{x},\con{A})}$ (Will Rome host some A conference?) is true in only three repairs.

The problem $\cqak{q}$ is in \coNP\ for first-order queries $q$ (a ``no" certificate is a repair falsifying $q$).
Its complexity for conjunctive queries has attracted the attention of several authors, also outside the database community~\cite{DBLP:conf/dlog/Bienvenu12}.
A major research objective is to find an effective method that takes as input a conjunctive query $q$ and decides to which complexity classes $\cqak{q}$ belongs, or does not belong.
Complexity classes of interest are the class of first-order expressible problems (or \ACO), \PTIME, and \coNP-complete. 

Unless specified otherwise, whenever we say ``query" in the remainder of this paper,
we mean a Boolean conjunctive query without self-join (i.e., without repeated relation names).
Such queries are called acyclic if they have a join tree~\cite{BeeriFMY83}.

Our previous work~\cite{DBLP:conf/pods/Wijsen10,DBLP:journals/tods/Wijsen12} has revealed the frontier between first-order expressibility and inexpressibility of $\cqak{q}$ for acyclic queries $q$.
In the current work, we study the frontier between tractability and intractability of $\cqak{q}$ for the same class of queries.
That is, we aim at an effective method that takes as input a query $q$ and decides whether $\cqak{q}$ is in \PTIME\ or \coNP-complete (or neither of the two, which is theoretically possible if \PTIME$\neq$\coNP~\cite{DBLP:journals/jacm/Ladner75}).
For queries with exactly two atoms, such a method was recently found by Kolaitis and Pema~\cite{DBLP:journals/ipl/KolaitisP12}, but moving from two to more than two atoms is a major challenge.

Uncertain databases become probabilistic by assuming that the probabilities of all repairs are equal and sum up to~$1$.
In probabilistic terms, distinct tuples of the same block represent disjoint (i.e., exclusive) events,
while tuples of distinct blocks are independent. 
Such probabilistic databases have been called {\em block-independent-disjoint\/} (BID).
The trac\-ta\-bility/intrac\-ta\-bility frontier of query evaluation on BID probabilistic databases has been revealed by Dalvi et al.~\cite{DBLP:journals/jcss/DalviRS11}.
Here, evaluating a Boolean query is a function problem that takes as input a BID probabilistic database and asks the probability (a real number between $0$ and $1$) that $q$ is true.
The decision problem $\cqak{q}$, on the other hand, simply asks whether this probability is equal to $1$.

In previous work~\cite{DBLP:journals/tods/Wijsen12}, we introduced the (directed) attack graph of an acyclic query, and showed that $\cqak{q}$ is first-order expressible if and only if $q$'s attack graph is acyclic.
In the current paper, we study attack graphs in more depth.
We will classify cycles in attack graphs as either weak or strong.
The main contributions can then be summarized as follows.
\begin{enumerate}
\item
If the attack graph of an acyclic query $q$ contains a strong cycle, then $\cqak{q}$ is \coNP-complete.
This will be Theorem~\ref{the:strongcycle}.
\item
If the attack graph of an acyclic query $q$ contains no strong cycle and all weak cycles of it are terminal (i.e., no edge leads from a vertex in the cycle to a vertex outside the cycle), then $\cqak{q}$ is in \PTIME.
This will be Theorem~\ref{the:outdegree}.
\item
The only acyclic queries $q$ not covered by the two preceding results have an attack graph with some nonterminal cycle and without strong cycle.
We provide supporting evidence for our conjecture that $\cqak{q}$ is tractable for such queries.
Our results imply that $\cqak{q}$ is tractable for ``cycle" queries $q$ of the form
$\exists^{*}(R_{1}\big(\underline{x_{1}},x_{2})$ $\land R_{2}(\underline{x_{2}},x_{3})\dots$ $\land R_{k-1}(\underline{x_{k-1}},x_{k})$
$\land R_{k}(\underline{x_{k}},x_{1})\big)$.
These queries arise in the work of Fuxman and Miller~\cite{FuxmanM07}.
The case $k=2$ was solved in~\cite{Wijsen2010950}, but the case $k>2$ was open and will be settled by Corollary~\ref{cor:fuxman}.
\item
Theorem~\ref{the:prob} and its Corollary~\ref{cor:prob} will establish a relationship between the tractability frontiers of $\cqak{q}$ and query evaluation on probabilistic databases.
\end{enumerate}
Our work significantly extends and generalizes known results in the literature.

The remainder of this paper is organized as follows.
The next section further discusses related work.
Section~\ref{sec:preliminaries} defines the basic notions of certain conjunctive query answering.
Section~\ref{sec:attackgraph} defines the notion of attack graph.
Sections~\ref{sec:intractability} and~\ref{sec:tractability} show our main intractability and tractability results respectively.  
Section~\ref{sec:bid} establishes a relationship between the complexities of $\cqak{q}$ and  evaluating query $q$ on probabilistic databases.
Section~\ref{sec:discussion} concludes the paper and raises challenges for future research.
Several proofs have been moved to an Appendix.

\section{More Related Work}\label{sec:related}

The investigation of $\cqak{q}$ was pioneered by Fuxman and Miller~\cite{FUXMAN2005,FuxmanM07}, who defined a class of queries $q$ for which $\cqak{q}$ is first-order expressible.
This class has later on been extended by Wijsen~\cite{DBLP:conf/pods/Wijsen10,DBLP:journals/tods/Wijsen12}, who developed an effective method to decide whether $\cqak{q}$ is first-order expressible for acyclic queries $q$.
In their conclusion, Fuxman and Miller~\cite{FUXMAN2005,FuxmanM07} raised the question whether there exist queries $q$, without self-join, such that $\cqak{q}$ is in \PTIME\ but not first-order expressible.
The first example of such a query was identified by Wijsen~\cite{Wijsen2010950}.
The current paper identifies a large class of such queries (all acyclic queries with a cyclic attack graph in which all cycles are weak and terminal). 

Kolaitis and Pema~\cite{DBLP:journals/ipl/KolaitisP12} recently showed that for every query $q$ with exactly two atoms, $\cqak{q}$ is either in \PTIME\ or \coNP-complete, and it is decidable which of the two is the case. 
If $\cqak{q}$ is in \PTIME\ and not first-order expressible, then it can be reduced in polynomial time to the problem of finding maximal (with respect to cardinality) independent sets of vertices in claw-free graphs.
The latter problem can be solved in polynomial time by an ingenious algorithm of Minty~\cite{DBLP:journals/jct/Minty80}.
Unfortunately, the proposed reduction is not applicable on queries with more than two atoms.

The counting variant of $\cqak{q}$, which has been denoted $\sharpCQA{q}$, takes as input an uncertain database $\db$ and asks to determine the number of repairs of $\db$ that satisfy query $q$.
Maslowski and Wijsen~\cite{DBLP:conf/lid/MaslowskiW11,MASLOWSKIJCSS2012} have  recently showed that for every query $q$, the counting problem $\sharpCQA{q}$ is either in \FPTIME\ or \sharpP-complete, and it is decidable which of the two is the case. 

As observed in Section~\ref{sec:intro},
uncertain databases are a restricted case of block-independent-disjoint (BID) probabilistic databases~\cite{DalviRS09,DBLP:journals/jcss/DalviRS11}.
This observation will be elaborated in Section~\ref{sec:bid}.

All aforementioned results assume queries without self-join.
For queries $q$ with self-joins, only fragmentary results about the complexity of $\cqak{q}$ are known~\cite{MARCINKOWSKI02,WijsenIS09}.
The extension to unions of conjunctive queries has been studied in~\cite{GRIECO05}. 

\section{Preliminaries}\label{sec:preliminaries}

We assume disjoint sets of {\em variables\/} and {\em constants\/}.
If $\vec{x}$ is a sequence containing variables and constants, then $\sequencevars{\vec{x}}$ denotes the set of variables that occur in $\vec{x}$, and $\card{\vec{x}}$ denotes the length of $\vec{x}$.

Let $U$ be a set of variables.
A {\em valuation over $U$\/} is a total mapping $\theta$ from $U$ to the set of constants.
Such valuation $\theta$ is extended to be the identity on constants and on variables not in $U$.

\myparagraph{Atoms and key-equal facts.}
Every {\em relation name\/} $R$ has a fixed {\em signature\/}, which is a pair $\signature{n}{k}$ with $n\geq k\geq 1$: the integer $n$ is the {\em arity\/} of the relation name and $\{1,2,\dots,k\}$ is the {\em primary key\/}. 
The relation name $R$ is {\em all-key\/} if $n=k$.
If $R$ is a relation name with signature $\signature{n}{k}$, then $R(s_{1},\dots,s_{n})$ is an {\em $R$-atom\/} (or simply atom), where each $s_{i}$ is either a constant or a variable ($1\leq i\leq n$).
Such atom is commonly written as $R(\underline{\vec{x}},\vec{y})$ where the primary key value $\vec{x}=s_{1},\dots,s_{k}$ is underlined and $\vec{y}=s_{k+1},\dots,s_{n}$.
A {\em fact\/} is an atom in which no variable occurs.
Two facts $R_{1}(\underline{\vec{a}_{1}},\vec{b}_{1}),R_{2}(\underline{\vec{a}_{2}},\vec{b}_{2})$ are {\em key-equal\/} if $R_{1}=R_{2}$ and $\vec{a}_{1}=\vec{a}_{2}$.

We will use letters $F,G,H,I$ for atoms, and $A,B,C$ for facts of an uncertain database.
For atom $F=R(\underline{\vec{x}},\vec{y})$, we denote by $\key{F}$ the set of variables that occur in $\vec{x}$,
and by $\all{F}$ the set of variables that occur in $F$, that is, $\key{F}=\sequencevars{\vec{x}}$ and $\all{F}=\sequencevars{\vec{x}}\cup\sequencevars{\vec{y}}$.

\myparagraph{Uncertain database, blocks, and repairs.}
A {\em database schema\/} is a finite set of {\em relation names\/}.
All constructs that follow are defined relative to a fixed database schema.

An {\em uncertain database\/} is a finite set $\db$ of facts using only the relation names of the schema.
A {\em block\/} of $\db$ is a maximal set of key-equal facts of $\db$.
If $A\in\db$, then $\kc{A}{\db}$ denotes the block of $\db$ containing $A$.
An uncertain database $\db$ is {\em consistent\/} if it does not contain two distinct facts that are key-equal 
(i.e., if every block of $\db$ is a singleton).
A {\em repair\/} of $\db$ is a maximal consistent subset of $\db$.\footnote{It makes no difference whether the word ``maximal" refers to cardinality of sets or set-containment.}

\myparagraph{Boolean conjunctive query.}
A {\em Boolean conjunctive query\/} is a finite set 
$q=\{R_{1}(\underline{\vec{x}_{1}},\vec{y}_{1})$, $\dots$, $R_{n}(\underline{\vec{x}_{n}},\vec{y}_{n})\}$ of atoms.
By $\queryvars{q}$, we denote the set of variables that occur in $q$.
The set $q$ represents the first-order sentence 
$$\exists u_{1}\dots\exists u_{k}\bigformula{R_{1}(\underline{\vec{x}_{1}},\vec{y}_{1})\land\dots\land R_{n}(\underline{\vec{x}_{n}},\vec{y}_{n})},$$ where $\{u_{1}, \dots, u_{k}\}=\queryvars{q}$.
The query $q$ is {\em satisfied\/} by uncertain database $\db$, denoted $\db\models q$, if there exists a valuation $\theta$ over $\queryvars{q}$ such that for each $i\in\{1,\dots,n\}$,
$R_{i}(\underline{\theta(\vec{x}_{i})},\theta(\vec{y}_{i}))\in\db$.
We say that $q$ has a {\em self-join\/} if some relation name occurs more than once in $q$ (i.e., if $R_{i}=R_{j}$ for some $1\leq i<j\leq n$).

The restriction to Boolean queries simplifies the technical treatment, but is not fundamental.
Since every relation name has a fixed signature, relevant primary key constraints are implicitly present in all queries;
moreover, primary keys will be underlined.

\myparagraph{Join tree and acyclic conjunctive query.}
The notions of join tree and acyclicity~\cite{BeeriFMY83} are recalled next.
A {\em join tree\/} for a conjunctive query $q$ is an undirected tree whose vertices are the atoms of $q$ such that the following condition is satisfied:
\begin{quote}
{\em Connectedness Condition.\/} Whenever the same variable $x$ occurs in two atoms $F$ and $G$, then 
$x$ occurs in each atom on the unique path linking $F$ and $G$.
\end{quote} 
Commonly, an edge between atoms $F$ and $G$ is labeled by the (possibly empty) set $\all{F}\cap\all{G}$.
The term {\em Connectedness Condition\/} appears in~\cite{DBLP:journals/jcss/GottlobLS02} and refers to the fact that the set of vertices in which $x$ occurs induces a connected subtree.
A conjunctive query $q$ is {\em acyclic\/} if it has a join tree.
The symbol $\tau$ will be used for join trees.
We write $F\step{L}G$ to denote an edge between $F$ and $G$ with label $L$.
A join tree is shown in Fig.~\ref{fig:ax} (left).

\myparagraph{Certain query answering.}
Given a Boolean conjunctive query $q$, $\cqak{q}$ is (the complexity of) the following set.
$$
\setlength{\arraycolsep}{0pt}
\begin{array}{ll}
\cqak{q}=\{\db\mid\ & \mbox{$\db$ is an uncertain database such}\\
                    & \mbox{that every repair of $\db$ satisfies $q$}\}
\end{array}
$$
$\cqak{q}$ is said to be {\em first-order expressible\/} if there exists a first-order sentence $\varphi$ such that for every uncertain database $\db$, $\db\in\cqak{q}$ if and only if $\db\models\varphi$.
The formula $\varphi$, if it exists, is called a {\em certain first-order rewriting of\/} $q$.

\myparagraph{Purified uncertain databases.}
Let $q$ be a Boolean conjunctive query.
An uncertain database $\db$ is said to be {\em purified relative to $q$\/} if for every fact $A\in\db$,
there exists a valuation $\theta$ over $\queryvars{q}$ such that $A\in\theta(q)\subseteq\db$.
Intuitively, every fact in a purified uncertain database is relevant for the query.
This notion of purified database is new and illustrated next.

\begin{example}
The uncertain database $\{R(\underline{a},b)$, $S(\underline{b},a)$, $S(\underline{b},c)\}$ is not purified relative to query $\{R(\underline{x},y), S(\underline{y},x)\}$ because it contains no $R$-fact that ``joins" with $S(\underline{b},c)$.
\end{example}

The following lemma implies that in the study of tractability of $\cqak{q}$,
we can assume without loss of generality that uncertain databases are purified;
this assumption will simplify the technical treatment.
Notice that the query $q$ in the lemma's statement is not required to be acyclic.

\begin{lemma}\label{lem:purified}
Let $q$ be a Boolean conjunctive query.
Let $\db_{0}$ be an uncertain database.
It is possible to compute in polynomial time an uncertain database $\db$ that is purified relative to $q$ such that

\centerline{$\db\in\cqak{q}\iff\db_{0}\in\cqak{q}$.}
\end{lemma}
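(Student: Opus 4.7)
The plan is to purify $\db_{0}$ by iteratively removing entire blocks that host at least one irrelevant fact. Call $A\in\db$ \emph{relevant in $\db$} if there is a valuation $\theta$ over $\queryvars{q}$ with $A\in\theta(q)\subseteq\db$, and \emph{irrelevant} otherwise. Starting from $\db:=\db_{0}$, while $\db$ still contains some irrelevant fact $A$, replace $\db$ by $\db\setminus\kc{A}{\db}$. When the loop halts, every fact of $\db$ is relevant in $\db$, which is exactly the definition of being purified relative to $q$. Each iteration deletes a nonempty block, so the loop terminates after at most $\card{\db_{0}}$ rounds; and the relevance of a single fact $A$ can be decided in polynomial time (data complexity, since $q$ is fixed) by trying each atom $F$ of $q$ as the preimage of $A$ and exhaustively checking whether the induced partial valuation extends to a homomorphism from $q$ into $\db$. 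Hence the whole procedure runs in polynomial time.

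The heart of the argument is a one-step equivalence: if $A$ is irrelevant in $\db$ and $\block=\kc{A}{\db}$, then $\db\in\cqak{q}$ iff $\db\setminus\block\in\cqak{q}$. Iterating this along the sequence of removals gives the equivalence between $\db_{0}$ and the final $\db$ asserted by the lemma. For the forward direction, take any repair $r'$ of $\db\setminus\block$; no fact of $r'$ is key-equal to $A$, so $r'\cup\{A\}$ is consistent, and it is maximal in $\db$ because the only facts in $\db\setminus(\db\setminus\block)$ lie in $\block$ and are all key-equal to $A$. Hence $r'\cup\{A\}$ is a repair of $\db$ and therefore satisfies $q$. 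Since $A$ is irrelevant in $\db$, no valuation $\theta$ simultaneously satisfies $A\in\theta(q)$ and $\theta(q)\subseteq\db$, so the witnessing $\theta$ must have $\theta(q)\subseteq r'$, giving $r'\models q$. For the converse, every repair $r$ of $\db$ contains exactly one fact $B\in\block$; the set $r\setminus\{B\}$ is a repair of $\db\setminus\block$ (its blocks are exactly those of $\db$ other than $\block$), hence $r\setminus\{B\}\models q$, which immediately yields $r\models q$.

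The principal obstacle is the forward direction of this one-step equivalence, where one has to resist the tempting shortcut of deleting only the offending fact $A$ while retaining the other, possibly relevant, members of its block. That shortcut in general fails to preserve $\cqak{q}$-membership: a mixed block can turn a failing repair of $\db_{0}$ (which chose the irrelevant fact) into a succeeding repair of the reduced database (which is forced to switch to a relevant sibling), so the status with respect to $\cqak{q}$ can flip. The correct move is to discard the whole block at once, and the argument crucially exploits the very irrelevance of $A$ to safely use it as the filler that lifts a repair of $\db\setminus\block$ back to a repair of $\db$ without affecting the truth value of $q$.
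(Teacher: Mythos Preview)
Your proof is correct and follows essentially the same approach as the paper: iteratively delete the entire block of an irrelevant fact, and show that each such deletion preserves membership in $\cqak{q}$ by lifting repairs across the step (the paper phrases both directions contrapositively, but the argument is the same). Your additional discussion of why one must delete the whole block rather than just the offending fact is a helpful elaboration not present in the paper's terse proof.
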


\section{Attack Graph}\label{sec:attackgraph}

\begin{figure*}\centering
\includegraphics{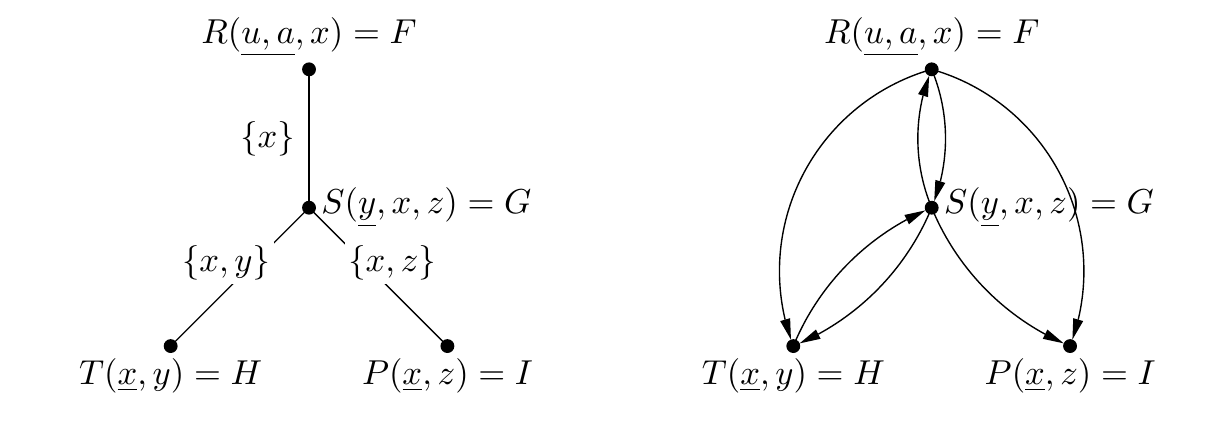}
\caption{
Join tree (left) and attack graph (right) of query $q_{\ax}$.
The attack from $G$ to $F$ is strong. All other attacks are weak.}\label{fig:ax}
\end{figure*}

The primary key of an atom $F$ gives rise to a functional dependency among the variables that occur in $F$.
For example, $R(\underline{x,y},z,u)$ gives rise to $\fd{\{x,y\}}{\{x,y,z,u\}}$, which will be abbreviated as $\fd{xy}{xyzu}$ (and which is equivalent to $\fd{xy}{zu}$).
The set $\FD{q}$ defined next collects all functional dependencies that arise in atoms of $q$.

\begin{definition}
Let $q$ be a Boolean conjunctive query.
We define $\FD{q}$ as the following set of functional dependencies. 

\myhfill$\FD{q}=\{\fd{\key{F}}{\all{F}}\mid F\in q\}$
\end{definition}


Concerning the following definition, recall from relational database theory~\cite[page 387]{DBLP:books/cs/Ullman88} that if $\Sigma$ is a set of functional dependencies over a set $U$ of attributes and $X\subseteq U$,  
then the attribute closure of $X$ (with respect to $\Sigma$) is the set $\{A\in U\mid\Sigma\models\fd{X}{A}\}$.

\begin{definition}
Let $q$ be a Boolean conjunctive query.
For every $F\in q$, we define $\keycl{F}{q}$ as the following set of variables. 

\myhfill$\keycl{F}{q}=\{x\in\queryvars{q}\mid\FD{q\setminus\{F\}}\models\fd{\key{F}}{x}\}$
\end{definition}

In words, $\keycl{F}{q}$ is the attribute closure of the set $\key{F}$ with respect to the set of functional dependencies that arise in the atoms of $q\setminus\{F\}$.
Note that variables play the role of attributes in our framework.

\begin{example}\label{ex:keycl}
Let $q_{\ax}=\{R(\underline{u,a},x)$, $S(\underline{y},x,z)$, $T(\underline{x},y)$, $P(\underline{x},z)\}$.
A join tree for this query is shown in Fig.~\ref{fig:ax} (left).
To shorten notation, let $F=R(\underline{u,a},x)$, $G=S(\underline{y},x,z)$, $H=T(\underline{x},y)$, and $I=P(\underline{x},z)$, as indicated in the figure.
We have the following.

\myhfill$
\begin{array}[b]{l@{\mbox{\ and\ }}l}
\multicolumn{2}{l}{
\FD{q_{\ax}\setminus\{F\}}=\{\fd{y}{xyz}, \fd{x}{xy}, \fd{x}{xz}\}}\\[0.6ex]
\key{F}=\{u\} & \keycl{F}{q_{\ax}}=\{u\}\\[1.5ex]
\multicolumn{2}{l}{
\FD{q_{\ax}\setminus\{G\}}=\{\fd{u}{ux}, \fd{x}{xy}, \fd{x}{xz}\}}\\[0.6ex]
\key{G}=\{y\} & \keycl{G}{q_{\ax}}=\{y\}\\[1.5ex]
\multicolumn{2}{l}{
\FD{q_{\ax}\setminus\{H\}}=\{\fd{u}{ux}, \fd{y}{xyz}, \fd{x}{xz}\}}\\[0.6ex]
\key{H}=\{x\} & \keycl{H}{q_{\ax}}=\{x,z\}\\[1.5ex]
\multicolumn{2}{l}{
\FD{q_{\ax}\setminus\{I\}}=\{\fd{u}{ux}, \fd{y}{xyz}, \fd{x}{xy}\}}\\[0.6ex]
\key{I}=\{x\} & \keycl{I}{q_{\ax}}=\{x,y,z\}
\end{array}
$
\end{example}

\begin{definition}
Let $q$ be an acyclic Boolean conjunctive query.
Let $\tau$ be a join tree for $q$.
The {\em attack graph\/} of $\tau$ is a directed graph whose vertices are the atoms of $q$.
There is a directed edge from $F$ to $G$ if $F,G$ are distinct atoms such that for every label $L$ on the unique path that links $F$ and $G$ in $\tau$,
we have $L\nsubseteq\keycl{F}{q}$.

We write $\attacks{F}{G}{\tau}$ if the attack graph of $\tau$ contains a directed edge from $F$ to $G$.
The directed edge $\attacks{F}{G}{\tau}$ is also called an {\em attack from $F$ to $G$\/}.
If $\attacks{F}{G}{\tau}$, we say that $F$ {\em attacks\/} $G$ (or that $G$ is attacked by $F$).
\end{definition}

\begin{example}
This is a continuation of Example~\ref{ex:keycl}.
Fig.~\ref{fig:ax} (left) shows a join tree $\tau_{\ax}$ for query $q_{\ax}$.
The attack graph of $\tau_{\ax}$ is shown in Fig.~\ref{fig:ax} (right) and is computed as follows. 

Let us first compute the attacks outgoing from $F$.
The path from $F$ to $G$ in the join tree is $F\step{\{x\}}G$. 
Since the label $\{x\}$ is not contained in $\keycl{F}{q_{\ax}}$, the attack graph contains a directed edge from $F$ to $G$, i.e., $\attacks{F}{G}{\tau_{\ax}}$.
The path from $F$ to $H$ in the join tree is $F\step{\{x\}}G\step{\{x,y\}}H$.
Since no label on that path is contained in $\keycl{F}{q_{\ax}}$, the attack graph contains a directed edge from $F$ to $H$.
In the same way, one finds that $F$ attacks $I$.

Let us next compute the attacks outgoing from $H$.
The path from $H$ to $G$ in the join tree is $H\step{\{x,y\}}G$. 
Since the label $\{x,y\}$ is not contained in $\keycl{G}{q_{\ax}}$, the attack graph contains a directed edge from $H$ to $G$, .i.e., $\attacks{H}{G}{\tau_{\ax}}$.
The path from $H$ to $F$ in the join tree is $H\step{\{x,y\}}G\step{\{x\}}F$.
Since the label $\{x\}$ is contained in $\keycl{H}{q_{\ax}}$,
the attack graph contains no directed edge from $H$ to $F$.
And so on.
The complete attack graph is shown in Fig.~\ref{fig:ax} (right).
\end{example}

Remarkably, it was shown in~\cite{DBLP:journals/tods/Wijsen12} that if $\tau_{1}$ and $\tau_{2}$ are distinct join trees for the same acyclic query $q$,
then the attack graph of $\tau_{1}$ is identical to the attack graph of $\tau_{2}$.
This motivates the following definition.

\begin{definition}
Let $q$ be an acyclic Boolean conjunctive query.
The attack graph of $q$ is the attack graph of $\tau$ for any join tree $\tau$ for $q$.
We write $\attacks{F}{G}{q}$ (or simply $\attacksq{F}{G}{q}$ if $q$ is clear from the context) to indicate that the attack graph of $q$ contains a directed edge from $F$ to $G$.
We write  $\nattacks{F}{G}{q}$ if it is not the case that $\attacks{F}{G}{q}$.
\end{definition}

The attack graph of an acyclic query $q$ can be computed in quadratic time in the length of $q$~\cite{DBLP:journals/tods/Wijsen12}.
Figures~\ref{fig:threecycles} and~\ref{fig:acqthree} show attack graphs, but omit join trees.
The main result in~\cite{DBLP:journals/tods/Wijsen12} is the following.

\begin{theorem}[\cite{DBLP:journals/tods/Wijsen12}]\label{the:acyclic}
The following are equivalent for all acyclic Boolean conjunctive queries $q$ without self-join:
\begin{enumerate}
\item
The attack graph of $q$ is acyclic.
\item
$\cqak{q}$ is first-order expressible.
\end{enumerate}
\end{theorem}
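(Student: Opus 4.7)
Since this is a biconditional, I would prove the two directions separately, and the natural approach is induction on $|q|$ in one direction and a locality/game argument in the other.

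For the direction $(1)\Rightarrow(2)$, the plan is to induct on the number of atoms of $q$. The key structural lemma needed is that if the attack graph of $q$ is acyclic, then it has a sink-free vertex, i.e., an atom $F\in q$ with no incoming attacks (an \emph{unattacked} atom); this follows because a DAG on a finite vertex set has a source. Given such an $F=R(\underline{\vec{x}},\vec{y})$, the idea is to build a certain first-order rewriting of the form
\[
\exists \vec{x}\,\exists \vec{y}\,\Bigl(R(\underline{\vec{x}},\vec{y})\wedge \forall \vec{y}'\bigl(R(\underline{\vec{x}},\vec{y}')\rightarrow \varphi_{q\setminus\{F\}}[\vec{x},\vec{y}'{/}\vec{y}]\bigr)\Bigr),
\]
where the outer existential chooses a block for $F$, the universal quantifier ranges over all facts in that block (capturing the uncertainty in $F$'s block), and $\varphi_{q\setminus\{F\}}$ is inductively obtained from the query $q'$ resulting from substituting the chosen values and dropping $F$. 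Two routine sublemmas must be verified: first, that $\db\in\cqak{q}$ iff some such block-choice for an unattacked atom witnesses $\cqak{q\setminus\{F\}}$ under the induced substitution (this is where being unattacked is essential: no atom ``depends'' on $F$'s non-key values in a way that the key of $F$ cannot functionally determine); second, that removing an unattacked $F$ and applying the substitution yields a smaller acyclic query whose attack graph is still acyclic, so induction applies. The base case is a single atom, where the rewriting is simply $\exists\vec{x}\,\exists\vec{y}\bigl(R(\underline{\vec{x}},\vec{y})\wedge \forall\vec{y}'(R(\underline{\vec{x}},\vec{y}')\rightarrow \vec{y}'=\vec{y})\bigr)$ conjoined with any constant equalities forced by $q$.

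For the direction $(2)\Rightarrow(1)$, I would argue the contrapositive: if the attack graph of $q$ contains a directed cycle $F_{1}\attacksymbolq{}F_{2}\attacksymbolq{}\cdots\attacksymbolq{}F_{m}\attacksymbolq{}F_{1}$, then $\cqak{q}$ is not first-order expressible. The standard tool is an Ehrenfeucht--Fra\"{\i}ss\'{e} game: for every $k$, construct two uncertain databases $\dbyes^{k}\in\cqak{q}$ and $\dbno^{k}\notin\cqak{q}$ such that the duplicator wins the $k$-round game on $\dbyes^{k},\dbno^{k}$. The construction exploits the cycle by ``unfolding'' it into a long cyclic chain of blocks, each of length~$2$, where consecutive blocks share join variables in a pattern dictated by the attacks along $F_{1},\dots,F_{m}$; the choices within one block propagate around the cycle. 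In $\dbno^{k}$, one chain is even, so there is a consistent global selection falsifying $q$; in $\dbyes^{k}$, one chain is odd, which creates an unavoidable conflict that forces $q$ to hold in every repair. Both chains can be made arbitrarily long and locally indistinguishable, so they are equivalent under any fixed quantifier-rank sentence.

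The main obstacle is the non-expressibility direction, specifically the chain construction that realizes the abstract attack-cycle as concrete uncertain databases. One must choose the join-variable identifications at each step so that the Connectedness Condition on the join tree is respected, that the non-attack edges between cycle atoms and outside atoms do not accidentally make the databases distinguishable, and that the parity (even vs.\ odd chain) is precisely what controls membership in $\cqak{q}$. Establishing that the unattacked atom exists and that the substitution preserves acyclicity of both the join tree and the attack graph in the inductive step is the subtler point in the easy direction, but it is essentially a syntactic check on how $\keycl{G}{q}$ changes when $F$ is removed and its key values are substituted in.
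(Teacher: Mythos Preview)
The paper does not prove this theorem; it is quoted from~\cite{DBLP:journals/tods/Wijsen12} and stated here without proof, so there is no argument in the present paper to compare your attempt against.

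For what it is worth, your outline tracks the original proof reasonably well. The $(1)\Rightarrow(2)$ direction in~\cite{DBLP:journals/tods/Wijsen12} does proceed by induction on $\card{q}$ using an unattacked atom, and the rewriting scheme you sketch is essentially the one used there (it surfaces in the present paper in the inductive step of the proof of Theorem~\ref{the:outdegree}, via Corollary~8.11 and Lemma~8.6 of~\cite{DBLP:journals/tods/Wijsen12}). The $(2)\Rightarrow(1)$ direction in the original is argued via Hanf locality rather than Ehrenfeucht--Fra\"{\i}ss\'{e} games per se, but the underlying construction is the parity-based chain you describe; one also first reduces an arbitrary attack cycle to a cycle of length two before building the instances.

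One small correction to your sketch: your base-case formula $\exists\vec{x}\,\exists\vec{y}\bigl(R(\underline{\vec{x}},\vec{y})\wedge\forall\vec{y}'(R(\underline{\vec{x}},\vec{y}')\rightarrow\vec{y}'=\vec{y})\bigr)$ asserts the existence of a singleton block, which is too strong. For a single atom $R(\underline{\vec{x}},\vec{y})$ with no constants, every repair contains some $R$-fact whenever $R$ is nonempty, so the certain rewriting is just $\exists\vec{x}\,\exists\vec{y}\,R(\underline{\vec{x}},\vec{y})$; when constants appear in non-key positions, the correct shape is $\exists\vec{x}\,\bigl(\exists\vec{y}\,R(\underline{\vec{x}},\vec{y})\wedge\forall\vec{y}'(R(\underline{\vec{x}},\vec{y}')\rightarrow\psi(\vec{y}'))\bigr)$, where $\psi$ enforces the constant positions.
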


Finally, we provide two lemmas that will be useful later on.

\begin{lemma}\label{lem:rsu}
Let $q$ be an acyclic Boolean conjunctive query.
Let $F,G$ be distinct atoms of $q$.
If $\attacksq{F}{G}{q}$, then $\key{G}\nsubseteq\keycl{F}{q}$ and $\all{F}\nsubseteq\keycl{F}{q}$.
\end{lemma}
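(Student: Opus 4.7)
The plan is to exploit the defining property of an attack along a join tree $\tau$ for $q$: if $\attacks{F}{G}{\tau}$, then on the unique $\tau$-path from $F$ to $G$, no edge label is contained in $\keycl{F}{q}$. Both non-containments will then fall out by inspecting the two extremal edges of that path and using the fact that every label is a subset of the variable set of either endpoint.

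For $\all{F}\nsubseteq\keycl{F}{q}$, I would take the first edge $F\step{L_{1}}F_{1}$ of the path. By the usual labeling convention, $L_{1}=\all{F}\cap\all{F_{1}}\subseteq\all{F}$. Hence, if $\all{F}$ were contained in $\keycl{F}{q}$, so would $L_{1}$ be, contradicting the attack hypothesis.

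For $\key{G}\nsubseteq\keycl{F}{q}$, I would symmetrically take the last edge $H\step{L_{n}}G$ of the path, so that $L_{n}\subseteq\all{G}$. Assume towards a contradiction that $\key{G}\subseteq\keycl{F}{q}$. The key step is to propagate this containment from $\key{G}$ up to $\all{G}$ by applying the functional dependency $\fd{\key{G}}{\all{G}}$; this FD belongs to $\FD{q\setminus\{F\}}$ precisely because $F\neq G$, so $G\in q\setminus\{F\}$. Then transitivity of FD implication gives $\FD{q\setminus\{F\}}\models\fd{\key{F}}{\all{G}}$, so $\all{G}\subseteq\keycl{F}{q}$, and in particular $L_{n}\subseteq\keycl{F}{q}$, again contradicting the attack hypothesis.

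The plan contains no real obstacle. The one conceptual point worth flagging is that $\keycl{F}{q}$ is closed only under FDs arising from atoms in $q\setminus\{F\}$, so the propagation $\key{G}\rightsquigarrow\all{G}$ in the second bullet legitimately uses the hypothesis $F\neq G$; without this distinctness, the FD $\fd{\key{G}}{\all{G}}$ might have been removed from the closure computation and the argument would collapse.
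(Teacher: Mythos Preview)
Your proposal is correct and follows essentially the same approach as the paper's proof: both arguments inspect the first and last edge labels on the $\tau$-path between $F$ and $G$, use $L_{1}\subseteq\all{F}$ and $L_{n}\subseteq\all{G}$, and for the $\key{G}$ part invoke the functional dependency $\fd{\key{G}}{\all{G}}\in\FD{q\setminus\{F\}}$ (which requires $F\neq G$) together with transitivity to derive a contradiction with $L_{n}\nsubseteq\keycl{F}{q}$. Your write-up is in fact a bit more explicit than the paper's about why the hypothesis $F\neq G$ is needed.
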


\begin{lemma}[\cite{DBLP:journals/tods/Wijsen12}]\label{lem:A2}
Let $q$ be an acyclic Boolean conjunctive query.
Let $F,G,H$ be distinct atoms of $q$.
If $\attacksq{F}{G}{q}$ and $\attacksq{G}{H}{q}$,
then $\attacksq{F}{H}{q}$ or $\attacksq{G}{F}{q}$.
\end{lemma}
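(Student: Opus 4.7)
The plan is to work inside a fixed join tree $\tau$ for $q$ (any choice suffices, since the attack graph is independent of $\tau$) and to exploit the fact that $F$, $G$, $H$ span a tripod in $\tau$, meeting at some Steiner vertex $M$ that is either one of $F$, $G$, $H$ or a fourth atom. The whole argument splits according to the identity of $M$.

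Two of the four configurations are easy. If $M = F$, then the path from $G$ to $F$ is an initial segment of the path from $G$ to $H$, so the label condition from $\attacksq{G}{H}{q}$ restricts to give $\attacksq{G}{F}{q}$ immediately. Symmetrically, if $M = H$, then the path from $F$ to $H$ is an initial segment of the path from $F$ to $G$, and $\attacksq{F}{G}{q}$ restricts to give $\attacksq{F}{H}{q}$.

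The substantive cases are $M = G$ (so $G$ lies strictly between $F$ and $H$ and the $F$-to-$H$ path is the concatenation of the $F$-to-$G$ and $G$-to-$H$ paths) and $M \notin \{F, G, H\}$ (a strict tripod). In both, I would argue by contradiction, assuming $\nattacksq{F}{H}{q}$ and $\nattacksq{G}{F}{q}$. This produces a label $L_1$ on the $M$-to-$H$ segment with $L_1 \subseteq \keycl{F}{q}$ and a label $L_2$ on the $M$-to-$F$ segment with $L_2 \subseteq \keycl{G}{q}$, while the hypotheses force $L_1 \nsubseteq \keycl{G}{q}$ and $L_2 \nsubseteq \keycl{F}{q}$. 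The plan is then to use the Connectedness Condition — any variable in such a label occurs in every atom on the corresponding subpath — together with the primary-key FDs $\fd{\key{X}}{\all{X}}$ of the intermediate atoms $X$ to propagate closure memberships toward $M$. The FDs of every atom other than $F$ lie in $\FD{q\setminus\{F\}}$ and similarly for $G$, so as one walks toward $M$ from $F$ (resp.\ from $G$), the closure $\keycl{G}{q}$ (resp.\ $\keycl{F}{q}$) can be enlarged by absorbing the full variable set of each atom whose primary key it already contains. Lemma~\ref{lem:rsu}, which asserts $\key{G} \nsubseteq \keycl{F}{q}$ under $\attacksq{F}{G}{q}$, is what prevents these chains from collapsing trivially, and is used to pin down where the contradiction must occur — namely that the two chains eventually meet in an atom whose key-closure containment forces either $L_2 \subseteq \keycl{F}{q}$ or $L_1 \subseteq \keycl{G}{q}$.

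The main obstacle is the final FD-chasing step: tracking exactly which FDs are active in $\FD{q\setminus\{F\}}$ versus $\FD{q\setminus\{G\}}$ while propagating closure memberships through intermediate atoms, and ensuring that this propagation really does terminate in one of the two contradictions rather than simply re-encoding the hypotheses. The delicate point is handling the case $M = G$, where the two chains start from opposite sides of the single joint vertex $G$ and the roles of $\keycl{F}{q}$ and $\keycl{G}{q}$ are most entangled; the case $M \notin \{F, G, H\}$ is technically similar but cleaner because the three radial segments are edge-disjoint.
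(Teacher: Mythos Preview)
The paper does not prove Lemma~\ref{lem:A2}; it is imported verbatim from~\cite{DBLP:journals/tods/Wijsen12} and used as a black box (see the lemma's citation and the absence of any proof in the appendix section ``Proofs of Section~\ref{sec:attackgraph}'', which treats only Lemma~\ref{lem:rsu}). So there is no proof in this paper to compare your proposal against.

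That said, your outline is headed in the right direction. The two easy cases are genuinely easy and correctly argued: if the Steiner point $M$ equals $F$ or $H$, the relevant attack path is a subpath of one already witnessed, and the label condition restricts. For the remaining cases your contradiction setup is also correct --- the four constraints $L_1\subseteq\keycl{F}{q}$, $L_1\nsubseteq\keycl{G}{q}$, $L_2\subseteq\keycl{G}{q}$, $L_2\nsubseteq\keycl{F}{q}$ are exactly what the hypotheses give you. But as you yourself flag, the proposal stops short of an argument at the crucial point: you have not shown \emph{how} the FD-chase along the two segments actually forces a contradiction. Saying that closures ``can be enlarged by absorbing the full variable set of each atom whose primary key it already contains'' is true but does not by itself yield a collision; you need to pick the labels $L_1,L_2$ \emph{closest to $M$} with the stated inclusions, and then argue carefully about the atom at $M$ (or the atoms adjacent to $M$) using that minimality. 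Without that choice and the resulting case analysis at the junction, the chase may simply reproduce what you already know. So the proposal is a plan with a correctly identified gap, not yet a proof; if you want to complete it, consult the original argument in~\cite{DBLP:journals/tods/Wijsen12}.
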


\section{Intractability}\label{sec:intractability}

The following definition classifies cycles in attack graphs as either strong or weak.
The main result of this section is that $\cqak{q}$ is \coNP-complete for acyclic queries $q$ whose attack graph contains a strong cycle.

\begin{definition}
Let $q$ be an acyclic Boolean conjunctive query.
For every $F\in q$, we define $\keyclsup{F}{q}$ as the following set of variables. 
$$\keyclsup{F}{q}=\{x\in\queryvars{q}\mid\FD{q}\models\fd{\key{F}}{x}\}$$
An attack $\attacksq{F}{G}{q}$ in the attack graph of $q$ is called {\em weak\/} if $\key{G}\subseteq\keyclsup{F}{q}$.
An attack that is not weak, is called {\em strong\/}.

A (directed) {\em cycle of size $n$\/} in the attack graph of $q$ is a sequence of edges
$F_{0}\attacksymbolq{q}F_{1}\attacksymbolq{q}F_{2}\dots\attacksymbolq{q}F_{n-1}\attacksymbolq{q}F_{0}$
such that $i\neq j$ implies $F_{i}\neq F_{j}$.
Thus, {\em cycle\/} means elementary cycle.

A cycle in the attack graph of $q$ is called {\em strong\/} if at least one attack in the cycle is strong. 
A cycle that is not strong, is called {\em weak\/}.
\end{definition}

It is straightforward that $\keycl{F}{q}\subseteq\keyclsup{F}{q}$.

\begin{example}\label{ex:cycles}
For the query $q_{\ax}$ in  Fig.~\ref{fig:ax}, we have the following.
\begin{eqnarray*}
\FD{q_{\ax}} & = & \{\fd{u}{ux}, \fd{y}{xyz}, \fd{x}{xy}, \fd{x}{xz}\}\\
\keyclsup{F}{q_{\ax}} & = & \{u,x,y,z\}\\
\keyclsup{G}{q_{\ax}} & = & \{x,y,z\}\\
\keyclsup{H}{q_{\ax}} & = & \{x,y,z\}\\
\keyclsup{I}{q_{\ax}} & = & \{x,y,z\}
\end{eqnarray*}
The attack $\attacks{F}{G}{q_{\ax}}$ is weak, because $\key{G}=\{x\}\subseteq\keyclsup{F}{q_{\ax}}$.
The attack $\attacks{G}{F}{q_{\ax}}$ is strong, because $\key{F}=\{u\}\not\subseteq\keyclsup{G}{q_{\ax}}$.
One can verify that the attack from $G$ to $F$ is the only strong attack in the attack graph of $q_{\ax}$.

The attack cycle $G\attacksymbol{q_{\ax}}H\attacksymbol{q_{\ax}}G$ is weak.
The attack cycle $F\attacksymbol{q_{\ax}}G\attacksymbol{q_{\ax}}F$ is strong, because it contains the strong attack $\attacks{G}{F}{q_{\ax}}$.
For the same reason, the attack cycle $F\attacksymbol{q_{\ax}}H\attacksymbol{q_{\ax}}G\attacksymbol{q_{\ax}}F$ is strong.
\end{example}

Example~\ref{ex:cycles} showed that the attack graph of $q_{\ax}$ has a strong cycle of length~$3$, and a strong cycle of length~$2$.
This is no coincidence, as stated by the following lemma.

\begin{lemma}\label{lem:two}
Let $q$ be an acyclic Boolean conjunctive query.
If the attack graph of $q$ contains a strong cycle,
then it contains a strong cycle of length $2$.
\end{lemma}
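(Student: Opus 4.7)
The plan is to reformulate weakness algebraically, then argue by minimum counterexample. First, observe that $\keyclsup{F}{q}$ is closed under $\FD{q}$, so an attack $\attacksq{F}{G}{q}$ is weak exactly when $\keyclsup{G}{q}\subseteq\keyclsup{F}{q}$; consequently, a cycle $F_0\attacksymbolq{q}F_1\attacksymbolq{q}\dots\attacksymbolq{q}F_{n-1}\attacksymbolq{q}F_0$ is weak iff the sets $\keyclsup{F_0}{q},\dots,\keyclsup{F_{n-1}}{q}$ all coincide, because a chain of inclusions that closes up must be constant. Equivalently, a cycle is strong iff not all $\keyclsup{F_i}{q}$ are equal.

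Now suppose for contradiction that some strong cycle has minimum length $n\geq 3$, and rotate so that $F_0\to F_1$ is a strong edge, i.e., $\keyclsup{F_1}{q}\nsubseteq\keyclsup{F_0}{q}$. Apply Lemma~\ref{lem:A2} to $F_0,F_1,F_2$. The outcome $\attacksq{F_1}{F_0}{q}$ yields the 2-cycle $F_0\attacksymbolq{q}F_1\attacksymbolq{q}F_0$, which is strong because it contains the strong edge $F_0\to F_1$, contradicting minimality. So the outcome must be $\attacksq{F_0}{F_2}{q}$, giving the shorter cycle $F_0\attacksymbolq{q}F_2\attacksymbolq{q}F_3\attacksymbolq{q}\dots\attacksymbolq{q}F_{n-1}\attacksymbolq{q}F_0$ of length $n-1\geq 2$, which by minimality is weak, so $\keyclsup{F_0}{q}=\keyclsup{F_2}{q}=\dots=\keyclsup{F_{n-1}}{q}=:K$. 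In particular $\keyclsup{F_1}{q}\neq K$, since $\keyclsup{F_1}{q}\nsubseteq K$. Starting from $\attacksq{F_1}{F_2}{q}$, I iteratively apply Lemma~\ref{lem:A2} to triples $F_1\to F_k\to F_{k+1\bmod n}$ for $k=2,3,\dots,n-1$. At each step, either $\attacksq{F_k}{F_1}{q}$ appears for some $k\in\{2,\dots,n-1\}$, producing the 2-cycle $F_1\attacksymbolq{q}F_k\attacksymbolq{q}F_1$ which is strong because $\keyclsup{F_k}{q}=K\neq\keyclsup{F_1}{q}$; or $\attacksq{F_1}{F_{k+1}}{q}$ is inherited, and the induction continues. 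If no backward edge to $F_1$ ever appears, the final step at $k=n-1$ yields $\attacksq{F_1}{F_0}{q}$ (and the strong 2-cycle $F_0\attacksymbolq{q}F_1\attacksymbolq{q}F_0$) or $\attacksq{F_{n-1}}{F_1}{q}$ (and the strong 2-cycle $F_1\attacksymbolq{q}F_{n-1}\attacksymbolq{q}F_1$). Every branch contradicts $n\geq 3$.

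The main obstacle is the opening characterization: it is what lets minimality of the shortened cycle \emph{propagate} the common value $\keyclsup{F_i}{q}=K$ across the remaining vertices, and it is also what certifies that each newly exposed 2-cycle between $F_1$ and some $F_k\neq F_1$ is strong, namely because $\keyclsup{F_k}{q}=K\neq\keyclsup{F_1}{q}$. Without this algebraic reading, the raw combinatorial transitivity of Lemma~\ref{lem:A2} is enough to shrink any cycle to length $2$, but it does not on its own preserve strongness through the shrinking.
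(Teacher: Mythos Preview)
Your proof is correct. Both your argument and the paper's reduce a minimum-length strong cycle using Lemma~\ref{lem:A2}, but the execution differs. The paper argues inductively that any strong cycle of length $n\geq 3$ contains a strictly shorter strong one: applying Lemma~\ref{lem:A2} to $H_{1}\attacksymbolq{q}H_{2}\attacksymbolq{q}H_{3}$ either skips $H_{2}$ (keeping the strong edge $H_{0}\attacksymbolq{q}H_{1}$ in a cycle of length $n-1$) or produces $\attacksq{H_{2}}{H_{1}}{q}$; in the latter branch the paper splits on whether $H_{2}\attacksymbolq{q}H_{1}$ is strong and, if weak, applies Lemma~\ref{lem:A2} once more to $H_{0}\attacksymbolq{q}H_{1}\attacksymbolq{q}H_{2}$ and shows by transitivity of $\FD{q}$-implication that the new edge $H_{0}\attacksymbolq{q}H_{2}$ is strong. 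Your route instead front-loads the algebra: the equivalence ``weak cycle $\iff$ all $\keyclsup{F_{i}}{q}$ coincide'' lets minimality of the shortened $(n-1)$-cycle pin down $\keyclsup{F_{k}}{q}=K$ for every $k\neq 1$, after which every $2$-cycle through $F_{1}$ is automatically strong and you can iterate Lemma~\ref{lem:A2} uniformly around the cycle. This buys a cleaner, branch-free endgame at the price of establishing the characterization up front; the paper's version avoids that lemma but must track strongness explicitly through each case.
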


\begin{figure}\centering
\includegraphics{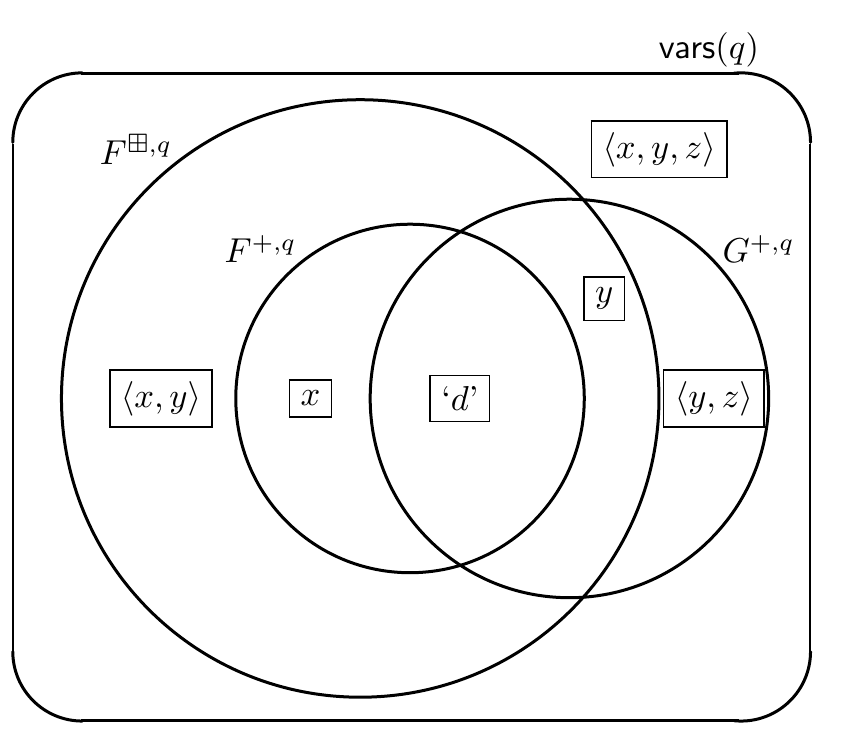}
\caption{Help for the proof of Theorem~\ref{the:strongcycle}.}\label{fig:schema}
\end{figure}

The following proof establishes that for every acyclic query $q$ whose attack graph contains a strong cycle,
there exists a polynomial-time many-one reduction
from $\cqak{q_{0}}$ to $\cqak{q}$, where $q_{0}=\{R_{0}(\underline{x},y), S_{0}(\underline{y,z},x)\}$.
Since $\cqak{q_{0}}$ was proved \coNP-hard by Kolaitis and Pema~\cite{DBLP:journals/ipl/KolaitisP12},
we obtain the desired \coNP-hard lower bound for $\cqak{q}$.
As the proof is rather involved, we provide in Fig.~\ref{fig:schema} a mnemonic for the construction in the beginning of the proof.
To further improve readability, some parts of the proof will be stated as sublemmas.

\begin{theorem}\label{the:strongcycle}
Let $q$ be an acyclic Boolean conjunctive query without self-join.
If the attack graph of $q$ contains a strong cycle,
then $\cqak{q}$ is {\bf coNP}-complete.
\end{theorem}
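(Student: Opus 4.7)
Membership in \coNP\ is routine: a no-certificate for $\db\in\cqak{q}$ is a repair of $\db$ that falsifies $q$, which can be guessed and verified in polynomial time. The real work is the \coNP-hardness, and I will follow the outline given: reduce in polynomial time from $\cqak{q_{0}}$ to $\cqak{q}$, where $q_{0}=\{R_{0}(\underline{x},y), S_{0}(\underline{y,z},x)\}$ is the query shown \coNP-hard by Kolaitis and Pema.

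By Lemma~\ref{lem:two} I may assume that the attack graph of $q$ already contains a strong cycle of length~$2$, say on two distinct atoms $F,G\in q$ with $\attacksq{F}{G}{q}$ and $\attacksq{G}{F}{q}$. Swapping names if necessary, I may assume $\attacksq{G}{F}{q}$ is the strong attack, i.e.\ $\key{F}\nsubseteq\keyclsup{G}{q}$, and pick a witness variable $w\in\key{F}\setminus\keyclsup{G}{q}$. Likewise, since $\attacksq{F}{G}{q}$, Lemma~\ref{lem:rsu} yields a witness $v\in\key{G}\setminus\keycl{F}{q}$. These two witnesses are the hooks that will carry the roles of $R_{0}$'s $x$-position and $S_{0}$'s $y$-position, respectively.

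Given an input $\db_{0}$ of $\cqak{q_{0}}$ over the constants appearing in $\db_{0}$, I construct $\db$ over the schema of $q$ as follows. I fix, once and for all, an assignment $\sigma:\queryvars{q}\to\mathit{constants}$ that sends every variable to a fresh global constant. Then for every fact $R_{0}(\underline{a},b)\in\db_{0}$ I add an $F$-fact obtained from $\sigma$ by overwriting the $w$-position with $a$ and, if $v\in\all{F}$, the $v$-position with $b$; for every fact $S_{0}(\underline{b,c},a)\in\db_{0}$ I add a $G$-fact obtained from $\sigma$ by overwriting the $v$-position with $b$, a second designated $\key{G}$-position with $c$, and, if $w\in\all{G}$, the $w$-position with $a$. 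Variables in $\keyclsup{G}{q}$ different from $v$ are additionally reset so that their value is a function of the $\key{G}$-value of the enclosing $G$-fact (and analogously for $\keycl{F}{q}$ around $F$); this is what makes the block structure on the $F$-facts mirror the block structure of $R_{0}$, and the block structure on the $G$-facts mirror that of $S_{0}$. Finally, for every atom $H\in q\setminus\{F,G\}$ I add the single canonical $H$-fact $H[\sigma]$ obtained from $\sigma$. By Lemma~\ref{lem:purified} I may then purify relative to $q$. The construction is clearly polynomial-time computable, and by the block-correspondence above, the repairs of $\db$ are in one-to-one correspondence with the repairs of $\db_{0}$.

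The main obstacle is verifying the equivalence ``every repair of $\db$ satisfies $q$ iff every repair of $\db_{0}$ satisfies $q_{0}$.'' The easier direction is that a match of $q_{0}$ in a repair of $\db_{0}$ always lifts to a match of $q$ in the corresponding repair of $\db$, since the atoms in $q\setminus\{F,G\}$ are always present as their unique canonical facts. The subtle direction is the converse: I must rule out spurious matches of $q$ that use $F$- and $G$-facts whose overwritten $w$- and $v$-positions come from different $R_{0}$- and $S_{0}$-tuples that do not agree on $x$ and $y$. This is exactly where the choice of $w\notin\keyclsup{G}{q}$ and $v\notin\keycl{F}{q}$ pays off: the FD closure information controls which positions across atoms must coincide in any valid match, forcing the $F$-witness and the $G$-witness to agree on the coordinates that correspond to $x$ and $y$ of $q_{0}$; spurious cross-joins are blocked because no FD path through $q\setminus\{F\}$ determines $v$ from $\key{F}$, and no FD path through $q$ determines $w$ from $\key{G}$. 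Combining this with the block-correspondence and purification (Lemma~\ref{lem:purified}) yields the reduction, and hence \coNP-hardness.
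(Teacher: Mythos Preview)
Your plan has the right skeleton---shorten the strong cycle to length two via Lemma~\ref{lem:two} and reduce from $\cqak{q_{0}}$---but the concrete construction does not work, and the failure is precisely at the step you call ``easier.'' The fatal line is ``for every atom $H\in q\setminus\{F,G\}$ I add the single canonical $H$-fact $H[\sigma]$.'' If any overwritten variable (your $w$, your $v$, or the ``second designated $\key{G}$-position'') also occurs in some $H\neq F,G$, then the $F$- or $G$-fact you built and the frozen $H[\sigma]$ disagree on that shared position, so no valuation of $q$ can hit both; the forward lift fails (and in fact purification then empties $\db$). Take $q_{\ax}$ of Example~\ref{ex:keycl}: there $\attacksq{G}{F}{q_{\ax}}$ is the strong attack, your only choices are $w=u$ and $v=y$, and $y$ occurs in $H=T(\underline{x},y)$. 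Your $G$-fact carries $b$ at $y$ while your unique $T$-fact carries $\sigma(y)$ at $y$. The paper avoids this by defining, for every valuation $\theta$ over $\{x,y,z\}$, a valuation $\rv{\theta}$ on \emph{all} of $\queryvars{q}$ via a six-region partition determined by $\keycl{F}{q}$, $\keycl{G}{q}$, $\keyclsup{F}{q}$ (Fig.~\ref{fig:schema}), and then inserts $\rv{\theta}(H)$ for every $H$ and every $\theta$. The core of the argument is Sublemma~\ref{prop:H}, which proves that all these $H$-facts are mutually consistent; so $\db_{\rest}$ is a fixed consistent set contained in every repair, but it is far from one fact per atom.

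Two further holes would bite even if the above were patched. First, your $F$-facts must encode $b$ in their non-key part for the $R_{0}$-block correspondence, but you only place $b$ at position $v$ ``if $v\in\all{F}$''; in $q_{\ax}$ one has $v=y\notin\all{F}=\{u,x\}$, so all your $F$-facts with the same $a$ coincide. The paper instead uses Lemma~\ref{lem:rsu} to get some $u\in\all{F}\setminus\keycl{F}{q}$ and the design that every region outside $\keycl{F}{q}$ carries $\theta(y)$. Second, the ``second designated $\key{G}$-position'' for $c$ need not exist: in $q_{\ax}$, $\key{G}=\{y\}$ is a singleton. The paper handles this by packing both $b$ and $c$ into a \emph{single} key variable of the atom on the strong side of the attack, setting $\rv{\theta}(u)=\skolem{\theta(y)}{\theta(z)}$ for $u\in\key{G}\setminus\keyclsup{F}{q}$ (which is nonempty precisely because that attack is strong; note this also means you have assigned $R_{0}$ and $S_{0}$ to the wrong atoms given your choice of which attack is strong). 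In short, overwriting two or three positions and freezing the rest cannot work; the encoding must be defined globally on $\queryvars{q}$ through the closure sets, and the consistency of the resulting $H$-facts (Sublemma~\ref{prop:H}) is where most of the work lies.
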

\begin{proof}
Since $\cqak{q}$ is obviously in \coNP, it suffices to show that it is \coNP-hard.
Assume that the attack graph of $q$ contains a strong cycle.
By Lemma~\ref{lem:two}, we can assume $F,G\in q$ such that $F\attacksymbolq{q}G\attacksymbolq{q}F$ and the attack $\attacksq{F}{G}{q}$ is strong.
For every valuation $\theta$ over $\{x,y,z\}$,
we define $\rv{\theta}$ as the following valuation over $\queryvars{q}$.
\begin{enumerate}
\item
If $u\in\keycl{F}{q}\cap\keycl{G}{q}$, then $\rv{\theta}(u)=\con{$d$}$ for some fixed constant $d$;
\item
if $u\in\keycl{F}{q}\setminus\keycl{G}{q}$, then $\rv{\theta}(u)=\theta(x)$;
\item
if $u\in\keycl{G}{q}\setminus\keyclsup{F}{q}$, then $\rv{\theta}(u)=\skolem{\theta(y)}{\theta(z)}$;
\item
if $u\in\bigformula{\keycl{G}{q}\cap\keyclsup{F}{q}}\setminus\keycl{F}{q}$, then $\rv{\theta}(u)=\theta(y)$;
\item
if $u\in\keyclsup{F}{q}\setminus\bigformula{\keycl{F}{q}\cup\keycl{G}{q}}$, then $\rv{\theta}(u)=\skolem{\theta(x)}{\theta(y)}$; and
\item
if $u\not\in\keyclsup{F}{q}\cup\keycl{G}{q}$, then $\rv{\theta}(u)=\triskolem{\theta(x)}{\theta(y)}{\theta(z)}$.
\end{enumerate}
Notice that $\rv{\theta}(u)$ can be a sequence of length two or three;
two sequences of the same length are equal if they contain the same elements in the same order.
The Venn diagram of Fig.~\ref{fig:schema} will come in handy:
every region contains a boxed label that indicates how $\rv{\theta}(u)$ is computed for variables $u$ in that region.
For example, assume $u$ belongs to the region with label $\skolem{x}{y}$ 
(i.e., $u\in\keyclsup{F}{q}\setminus\bigformula{\keycl{F}{q}\cup\keycl{G}{q}}$),
then $\rv{\theta}(u)=\skolem{\theta(x)}{\theta(y)}$.

We show three sublemmas that will be used later on in the proof.

\begin{sublemma}\label{prop:H}
Let $\theta_{1},\theta_{2}$ be two valuations over $\{x,y,z\}$.
If $H\in q$ such that $F\neq H\neq G$,
then $\{\rv{\theta_{1}}(H),\rv{\theta_{2}}(H)\}$ is consistent.
\end{sublemma}
\begin{subproof}[Sublemma~\ref{prop:H}]
Let $H\in q$ such that $F\neq H\neq G$.
Assume the following.
\begin{equation}\label{eq:assumekeyeq}
\mbox{For every $u\in\key{H}$}, \rv{\theta_{1}}(u)=\rv{\theta_{2}}(u).
\end{equation}
It suffices to show the following.
\begin{equation}\label{eq:showeq}
\mbox{For every $u\in\all{H}$}, \rv{\theta_{1}}(u)=\rv{\theta_{2}}(u).
\end{equation}
We consider four cases.

\myparagraph{Case $\theta_{1}(x)=\theta_{2}(x)$ and $\theta_{1}(y)=\theta_{2}(y)$.}
If $\theta_{1}(z)=\theta_{2}(z)$, then $\theta_{1}=\theta_{2}$, 
and~(\ref{eq:showeq}) holds vacuously.
Assume next $\theta_{1}(z)\neq\theta_{2}(z)$.
Then it follows from~(\ref{eq:assumekeyeq}) that no variable of $\key{H}$ belongs to a region of the Venn diagram (see Fig.~\ref{fig:schema}) that contains $z$.
Since $z$ occurs in all regions outside $\keyclsup{F}{q}$, 
we conclude $\key{H}\subseteq\keyclsup{F}{q}$.
Since $\FD{q}$ contains $\fd{\key{H}}{\all{H}}$, it follows $\all{H}\subseteq\keyclsup{F}{q}$.
Since $z$ does not occur inside $\keyclsup{F}{q}$ in the Venn diagram, we conclude~(\ref{eq:showeq}).

\myparagraph{Case $\theta_{1}(x)=\theta_{2}(x)$ and $\theta_{1}(y)\neq\theta_{2}(y)$.}
By~(\ref{eq:assumekeyeq}), no variable of $\key{H}$ belongs to a region of the Venn diagram that contains $y$.
It follows $\key{H}\subseteq\keycl{F}{q}$.
Consequently,  $\all{H}\subseteq\keycl{F}{q}$.
Since neither $y$ nor $z$ occurs inside $\keycl{F}{q}$ in the Venn diagram, we conclude~(\ref{eq:showeq}).

\myparagraph{Case $\theta_{1}(x)\neq\theta_{2}(x)$ and $\theta_{1}(y)=\theta_{2}(y)$.}
First assume $\theta_{1}(z)=\theta_{2}(z)$.
By~(\ref{eq:assumekeyeq}), no variable of $\key{H}$ belongs to a region of the Venn diagram that contains $x$.
Consequently, $\key{H}\subseteq\keycl{G}{q}$.
It follows $\all{H}\subseteq\keycl{G}{q}$.
Since $x$ does not occur inside $\keycl{G}{q}$ in the Venn diagram,
we conclude~(\ref{eq:showeq}).

Next assume $\theta_{1}(z)\neq\theta_{2}(z)$.
By~(\ref{eq:assumekeyeq}), no variable of $\key{H}$ belongs to a region of the Venn diagram that contains $x$ or $z$.
Consequently, $\key{H}\subseteq\keyclsup{F}{q}\cap\keycl{G}{q}$.
It follows $\all{H}\subseteq\keyclsup{F}{q}\cap\keycl{G}{q}$.
Since neither $x$ nor $z$ occurs inside $\keyclsup{F}{q}\cap\keycl{G}{q}$ in the Venn diagram, we conclude~(\ref{eq:showeq}).

\myparagraph{Case $\theta_{1}(x)\neq\theta_{2}(x)$ and $\theta_{1}(y)\neq\theta_{2}(y)$.}
By~(\ref{eq:assumekeyeq}), no variable of $\key{H}$ belongs to a region of the Venn diagram that contains $x$ or $y$.
Consequently, $\key{H}\subseteq\keycl{F}{q}\cap\keycl{G}{q}$.
It follows $\all{H}\subseteq\keycl{F}{q}\cap\keycl{G}{q}$.
Since none of $x$, $y$, or $z$ occurs inside $\keycl{F}{q}\cap\keycl{G}{q}$ in the Venn diagram,
we conclude~(\ref{eq:showeq}).
This concludes the proof of Sublemma~\ref{prop:H}.
\end{subproof}

\begin{sublemma}\label{prop:F}
Let $\theta_{1},\theta_{2}$ be two valuations over $\{x,y,z\}$.
\begin{enumerate}
\item
$\rv{\theta_{1}}(F)$ and $\rv{\theta_{2}}(F)$ are key-equal
$\iff$
$\theta_{1}(x)=\theta_{2}(x)$.
\item
$\rv{\theta_{1}}(F)=\rv{\theta_{2}}(F)$
$\iff$
$\theta_{1}(x)=\theta_{2}(x)$ and $\theta_{1}(y)=\theta_{2}(y)$.
\end{enumerate}
\end{sublemma}

\begin{sublemma}\label{prop:G}
Let $\theta_{1},\theta_{2}$ be two valuations over $\{x,y,z\}$.
\begin{enumerate}
\item
$\rv{\theta_{1}}(G)$ and $\rv{\theta_{2}}(G)$ are key-equal
$\iff$
$\theta_{1}(y)=\theta_{2}(y)$ and $\theta_{1}(z)=\theta_{2}(z)$.
\item
$\rv{\theta_{1}}(G)=\rv{\theta_{2}}(G)$
$\iff$
$\theta_{1}=\theta_{2}$.
\end{enumerate}
\end{sublemma}

We continue the proof of Theorem~\ref{the:strongcycle}.
Let $q_{0}=\{R_{0}(\underline{x},y)$, $S_{0}(\underline{y,z},x)\}$.
The signatures of $R_{0}$ and $S_{0}$ are $\signature{2}{1}$ and $\signature{3}{2}$ respectively.
Let $F_{0}=R_{0}(\underline{x},y)$ and $G_{0}=S_{0}(\underline{y,z},x)$.
In the remainder of the proof,
we establish a polynomial-time many-one reduction from $\cqak{q_{0}}$ to $\cqak{q}$.
\coNP-hardness of $\cqak{q}$ then follows from \coNP-hardness of $\cqak{q_{0}}$, which was established in~\cite{DBLP:journals/ipl/KolaitisP12}.

Let $\db_{0}$ be an uncertain database.
By Lemma~\ref{lem:purified}, we can assume that $\db_{0}$ is purified relative to $q_{0}$.
Let $\calV$ be the set of valuations $\theta$ over $\{x,y,z\}$ such that $\theta(q_{0})\subseteq\db_{0}$. 
Since $\db_{0}$ is purified, the following holds.
\begin{eqnarray*}
\db_{0} & = & \{\theta(F_{0})\mid\theta\in\calV\}\cup\{\theta(G_{0})\mid\theta\in\calV\}
\end{eqnarray*}        
Let $\db=\{\rv{\theta}(H)\mid H\in q, \theta\in\calV\}$.
Since $\calV$ can be computed in polynomial time in the size of $\db_{0}$, the reduction from $\db_{0}$ to $\db$ is in polynomial time.
Since $q$ contains no self-join, the set $\db$ is partitioned by the three disjoint subsets defined next.
\begin{eqnarray*}
\db_{F} & = & \{\rv{\theta}(F)\mid\theta\in\calV\}\\
\db_{G} & = & \{\rv{\theta}(G)\mid\theta\in\calV\}\\ 
\db_{\rest} & = & \{\rv{\theta}(H)\mid H\in q, F\neq H\neq G, \theta\in\calV\}
\end{eqnarray*}
Since $\db_{\rest}$ is consistent by Sublemma~\ref{prop:H}, every repair of $\db$ is the disjoint union of $\db_{\rest}$, a repair of $\db_{F}$, and a repair of $\db_{G}$.
In the next step of the proof, we establish a one-to-one relationship between repairs of $\db_{0}$ and repairs of $\db$.

The function $\transformsymbol$ will map repairs of $\db_{0}$ to repairs of $\db$. 
For every repair $\rep_{0}$ of $\db_{0}$, $\transform{\rep_{0}}$ is the disjoint union of three sets, as follows. 
$$
\begin{array}{rccl}
\transform{\rep_{0}} 
                     & = &      & \{\rv{\theta}(F)\mid\theta(F_{0})\in\rep_{0},\theta\in\calV\}\\
                     &   & \cup & \{\rv{\theta}(G)\mid\theta(G_{0})\in\rep_{0},\theta\in\calV\}\\
                     &   & \cup & \db_{\rest}
\end{array}
$$      
Clearly, the first of these three sets is contained in $\db_{F}$, and the second in $\db_{G}$.               
By Sublemmas~\ref{prop:F} and~\ref{prop:G}, for every $\theta\in\calV$,
\begin{eqnarray}
\theta(F_{0})\in\rep_{0} & \iff & \rv{\theta}(F)\in\transform{\rep_{0}}\label{eq:F}\\
\theta(G_{0})\in\rep_{0} & \iff & \rv{\theta}(G)\in\transform{\rep_{0}}\label{eq:G}
\end{eqnarray}
To prove the $\impliedby$-direction of (\ref{eq:F}) (the other implications are straightforward), 
assume $A\in\transform{\rep_{0}}$ with $A=\rv{\theta}(F)$.
By the definition of $\transformsymbol$, we can assume $\theta'\in\calV$ such that $\theta'(F_{0})\in\rep_{0}$ and $\rv{\theta'}(F)=A$.
From $\rv{\theta}(F)=\rv{\theta'}(F)$, it follows by Sublemma~\ref{prop:F} that $\theta(F_{0})=\theta'(F_{0})$, hence $\theta(F_{0})\in\rep_{0}$.

The following sublemma states that $\transformsymbol$ is a bijection from the set of repairs of $\db_{0}$ to the set of repairs of $\db$. 
\begin{sublemma}\label{prop:bijection}
\begin{enumerate}
\item
If $\rep_{0}$ is a repair of $\db_{0}$,
then $\transform{\rep_{0}}$ is a repair of $\db$.
\item
For every repair $\rep$ of $\db$,
there exists a repair $\rep_{0}$ of $\db_{0}$ such that $\rep=\transform{\rep_{0}}$.
\item
If $\rep_{0},\rep_{0}'$ are distinct repairs of $\db_{0}$,
then $\transform{\rep_{0}}\neq\transform{\rep_{0}'}$.
\end{enumerate}
\end{sublemma}

To conclude the proof of Theorem~\ref{the:strongcycle}, we show:
$$
\db_{0}\in\cqak{q_{0}}\iff\db\in\cqak{q}.
$$
By Sublemma~\ref{prop:bijection}, it is sufficient to prove that for every repair $\rep_{0}$ of $\db_{0}$,
$$
\rep_{0}\models q_{0}
\iff
\transform{\rep_{0}}\models q.
$$
\framebox{$\implies$}
Assume $\rep_{0}\models q_{0}$.
We can assume $\theta\in\calV$ such that $\theta(q_{0})\subseteq\rep_{0}$.
Obviously, $\rv{\theta}(q)\subseteq\transform{\rep_{0}}$.

\framebox{$\impliedby$}
Assume $\transform{\rep_{0}}\models q$.
We can assume a valuation $\mu$ over $\queryvars{q}$ such that $\mu(q)\subseteq\transform{\rep_{0}}$.

Let $\tau$ be a join tree for $q$.
Let $H_{0}\step{L_{1}}H_{1}\dots\step{L_{\ell}}H_{\ell}$ be the unique path in $\tau$ between $F$ and $G$,
where $H_{0}=F$ and $H_{\ell}=G$.
For $i\in\{0,\dots,\ell\}$, we can assume $\theta_{i}\in\calV$ such that $\mu(H_{i})=\rv{\theta_{i}}(H_{i})\in\transform{\rep_{0}}$.
Let $i\in\{0,\dots,\ell-1\}$.
We show $\theta_{i}(x)=\theta_{i+1}(x)$ and $\theta_{i}(y)=\theta_{i+1}(y)$.
Since $F\attacksymbolq{q}G\attacksymbolq{q}F$,
the label $L_{i}$ contains a variable $u_{i}$ such that $u_{i}\not\in\keycl{F}{q}$ and a variable $w_{i}$ such that $w_{i}\not\in\keycl{G}{q}$ (possibly $u_{i}=w_{i}$).

Since $u_{i}\in\all{H_{i}}\cap\all{H_{i+1}}$, it must be the case that $\rv{\theta_{i}}(u_{i})=\mu(u_{i})=\rv{\theta_{i+1}}(u_{i})$.
Since $y$ occurs in every region outside $\keycl{F}{q}$ in the Venn diagram (Fig.~\ref{fig:schema}) and $u_{i}\not\in\keycl{F}{q}$,
it is correct to conclude $\theta_{i}(y)=\theta_{i+1}(y)$.

Likewise, since $w_{i}\in\all{H_{i}}\cap\all{H_{i+1}}$, it must be the case that $\rv{\theta_{i}}(w_{i})=\mu(w_{i})=\rv{\theta_{i+1}}(w_{i})$.
Since $x$ occurs in every region outside $\keycl{G}{q}$ in the Venn diagram and $w_{i}\not\in\keycl{G}{q}$,
it is correct to conclude $\theta_{i}(x)=\theta_{i+1}(x)$.

Consequently, $\theta_{0}(x)=\theta_{\ell}(x)$ and $\theta_{0}(y)=\theta_{\ell}(y)$.
From $\rv{\theta_{0}}(H_{0}),\rv{\theta_{\ell}}(H_{\ell})\in\transform{\rep_{0}}$, $H_{0}=F$, and $H_{\ell}=G$,
it follows $\theta_{0}(F_{0}),\theta_{\ell}(G_{0})\in\rep_{0}$ by~(\ref{eq:F}) and~(\ref{eq:G}). 
Since $\theta_{0}$ and $\theta_{\ell}$ agree on each variable in $\all{F_{0}}\cap\all{G_{0}}=\{x,y\}$,
it follows $\rep_{0}\models q_{0}$.
This concludes the proof of Theorem~\ref{the:strongcycle}.
\end{proof}

\section{Tractability}\label{sec:tractability}

We conjecture that if the attack graph of an acyclic query $q$ contains no strong cycle,
then $\cqak{q}$ is in \PTIME.

\begin{conjecture}\label{con:weak}
Let $q$ be an acyclic Boolean conjunctive query without self-join.
If all cycles in the attack graph of $q$ are weak,
then $\cqak{q}$ is in \PTIME.
\end{conjecture}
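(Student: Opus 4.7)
The plan is to proceed by strong induction on the number of atoms of $q$. The base case $|q|=1$ is trivial. For the inductive step, if the attack graph of $q$ is acyclic then $\cqak{q}$ is in $\PTIME$ by Theorem~\ref{the:acyclic}, so assume the attack graph contains at least one cycle, each being weak by hypothesis. The key structural fact about a weak attack $F \attacksymbolq{q} G$ is that $\key{G} \subseteq \keyclsup{F}{q}$, so the primary key of $G$ is functionally determined by the primary key of $F$ using all of $\FD{q}$. Hence in any weak cycle $F_{0} \attacksymbolq{q} F_{1} \attacksymbolq{q} \dots \attacksymbolq{q} F_{n-1} \attacksymbolq{q} F_{0}$, fixing a valuation on $\key{F_{0}}$ forces a unique candidate valuation on $\key{F_{i}}$ for every other atom of the cycle; this is the engine that should drive tractability.

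Building on this, I would aim for a polynomial-time elimination step that picks an atom $F \in q$ whose removal yields $q' = q \setminus \{F\}$ still satisfying the all-weak-cycles hypothesis, so that the induction hypothesis applies to $q'$. A natural candidate for $F$ is an atom of a weak cycle that is \emph{outermost} in a topological preorder on the strongly connected components of the attack graph, chosen so that no new strong attacks or new cycles appear in $q'$. For each block $\block$ of $F$-facts in $\db$ and each $A \in \block$, the reduction would construct a conditioned instance $\db_{A}$ committing to $A$ as the chosen representative of $\block$, invoke the inductive algorithm for $\cqak{q'}$, and aggregate the answers by a suitable Boolean combination over blocks. The weakness of every cycle is used precisely to argue that independent block-by-block choices on $F$ do not generate hidden inconsistencies elsewhere: the functional propagation of keys around any cycle through $F$ pins down the relevant external witnesses uniquely.

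The hard part will be handling a nonterminal weak cycle, where some atom of the cycle attacks an atom outside. Theorem~\ref{the:outdegree} already handles the terminal case, essentially because a terminal cycle can be processed independently of the rest. In the nonterminal case, an internal choice inside the cycle constrains external atoms and vice versa, so neither ``process the cycle first'' nor ``process the outside first'' is immediately correct. The most promising avenue is to exploit Lemma~\ref{lem:A2} to show that atoms outside a nonterminal weak cycle that are attacked from within must themselves lie in enlarged weak cycles, so the strongly connected components of the attack graph inherit a rich backward structure one can recurse on. An alternative, inspired by Kolaitis and Pema, is to reduce the problem to matching in a tractable class of graphs built from the blocks of $\db$ restricted to the cycle atoms. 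I would also expect the induction to require a strengthened invariant beyond ``all cycles weak'' --- perhaps a closure condition preserved by the elimination step --- so that the recursion remains inside the tractable regime at every level.
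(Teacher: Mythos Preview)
This statement is a \emph{conjecture} in the paper, not a theorem: the paper explicitly does not prove it. What the paper establishes is the special case where all weak cycles are terminal (Theorem~\ref{the:outdegree}) and the isolated family $\acq{k}$ of queries with nonterminal weak cycles (Theorem~\ref{the:acq}); the general nonterminal case is left open. So there is no ``paper's own proof'' to compare against, and your proposal should be read as an attack on an open problem rather than as a reconstruction.

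As such, the proposal is not a proof but a plan, and its central step is a genuine gap. The inductive elimination hinges on two unproved claims: (i) that one can always choose $F$ so that $q\setminus\{F\}$ still has only weak cycles (removing an atom changes $\FD{q}$ and all the sets $\keycl{H}{q}$, so new attacks can appear and formerly weak attacks can become strong; Lemma~\ref{lem:weakweak} only covers substituting a constant for a variable, not deleting an atom), and (ii) that the block-by-block aggregation over $F$-facts is correct. For (ii) you invoke ``functional propagation of keys around the cycle,'' but weakness gives $\key{G}\subseteq\keyclsup{F}{q}$ using \emph{all} of $\FD{q}$, including the functional dependency contributed by $F$ itself; once $F$ is removed this determination can fail, so fixing a representative in an $F$-block need not pin down the keys of the other cycle atoms in the residual instance. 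Your own discussion of the nonterminal case concedes this: it lists ``promising avenues'' and ``alternatives'' and anticipates a ``strengthened invariant,'' which is to say the argument is not there yet.

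In short, you have correctly located where the difficulty lies---exactly the nonterminal weak cycles that the paper also isolates as open---but the proposal does not close the gap, and the paper does not either.
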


Notice that by Theorem~\ref{the:acyclic}, we know that Conjecture~\ref{con:weak} holds in the special case where $q$'s attack graph contains no cycle at all. 
Theorem~\ref{the:strongcycle} and Conjecture~\ref{con:weak} together imply that for every acyclic query $q$, $\cqak{q}$ is either in \PTIME\ or \coNP-complete.
In the following section, a somewhat weaker version of Conjecture~\ref{con:weak} is proved.

\subsection{All Cycles are Weak and Terminal}

\begin{figure}\centering
\includegraphics{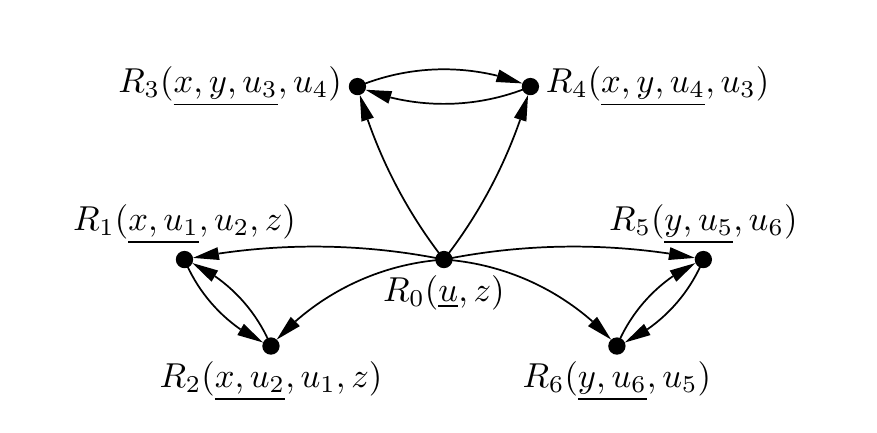}
\caption{Attack graph. All cycles are weak and terminal.}\label{fig:threecycles}
\end{figure}

We show a weaker version of Conjecture~\ref{con:weak}.
In this weaker version, the premise ``all cycles are weak" is strengthened into ``all cycles are weak and terminal."

\begin{definition}
A cycle in a directed graph is called {\em terminal\/} if the graph contains no directed edge from a vertex in the cycle to a vertex outside the cycle.
A cycle is {\em nonterminal\/} if it is not terminal.
\end{definition}

\begin{example}
Figure~\ref{fig:threecycles} shows the attack graph of the acyclic query
$\{R_{1}(\underline{x,u_{1}},u_{2},z)$, 
$R_{2}(\underline{x,u_{2}},u_{1},z)$,
$R_{3}(\underline{x,y,u_{3}},u_{4})$, 
$R_{4}(\underline{x,y,u_{4}},u_{3})$,
$R_{5}(\underline{y,u_{5}},u_{6})$, 
$R_{6}(\underline{y,u_{6}},u_{5})\}$.
All attack cycles are terminal and weak.
\end{example}

\begin{example}
In the attack graph of Fig.~\ref{fig:acqthree}, all cycles are weak, but no cycle is terminal.
\end{example}

\begin{theorem}\label{the:outdegree}
Let $q$ be an acyclic Boolean conjunctive query without self-join.
If all cycles in the attack graph of $q$ are weak and terminal,
then $\cqak{q}$ is in \PTIME.
\end{theorem}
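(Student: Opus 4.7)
The plan is induction on $|q|$, using Theorem~\ref{the:acyclic} as the base case and two reductions: peeling an unattacked atom when one exists, and otherwise treating the attack graph as a disjoint union of terminal strongly connected components.

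If the attack graph of $q$ is acyclic, Theorem~\ref{the:acyclic} gives a first-order rewriting, so $\cqak{q}\in\PTIME$. Otherwise I first check whether some atom $F\in q$ has no incoming attack. If such $F$ exists, I apply a peeling step along $F$: for every uncertain database $\db$, reduce $\db\in\cqak{q}$ to a polynomial number of instances of $\cqak{q'}$ for strictly smaller $q'$. The reduction instantiates variables shared by $F$ with the rest of $q$ using the facts of each relevant block of $\db$ and then removes $F$, in the style of the recursive rewritings used in prior work on first-order expressibility. The invariant to verify is that each $q'$ produced by this reduction still satisfies the hypothesis of the theorem. Deleting a vertex cannot create new edges among the surviving atoms; by Lemma~\ref{lem:A2}, it cannot turn a weak attack into a strong one; and variable substitutions can only enlarge $\keyclsup{\cdot}{\cdot}$, so weakness of every remaining cycle is preserved, and sink SCCs remain sinks. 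Induction on $|q'|<|q|$ then closes this case.

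Suppose instead that no atom of $q$ is unattacked. I claim that the attack graph is then a disjoint union of non-trivial terminal SCCs with all cycles weak. Indeed, take any atom $F$: since $F$ has an in-attacker and the attack graph is finite, tracing predecessors we must reach a cycle $C$, which by hypothesis is terminal. If $F$ were outside $C$, the trace would provide an edge leaving $C$, contradicting terminality; hence $F$ lies on $C$ and therefore in a non-trivial SCC. The same terminality argument shows that distinct non-trivial SCCs share no edges. To finish the proof, I would give, for each such terminal weak SCC $S$, a polynomial-time algorithm for the restricted problem: by weakness, every attack $F\attacksymbolq{q}G$ inside $S$ satisfies $\key{G}\subseteq\keyclsup{F}{q}$, which, iterated around cycles that cover $S$, implies that the primary keys of all atoms in $S$ are functionally equivalent modulo $\FD{q}$. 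This tight propagation constraint should let the block-choice problem inside $S$ be cast as a polynomial-time combinatorial task (for instance 2-SAT or a matching/independent-set problem in the spirit of~\cite{DBLP:journals/ipl/KolaitisP12} and Minty~\cite{DBLP:journals/jct/Minty80}). Solutions for different SCCs are assembled independently thanks to their pairwise non-attack.

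The hard part is exactly this last step: producing a \PTIME\ decision procedure for one terminal weak SCC of arbitrary size. The two-atom case is handled by Kolaitis and Pema~\cite{DBLP:journals/ipl/KolaitisP12} via Minty's algorithm, and the authors of the present paper explicitly note that no direct lift to more atoms is known. Terminality will be essential because it insulates the SCC from the remainder of $q$, while weakness is what functionally collapses all primary keys inside the SCC and gives the algorithmic handle. A secondary, more bookkeeping-style obstacle is the verification that the peeling step preserves the hypothesis "all cycles weak and terminal" across the variable instantiations induced by the chosen block of $F$; this relies on careful use of the definitions of $\keycl{\cdot}{\cdot}$ and $\keyclsup{\cdot}{\cdot}$ together with Lemma~\ref{lem:A2}.
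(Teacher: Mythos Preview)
Your overall skeleton---induction on $|q|$, peel an unattacked atom when one exists, and otherwise observe that every atom lies in some terminal weak SCC---matches the paper's proof. But you have a genuine gap in the ``hard part,'' and it is precisely where you think the difficulty lies.

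You write that what remains is ``producing a \PTIME\ decision procedure for one terminal weak SCC of arbitrary size.'' In fact no such SCC has more than two atoms. The hypothesis is not merely that the SCCs are terminal; it is that \emph{every cycle} is terminal. If three atoms $F,G,H$ sat on a common cycle with $\attacksq{F}{G}{q}$ and $\attacksq{G}{H}{q}$, then by Lemma~\ref{lem:A2} either $\attacksq{F}{H}{q}$ or $\attacksq{G}{F}{q}$; either way a $2$-cycle appears that has an outgoing edge to the third atom, contradicting terminality. This is exactly Lemma~\ref{lem:terminalcycles} in the paper. So when no atom is unattacked, the attack graph is a disjoint union of weak $2$-cycles, and the Kolaitis--Pema two-atom algorithm applies to each one directly. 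Your proposal leaves this step open and speculates about 2-SAT or Minty-style arguments for larger components; none of that is needed.

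Two smaller points. First, ``solutions for different SCCs are assembled independently thanks to their pairwise non-attack'' is too quick: distinct $2$-cycles can share variables. The paper handles this via Lemma~\ref{lem:sharedvariables} (a variable shared by two cycles must sit in the key of every atom of both cycles, and a weak attack $\attacksq{F}{G}{q}$ forces $\key{G}\subseteq\all{F}$), which is what makes the partition-by-shared-key-values argument (Sublemma~\ref{prop:shared}) go through. Second, your invocation of Lemma~\ref{lem:A2} to justify that peeling preserves the hypothesis is the wrong tool: Lemma~\ref{lem:A2} is a transitivity statement about attacks, not a preservation statement under substitution or atom removal. The paper uses Lemma~\ref{lem:weakweak} for this, showing that substituting a constant for a variable neither creates attacks nor turns weak attacks into strong ones.
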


Notice that if a query $q$ has exactly two atoms, then $q$ is acyclic and every cycle in $q$'s attack graph must be terminal.
Therefore Theorems~\ref{the:strongcycle} and~\ref{the:outdegree} together imply the dichotomy theorem of Kolaitis and Pema~\cite{DBLP:journals/ipl/KolaitisP12}.

To prove Theorem~\ref{the:outdegree}, we need four helping lemmas.
In simple words, the first lemma states that if we replace a variable with a constant in an acyclic query,
then no new attacks are generated, and weak attacks cannot become strong.

\begin{definition}
Let $q$ be a Boolean conjunctive query.
If $\vec{x}=\tuple{x_{1},\dots,x_{\ell}}$ is a sequence of distinct variables
and $\vec{a}=\tuple{a_{1},\dots,a_{\ell}}$ a sequence of constants,
then $\substitute{q}{\vec{x}}{\vec{a}}$ denotes the query obtained from $q$ by replacing each occurrence of $x_{i}$ with $a_{i}$, for all $i\in\{1,\dots,\ell\}$.
If $\theta$ is a valuation, then $\substitute{\theta}{\vec{x}}{\vec{a}}$ is the valuation such that $\substitute{\theta}{\vec{x}}{\vec{a}}(\vec{x})=\vec{a}$ and $\substitute{\theta}{\vec{x}}{\vec{a}}(y)=\theta(y)$ if $y\not\in\sequencevars{\vec{x}}$.
\end{definition}

\begin{lemma}\label{lem:weakweak}
Let $q$ be an acyclic Boolean conjunctive query without self-join.
Let $F,G\in q$.
Let $z\in\queryvars{q}$ and let $c$ be a constant.
Let $q'=\substitute{q}{z}{c}$, $F'=\substitute{F}{z}{c}$, and $G'=\substitute{G}{z}{c}$.
Then, the following hold.
\begin{enumerate}
\item
$q'$ is acyclic.
\item
If $\attacks{F'}{G'}{q'}$, then $\attacks{F}{G}{q}$.
\item
If  $\attacks{F'}{G'}{q'}$ and $\attacks{F}{G}{q}$ is a weak attack, then $\attacks{F'}{G'}{q'}$ is a weak attack.
\end{enumerate}
\end{lemma}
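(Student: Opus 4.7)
For item~1, I would show that if $\tau$ is a join tree for $q$, then the graph $\tau'$ obtained by replacing each vertex $H$ with $H' := \substitute{H}{z}{c}$ is a join tree for $q'$. Since $q$ has no self-join, distinct atoms of $q$ carry distinct relation names; substitution preserves this, so $H \mapsto H'$ is a bijection between the atoms of $q$ and those of $q'$. Every edge label transforms as $\all{H_1'} \cap \all{H_2'} = (\all{H_1} \cap \all{H_2}) \setminus \{z\}$. The Connectedness Condition for $\tau'$ follows from the one for $\tau$: any variable $x \in \queryvars{q'}$ satisfies $x \neq z$ and appears in an atom of $q'$ exactly when it appears in the corresponding atom of $q$, so the subtree of occurrences of $x$ in $\tau$ transfers verbatim to $\tau'$.

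The technical core of the proof is the following monotonicity lemma:
\[
\keycl{F}{q} \setminus \{z\} \;\subseteq\; \keycl{F'}{q'}
\quad\text{and}\quad
\keyclsup{F}{q} \setminus \{z\} \;\subseteq\; \keyclsup{F'}{q'}.
\]
To prove it, I would rely on the identity $\FD{q' \setminus \{F'\}} = \{\fd{\key{H}\setminus\{z\}}{\all{H}\setminus\{z\}} \mid H \in q \setminus \{F\}\}$, and the analogous identity for $\FD{q'}$: the FDs on the primed side are obtained from those on the unprimed side simply by deleting $z$ from both sides. Given any chase $\key{F} = X_0 \subseteq X_1 \subseteq \cdots \subseteq X_n$ computing $\keycl{F}{q}$ under $\FD{q \setminus \{F\}}$, where step $i$ applies an FD $\fd{K_i}{V_i}$ with $K_i \subseteq X_i$ and sets $X_{i+1} = X_i \cup V_i$, the sequence $X_0 \setminus \{z\}, \ldots, X_n \setminus \{z\}$ is a valid chase under $\FD{q' \setminus \{F'\}}$ starting from $\key{F'} = \key{F} \setminus \{z\}$, because $K_i \setminus \{z\} \subseteq X_i \setminus \{z\}$ and the step now adds $V_i \setminus \{z\}$. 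Hence every variable other than $z$ derived in the first chase is derived in the second.

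Items~2 and~3 are then quick consequences. For item~2, I argue contrapositively: if $\nattacksq{F}{G}{q}$, then (assuming $F \neq G$, otherwise the statement is vacuous) some label $L$ on the unique $F$-to-$G$ path in $\tau$ satisfies $L \subseteq \keycl{F}{q}$; the corresponding label on the $F'$-to-$G'$ path in $\tau'$ is $L \setminus \{z\}$, which lies in $\keycl{F'}{q'}$ by the monotonicity lemma, yielding $\nattacksq{F'}{G'}{q'}$. For item~3, $\attacksq{F}{G}{q}$ being weak means $\key{G} \subseteq \keyclsup{F}{q}$, so $\key{G'} = \key{G} \setminus \{z\} \subseteq \keyclsup{F}{q} \setminus \{z\} \subseteq \keyclsup{F'}{q'}$, which is exactly the condition for the given attack $\attacksq{F'}{G'}{q'}$ to be weak.

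The only nontrivial step is the monotonicity lemma; the potential subtlety there is that the hypothesis $z \in \key{F}$ shrinks the left-hand side of the root FD (so $\key{F'}$ is strictly smaller than $\key{F}$), but the ``pushforward of the chase'' argument handles this uniformly because every FD premise $K_i \subseteq X_i$ survives the deletion of $z$. Once this identity between the FD sets is noticed, the rest of the proof is essentially bookkeeping on edge labels and key sets.
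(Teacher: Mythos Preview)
Your proposal is correct and follows essentially the same approach as the paper: both build $\tau'$ from $\tau$ by removing $z$ from labels, prove the same ``monotonicity'' fact (the paper phrases it as $\FD{Q}\models\fd{X}{Y}$ implies $\FD{Q'}\models\fd{X\setminus\{z\}}{Y\setminus\{z\}}$) via the same pushforward-of-the-chase argument, and then derive items~2 and~3 by the same contrapositive and the same direct inclusion. One minor wording slip: $z\in\key{F}$ is not a hypothesis of the lemma but merely a case you are considering; your argument already handles it uniformly, so nothing is missing.
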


\begin{lemma}\label{lem:terminalcycles}
Let $q$ be an acyclic Boolean conjunctive query.
If each cycle in the attack graph of $q$ is terminal, then each cycle in the attack graph has length $2$. 
\end{lemma}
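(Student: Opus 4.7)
The plan is to argue by contradiction: assume the attack graph of $q$ contains an elementary cycle
$F_{0}\attacksymbolq{q}F_{1}\attacksymbolq{q}\dots\attacksymbolq{q}F_{n-1}\attacksymbolq{q}F_{0}$
of length $n\geq 3$, and produce a nonterminal $2$-cycle, contradicting the hypothesis that every cycle is terminal. The workhorse is Lemma~\ref{lem:A2}: whenever $\attacksq{X}{Y}{q}$ and $\attacksq{Y}{Z}{q}$ hold for distinct $X,Y,Z$, we get either $\attacksq{X}{Z}{q}$ or $\attacksq{Y}{X}{q}$.

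The key step is to establish by induction on $k\in\{1,\dots,n-1\}$ that $\attacksq{F_{0}}{F_{k}}{q}$. The base case $k=1$ is part of the cycle. For the inductive step, I would apply Lemma~\ref{lem:A2} to $\attacksq{F_{0}}{F_{k}}{q}$ and $\attacksq{F_{k}}{F_{k+1}}{q}$ (all three atoms distinct, since $F_{0},\dots,F_{n-1}$ are distinct and $1\leq k<n-1$). Either the conclusion $\attacksq{F_{0}}{F_{k+1}}{q}$ holds directly, or we obtain $\attacksq{F_{k}}{F_{0}}{q}$, producing a $2$-cycle on $\{F_{0},F_{k}\}$. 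In the latter case, I need to exhibit an edge from $\{F_{0},F_{k}\}$ to a vertex outside: if $k\neq 1$, the cycle edge $\attacksq{F_{0}}{F_{1}}{q}$ works, since $F_{1}\notin\{F_{0},F_{k}\}$; if $k=1$, the edge $\attacksq{F_{1}}{F_{2}}{q}$ works, since $F_{2}\notin\{F_{0},F_{1}\}$ (here $n\geq 3$ is used). Either way, the $2$-cycle $\{F_{0},F_{k}\}$ is nonterminal, contradicting the hypothesis, so the inductive step must deliver $\attacksq{F_{0}}{F_{k+1}}{q}$.

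Taking $k=n-1$ in the induction gives $\attacksq{F_{0}}{F_{n-1}}{q}$, which together with the cycle edge $\attacksq{F_{n-1}}{F_{0}}{q}$ yields a $2$-cycle on $\{F_{0},F_{n-1}\}$. Since $n\geq 3$, we have $F_{1}\notin\{F_{0},F_{n-1}\}$, and the edge $\attacksq{F_{0}}{F_{1}}{q}$ leaves this $2$-cycle, contradicting terminality and concluding the proof.

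I do not expect real obstacles: the only delicate point is the index bookkeeping in the inductive step (in particular, splitting off the subcase $k=1$ to make sure the witnessing outgoing edge really escapes $\{F_{0},F_{k}\}$), and this relies essentially on $n\geq 3$. Everything else is a direct application of Lemma~\ref{lem:A2} and the definition of terminal cycle.
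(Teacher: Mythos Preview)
Your proof is correct and uses the same key tool as the paper (Lemma~\ref{lem:A2} together with a contradiction argument), but the paper's argument is more direct and avoids the induction entirely. The paper simply picks three consecutive atoms $F,G,H$ on a cycle of length $\geq 3$ (so $\attacksq{F}{G}{q}$ and $\attacksq{G}{H}{q}$) and applies Lemma~\ref{lem:A2} \emph{once}: if $\attacksq{G}{F}{q}$, then $\{F,G\}$ is a $2$-cycle with the outgoing edge $\attacksq{G}{H}{q}$, hence nonterminal; if $\attacksq{F}{H}{q}$, then the shortcut cycle $F\attacksymbolq{q}H\attacksymbolq{q}\dots\attacksymbolq{q}F$ (bypassing $G$) is nonterminal because of the edge $\attacksq{F}{G}{q}$. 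Either branch yields a nonterminal cycle, done. The point is that the contradiction does not require producing a nonterminal \emph{$2$-cycle}; any nonterminal cycle suffices, so there is no need to iterate Lemma~\ref{lem:A2} down to length~$2$. Your induction works and the bookkeeping is fine, but it is doing more than necessary.
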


\begin{lemma}\label{lem:sharedvariables}
Let $q$ be an acyclic Boolean conjunctive query such that each cycle of the attack graph of $q$ is terminal and each atom of $q$ belongs to a cycle of the attack graph.
\begin{enumerate}
\item
If the same variable $x$ occurs in two distinct cycles of the attack graph,
then for each atom $F$ in these cycles, $x\in\key{F}$.
\item
If $\attacks{F}{G}{q}$ is a weak attack, then $\key{G}\subseteq\all{F}$. 
\end{enumerate}
\end{lemma}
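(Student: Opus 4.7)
The plan is to first unpack the structural consequences of the hypotheses, then prove Part~1 by contradiction and obtain Part~2 from Part~1.

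First I would extract structure. By Lemma~\ref{lem:terminalcycles}, every cycle of the attack graph has length~$2$. Since every atom belongs to some cycle, any attack crossing between two distinct cycles would violate the terminality of one of them (an outgoing edge violates the source cycle; an incoming edge $\attacksq{K}{H}{q}$ violates the terminality of the cycle containing $K$). So the attack graph is a vertex-disjoint union of weak $2$-cycles. A short argument then shows that the two atoms $F,G$ of any such cycle are adjacent in every join tree $\tau$: otherwise, letting $H_1$ be $F$'s neighbor on the $F$-to-$G$ path, terminality gives $\nattacksq{F}{H_1}{q}$ and hence $L_1\subseteq\keycl{F}{q}$ on the first edge, contradicting $\attacksq{F}{G}{q}$. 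So each cycle corresponds to an edge of $\tau$ carrying label $\all{F}\cap\all{G}$.

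For Part~1 I would argue by contradiction: assume $x$ occurs in two distinct cycles $C_1=\{F_1,G_1\}$ and $C_2=\{F_2,G_2\}$, but $x\notin\key{F}$ for some $F\in C_1\cup C_2$; without loss of generality $F=F_1$. Pick $A\in C_2$ with $x\in\all{A}$. The non-attack $\nattacksq{F_1}{A}{q}$ gives a ``bottleneck'' label on the $F_1$-to-$A$ path contained in $\keycl{F_1}{q}$; a case split on whether this bottleneck is the edge $F_1$--$G_1$ or a later edge (using $\attacksq{F_1}{G_1}{q}$ to rule out the former and connectedness of the $x$-subtree $T_x$ for the latter) forces $x\in\keycl{F_1}{q}$. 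Now inspect a shortest derivation of $\key{F_1}\rightarrow x$ in $\FD{q\setminus\{F_1\}}$ and look at the FD $\key{H}\rightarrow\all{H}$ that first introduces $x$: then $H\in T_x\setminus\{F_1\}$, $\key{H}\subseteq\keycl{F_1}{q}$, and $x\notin\key{H}$. The case $H=G_1$ is killed by Lemma~\ref{lem:rsu} applied to $\attacksq{F_1}{G_1}{q}$, which says $\key{G_1}\nsubseteq\keycl{F_1}{q}$. Hence $H$ lies in a cycle $C_H\neq C_1$, and the final step---the main obstacle---is to convert the package ``$\key{H}\subseteq\keycl{F_1}{q}$, $H$ outside $C_1$, $x\notin\key{F_1}$'' into a genuine attack from $F_1$ or $G_1$ to an atom outside $C_1$, contradicting terminality. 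This is delicate because $x\in\keycl{F_1}{q}$ itself ``covers'' labels that carry only $x$ and blocks the most direct candidate attacks; one must exploit the adjacency of $F_1,G_1$ in $\tau$, iterate the same analysis at $H$ and its successors (applying Lemma~\ref{lem:rsu} in each newly visited 2-cycle to forbid the derivation atom from being the cycle partner), and invoke Lemma~\ref{lem:A2} to chain partial attacks into a path all of whose labels escape both $\keycl{F_1}{q}$ and $\keycl{G_1}{q}$.

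Part~2 then follows almost for free from Part~1. Fix a weak attack $\attacksq{F}{G}{q}$; the structural setup makes $\{F,G\}$ a 2-cycle with $F,G$ adjacent in $\tau$. Take $y\in\key{G}$ and suppose, for contradiction, $y\notin\all{F}$. Weakness gives $y\in\keyclsup{F}{q}$, so fix a derivation of $\key{F}\rightarrow y$ using $\FD{q}$ and look at the FD $\key{H}\rightarrow\all{H}$ that first introduces $y$: then $y\in\all{H}$ and $y\notin\key{H}$. The case $H=F$ is excluded by $y\notin\all{F}$, and $H=G$ would require $\key{G}\ni y$ to lie in the derived set just before $y$ is added, which is absurd. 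Hence $H$ lies in a cycle distinct from $\{F,G\}$, and since $y\in\all{H}\cap\all{G}$, the variable $y$ occurs in two distinct cycles; Part~1 then gives $y\in\key{F}\subseteq\all{F}$, contradicting the assumption. Thus $\key{G}\subseteq\all{F}$.
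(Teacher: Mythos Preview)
Your structural setup (every cycle has length~2, cycle partners are adjacent in any join tree) matches the paper, and your Part~2 is essentially the paper's argument: once Part~1 is available, a variable $y\in\key{G}\setminus\key{F}$ can only occur in $F$ or $G$, and then the derivation $\FD{q}\models\fd{\key{F}}{y}$ forces $y\in\all{F}$.

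The gap is precisely the ``main obstacle'' you flag in Part~1. By choosing $H$ to be the atom whose functional dependency \emph{first introduces $x$}, you only secure $\key{H}\subseteq\keycl{F_1}{q}$. That is too weak: it gives you no control over the edge labels on the $F_1$--$H$ path in the join tree, and your plan to manufacture an attack out of $C_1$ by ``iterating at $H$'' and chaining via Lemma~\ref{lem:A2} is not a proof---there is no termination argument, and the very fact that $x\in\keycl{F_1}{q}$ means the labels you want to escape $\keycl{F_1}{q}$ may all be covered.

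The paper avoids this entirely by picking a different $H$: not the atom that introduces $x$, but the \emph{first} atom used in the closure computation of $\keycl{F_1}{q}$, i.e., any $H\neq F_1$ with $\key{H}\subseteq\key{F_1}$ (one exists because $x\in\keycl{F_1}{q}\setminus\key{F_1}$). Lemma~\ref{lem:rsu} still rules out $H=G_1$, so $H$ sits in a different 2-cycle $\{H,H'\}$. Now the stronger inclusion $\key{H}\subseteq\key{F_1}\subseteq\all{F_1}$ lets the Connectedness Condition put $\key{H}$ inside \emph{every} edge label on the $H$--$F_1$ path. Writing that path as $H\step{L_1}I_1\step{L_2}\cdots\step{L_m}F_1$ and splitting on whether $I_1=H'$, terminality of $\{H,H'\}$ (together with $\attacksq{H'}{H}{q}$ when $I_1\neq H'$) forces either $L_2\subseteq\keycl{H'}{q}$ or $L_1\subseteq\keycl{H'}{q}$; in both cases $\key{H}\subseteq\keycl{H'}{q}$, which contradicts $\attacksq{H'}{H}{q}$ via Lemma~\ref{lem:rsu}. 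The contradiction is produced at the cycle $\{H,H'\}$, not at $C_1$, so there is no need to exhibit any outgoing attack from $C_1$.
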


The following lemma applies to queries with an atom whose primary key contains no variables.

\begin{lemma}\label{lem:SE3}
Let $q$ be a Boolean conjunctive query without self-join.
Let $F\in q$ such that $\key{F}=\emptyset$.
Let $q'=q\setminus\{F\}$.
Let $\vec{y}$ be a sequence of distinct variables such that $\sequencevars{\vec{y}}=\all{F}$.
Let $\db$ be an uncertain database that is purified relative to $q$, and let $D$ be the active domain of $\db$. 
Then the following are equivalent:
\begin{enumerate}
\item
$\db\in\cqak{q}$.
\item 
$\db\neq\emptyset$ and for all $\vec{b}\in D^{\card{\vec{y}}}$,
if $\substitute{F}{\vec{y}}{\vec{b}}\in\db$,
then $\db\in\cqak{\substitute{q'}{\vec{y}}{\vec{b}}}$.
\end{enumerate}
\end{lemma}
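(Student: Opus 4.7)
The proof hinges on two structural observations. Since $\key{F}=\emptyset$ and $\db$ is purified relative to $q$, every $R$-fact of $\db$ (where $R$ is the relation name of $F$) must arise as $\theta(F)$ in some match $\theta(q)\subseteq\db$, so it shares the primary-key value of $F$; hence all $R$-facts of $\db$ form a single block, and every repair contains exactly one $R$-fact provided $\db$ contains any. Moreover, because $q$ has no self-join, $R$ does not occur in $q'$, so $R$-facts are irrelevant to the truth of $\substitute{q'}{\vec{y}}{\vec{b}}$ in any repair.

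For $(1)\Rightarrow(2)$, suppose $\db\in\cqak{q}$. Since a repair satisfying $q$ must be nonempty, $\db\neq\emptyset$. Fix $\vec{b}\in D^{\card{\vec{y}}}$ with $A:=\substitute{F}{\vec{y}}{\vec{b}}\in\db$ and an arbitrary repair $\rep$ of $\db$; the goal is to show $\rep\models\substitute{q'}{\vec{y}}{\vec{b}}$. Swap the unique $R$-fact of $\rep$ for $A$ to obtain another repair $\rep'$ of $\db$ (well-defined by the single-block structure). By hypothesis $\rep'\models q$, and the witnessing valuation $\theta$ is forced to map $F$ to the unique $R$-fact $A$ of $\rep'$, giving $\theta(\vec{y})=\vec{b}$. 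The same $\theta$ then witnesses $\rep'\models\substitute{q'}{\vec{y}}{\vec{b}}$, and since that query mentions no $R$-atom and $\rep,\rep'$ agree outside the $R$-block, also $\rep\models\substitute{q'}{\vec{y}}{\vec{b}}$.

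For $(2)\Rightarrow(1)$, suppose condition~2 holds. Because $\db\neq\emptyset$, purification gives some $A_0\in\db$ together with a valuation $\theta_0$ satisfying $\theta_0(q)\subseteq\db$, so $\theta_0(F)\in\db$ is an $R$-fact; hence $\db$ has $R$-facts and every repair contains exactly one such fact. Pick any repair $\rep$ and write its $R$-fact as $A=\substitute{F}{\vec{y}}{\vec{b}}$ with $\vec{b}\in D^{\card{\vec{y}}}$. The hypothesis gives $\rep\models\substitute{q'}{\vec{y}}{\vec{b}}$, witnessed by some $\theta$; extending $\theta$ to $\queryvars{q}$ by setting $\theta(\vec{y}):=\vec{b}$ yields $\theta(q')\subseteq\rep$ and $\theta(F)=A\in\rep$, so $\rep\models q$. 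The argument is essentially bookkeeping with no deep obstacle; the only delicate point is that purification must be invoked in both directions to guarantee that every $R$-fact of $\db$ is key-equal to $F$ and that at least one such fact exists, since the entire ``one $R$-fact per repair'' mechanism rests on this observation.
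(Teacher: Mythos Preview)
Your proof is correct and follows essentially the same approach as the paper's: both use purification to see that all $R$-facts of $\db$ (where $R$ is the relation name of $F$) form a single block, then for $(1)\Rightarrow(2)$ swap the $R$-fact in an arbitrary repair to force the witness valuation to send $\vec{y}$ to $\vec{b}$, and for $(2)\Rightarrow(1)$ read off $\vec{b}$ from the unique $R$-fact of a given repair and apply the hypothesis. Your write-up is somewhat more explicit about the structural observations and about extending the valuation in the converse direction, but there is no substantive difference.
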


The proof of Theorem~\ref{the:outdegree} can now be given.

\myparagraph{}
\begin{proof}[Theorem~\ref{the:outdegree}]
Given uncertain database $\db$, we need to show that it can be decided in polynomial time (in the size of $\db$) whether $\db\in\cqak{q}$.
Let $D$ be the active domain of $\db$.
By Lemma~\ref{lem:purified}, we can assume that $\db$ is purified relative to $q$.

The proof runs by induction on the length of $q$.
For the base of the induction,
we consider the case where the attack graph of $q$ contains no unattacked atom (i.e., no atom  has zero indegree).
$\cqak{q}$ is obviously in \PTIME\ if $q=\{\}$.
Assume next that $q$ is nonempty.

Since all cycles of $q$'s attack graph are terminal and every atom has an incoming attack,
every atom of $q$ belongs to some cycle of the attack graph.
By Lemma~\ref{lem:terminalcycles}, the attack graph of $q$ is a set of disjoint weak cycles
$F_{1}\attacksymbolq{q}G_{1}\attacksymbolq{q}F_{1}$, \dots, $F_{\ell}\attacksymbolq{q}G_{\ell}\attacksymbolq{q}F_{\ell}$ for some $\ell\geq 1$.
For $i\in\{1,\dots,\ell\}$, let $q_{i}=\{F_{i},G_{i}\}$, and let $\vec{x}_{i}$ be a sequence of distinct variables that contains every variable $x\in\queryvars{q_{i}}$ such that for some $j\neq i$, $x\in\queryvars{q_{j}}$.
By Lemma~\ref{lem:sharedvariables}, $\sequencevars{x_{i}}\subseteq\key{F_{i}}\cap\key{G_{i}}$.

For $i\in\{1,\dots,\ell\}$, let $\db_{i}$ be the subset of $\db$ containing every fact $A$ with the same relation name as $F_{i}$ or $G_{i}$.
Call a {\em partition\/} of $\db_{i}$ a maximal subset $P$ of $\db_{i}$ such that for some $\vec{a}\in D^{\card  {\vec{x}_{i}}}$, for all $A\in P$, there exists a valuation $\theta$ such that $A=\substitute{\theta}{\vec{x}_{i}}{\vec{a}}(F_{i})$ or $A=\substitute{\theta}{\vec{x}_{i}}{\vec{a}}(G_{i})$.
The sequence $\vec{a}$ is called the {\em vector\/} of partition $P$.

In words, each partition of $\db_{i}$ groups facts that can be obtained from $F_{i}$ or $G_{i}$ by replacing the variables of $\vec{x}_{i}$ with the same fixed constants.
For example, the attack graph in Fig.~\ref{fig:threecycles} contains an attack cycle involving
$R_{3}(\underline{x,y,u_{3}},u_{4})$ and $R_{4}(\underline{x,y,u_{4}},u_{3})$.
The sequence $\tuple{x,y}$ contains the variables that also occur in other cycles.
The facts $R_{3}(\underline{a,b,c},d)$ and $R_{4}(\underline{a,b,e},f)$ both belong to the partition with vector $\tuple{a,b}$. 

Clearly, two facts that belong to distinct partitions of $\db_{i}$ cannot be key-equal.
It follows that each repair of $\db_{i}$ is a disjoint union of repairs, one for each partition of $\db_{i}$.

Let $\clean{\db_{i}}$ be the smallest subset of $\db_{i}$ that contains every partition $P$ satisfying $P\in\cqak{q_{i}}$. 
By Lemma~\ref{lem:sharedvariables} and~\cite[Theorem~2]{DBLP:journals/ipl/KolaitisP12},
$\cqak{q_{i}}$ is in \PTIME\ for $1\leq i\leq\ell$. 
From the following sublemma, it follows that $\cqak{q}$ is in \PTIME.

\begin{sublemma}\label{prop:shared}
The following are equivalent:
\begin{enumerate}
\item
$\db\in\cqak{q}$.
\item
$\bigcup_{1\leq i\leq\ell}\clean{\db_{i}}\models q$.
\end{enumerate}
\end{sublemma}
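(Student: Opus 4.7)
The plan is to prove both directions of the equivalence by exploiting two structural facts: since $q$ has no self-join, the subsets $\db_i$ are pairwise disjoint and (as every atom of $q$ lies in some terminal cycle) their union is $\db$; and by Lemma~\ref{lem:sharedvariables}(1), every variable shared between two distinct cycles lies in the primary key of every atom of its cycles, so the vector $\vec{a}$ of a partition of $\db_i$ pins down the values of the shared variables $\vec{x}_i$ uniformly across cycles.

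For the direction $(2)\Rightarrow(1)$, I would take an arbitrary repair $\rep$ of $\db$ and exhibit a valuation $\Theta$ with $\Theta(q)\subseteq\rep$. Starting from a valuation $\theta$ witnessing $\bigcup_i\clean{\db_i}\models q$, the facts $\theta(F_i),\theta(G_i)$ lie in $\clean{\db_i}$ (self-join-freeness) and agree on $\vec{x}_i$, so they belong to the same partition $P_i$ of $\db_i$ with vector $\vec{a}_i=\theta(\vec{x}_i)$, and $P_i\in\cqak{q_i}$ by construction of $\clean{\db_i}$. Since $\rep\cap P_i$ is a repair of $P_i$, certainty of $q_i$ on $P_i$ yields a valuation $\theta_i'$ with $\theta_i'(q_i)\subseteq\rep\cap P_i$, and the partition definition forces $\theta_i'(\vec{x}_i)=\vec{a}_i$. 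Every variable of $q$ either belongs to a single cycle (and is constrained by only one $\theta_i'$) or is a shared variable (in which case all $\theta_i'$'s that mention it assign the common value dictated by $\theta$), so the $\theta_i'$'s merge into a single valuation $\Theta$ over $\queryvars{q}$ with $\Theta(q)\subseteq\rep$.

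For $(1)\Rightarrow(2)$, I would argue contrapositively: assuming $\bigcup_i\clean{\db_i}\not\models q$, I build a specific repair $\rep$ of $\db$ falsifying $q$. For each partition $P$ of each $\db_i$, pick a repair $\rep_P$ of $P$, choosing $\rep_P\not\models q_i$ whenever $P\notin\clean{\db_i}$ (possible, since then $P\notin\cqak{q_i}$) and any repair otherwise; set $\rep=\bigcup_{i,P}\rep_P$. That this is a repair of $\db$ uses that partitions within $\db_i$ are pairwise key-disjoint and that different $\db_i$'s use disjoint relation names. If $\rep\models q$ via some $\theta$, the same partition-matching argument as above forces $\theta(F_i),\theta(G_i)$ into a common $\rep_{P_i}$ that satisfies $q_i$, so $P_i\in\clean{\db_i}$ for every $i$, and then $\theta(q)\subseteq\bigcup_i\clean{\db_i}$ contradicts the hypothesis.

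The main obstacle is the gluing step in $(2)\Rightarrow(1)$: passing from certainty on each two-atom sub-query $q_i$ to certainty on $q$ would fail if two distinct cycles shared a variable outside the primary keys, since the repairs of different $\db_i$'s are chosen independently and could produce incompatible bindings for such a variable. Lemma~\ref{lem:sharedvariables}(1) is precisely what rules this out, and the introduction of the partition vector $\vec{x}_i$ is the bookkeeping device that forces the locally guaranteed valuations to agree on all shared variables. Everything else reduces to straightforward bookkeeping enabled by the absence of self-joins.
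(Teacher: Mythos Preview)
Your proposal is correct and follows essentially the same approach as the paper's proof. For $(1)\Rightarrow(2)$ both you and the paper construct the same repair (falsifying repairs on partitions outside $\clean{\db_i}$, arbitrary repairs elsewhere) and use the partition-matching argument to push the satisfying valuation into $\bigcup_i\clean{\db_i}$; for $(2)\Rightarrow(1)$ both arguments fix the shared variables via the partition vectors $\vec{a}_i$ and use $P_i\in\cqak{q_i}$ to obtain local witnesses in the given repair, then glue---the paper phrases this via the substituted queries $\substitute{q_i}{\vec{x}_i}{\vec{a}_i}$ whereas you explicitly merge valuations $\theta_i'$, but the content is the same.
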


For the step of the induction, assume that $F$ is an unattacked atom in $q$'s attack graph.
Let $\vec{x}$ be a sequence of distinct variables such that $\sequencevars{\vec{x}}=\key{F}$.
By Corollary~8.11 in~\cite{DBLP:journals/tods/Wijsen12}, the following are equivalent.
\begin{enumerate}
\item
$\db\in\cqak{q}$.
\item
For some $\vec{a}\in D^{\card{\vec{x}}}$, $\db\in\cqak{\substitute{q}{\vec{x}}{\vec{a}}}$.
\end{enumerate}

Let $\vec{y}$ be a sequence of distinct variables such that $\sequencevars{\vec{y}}=\all{F}\setminus\key{F}$.
Let $q'=q\setminus\{F\}$.
By Lemma~\ref{lem:purified},
it is possible to compute in polynomial time a database $\db'$ that is purified relative to $\substitute{q}{\vec{x}}{\vec{a}}$ such that
$$\db\in\cqak{\substitute{q}{\vec{x}}{\vec{a}}}\iff\db'\in\cqak{\substitute{q}{\vec{x}}{\vec{a}}}.$$
By Lemma~\ref{lem:SE3}, the following are equivalent:
\begin{enumerate}
\item
$\db'\in\cqak{\substitute{q}{\vec{x}}{\vec{a}}}$.
\item
$\db'\neq\emptyset$ and for all $\vec{b}\in D^{\card{\vec{y}}}$,
if $\substitute{F}{\vec{x}\vec{y}}{\vec{a}\vec{b}}\in\db'$,
then $\db'\in\cqak{\substitute{q'}{\vec{x}\vec{y}}{\vec{a}\vec{b}}}$.
\end{enumerate}
By Lemma~\ref{lem:weakweak}, all cycles in the attack graph of $\substitute{q'}{\vec{x}\vec{y}}{\vec{a}\vec{b}}$ are weak and terminal.
By the induction hypothesis, it follows that $\cqak{\substitute{q'}{\vec{x}\vec{y}}{\vec{a}\vec{b}}}$ is in \PTIME.
Since the sizes of $D^{\card{\vec{x}}}$ and $D^{\card{\vec{y}}}$ are polynomially bounded in the size of $\db$,
it is correct to conclude that $\cqak{q}$ is in \PTIME.
\end{proof}

\subsection{Nonterminal Weak Cycles}

Theorems~\ref{the:strongcycle} and~\ref{the:outdegree} leave open the complexity of $\cqak{q}$ when the attack graph of $q$ contains one or more nonterminal weak cycles and no strong cycle.
In this section, we zoom in on acyclic queries $\acq{k}$, defined next for $k\in\{2,3,\dots\}$, whose attack graph contains $\frac{k(k-1)}{2}$ nonterminal weak cycles and no strong cycle.
By showing tractability of $\cqak{\acq{k}}$, we obtain more supporting evidence for Conjecture~\ref{con:weak}.
As a side result, we will solve a complexity issue raised by Fuxman and Miller~\cite{FuxmanM07}. 


\begin{definition}
For $k\geq 2$, let $\cq{k}$ and $\acq{k}$ denote the following Boolean conjunctive queries without self-join.
\begin{eqnarray*}
\cq{k}  & = & \{R_{1}(\underline{x_{1}},x_{2}), R_{2}(\underline{x_{2}},x_{3}), \dots,
                R_{k-1}(\underline{x_{k-1}},x_{k}),\\
        &   &   \phantom{\{}R_{k}(\underline{x_{k}},x_{1})\},\\
\acq{k} & = & \{R_{1}(\underline{x_{1}},x_{2}), R_{2}(\underline{x_{2}},x_{3}), \dots,
                R_{k-1}(\underline{x_{k-1}},x_{k}),\\ 
        &   &   \phantom{\{}R_{k}(\underline{x_{k}},x_{1}), S_{k}(\underline{x_{1},x_{2},\dots,x_{k}})\},
\end{eqnarray*}
where $x_{1},\dots,x_{k}$ are distinct variables and $R_{1},\dots,R_{k},S_{k}$ distinct relation names.
For $i\in\{1,\dots,k\}$, relation name $R_{i}$ is of signature~$\signature{2}{1}$, and $S_{k}$ is of signature~$\signature{k}{k}$.
\end{definition}

Obviously, a query $q$ is acyclic if it contains an atom $F$ such that $\all{F}=\queryvars{q}$.
Therefore, $\acq{k}$ is acyclic because the $S_{k}$-atom contains all variables that occur in the query.
On the other hand, $\cq{k}$ is acyclic if $k=2$ and cyclic if $k\geq 3$. 

For $i\in\{1,\dots,k\}$, the attack graph of $\acq{k}$ contains attacks from the $R_{i}$-atom to every other atom. 
Figure~\ref{fig:acqthree} shows the attack graph of $\acq{3}$.
All attack cycles are weak, but Theorem~\ref{the:outdegree} does not apply because the cycles are nonterminal.

$\cqak{\cq{k}}$ was claimed \coNP-hard for all $k\geq 2$ in~\cite{FuxmanM07}.
Later, however, Wijsen~\cite{Wijsen2010950} found a mistake in the proof of that claim and showed that $\cqak{\cq{k}}$ is tractable if $k=2$. 
The complexity of $\cqak{\cq{k}}$ for $k\geq 3$ will be settled by Corollary~\ref{cor:fuxman}.

\begin{figure}\centering
\includegraphics{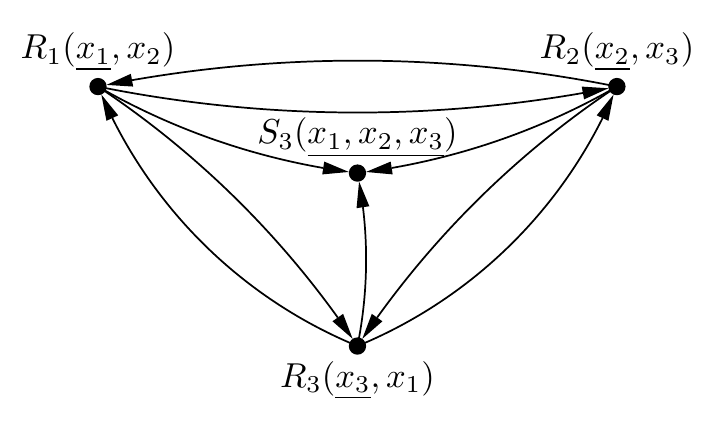}
\caption{Attack graph of $\acq{3}$. All cycles are weak and nonterminal}\label{fig:acqthree}
\end{figure}

\begin{figure}\centering
\includegraphics{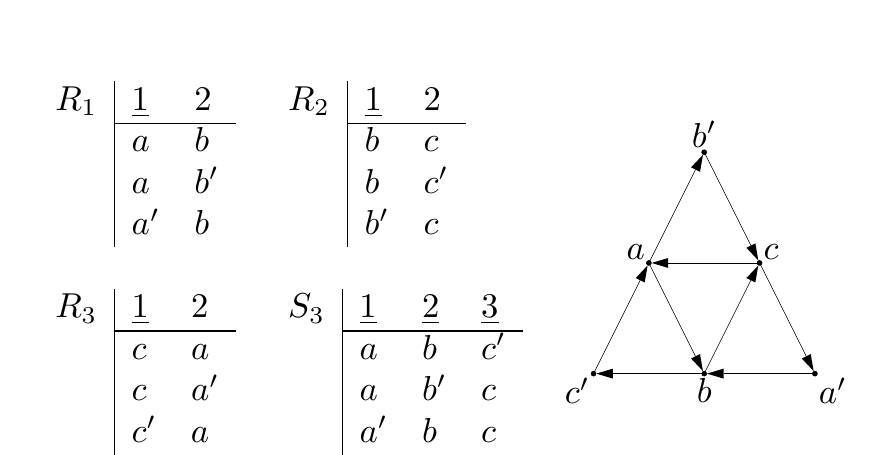}
\caption{At the left: uncertain database that is purified relative to $\acq{3}$.
At the right: graph representation of $R_{1}\cup R_{2}\cup R_{3}$.
Note that the three cycles encoded in $S_{3}$ are clockwise.}\label{fig:acq}
\end{figure}

\begin{figure}\centering
\includegraphics{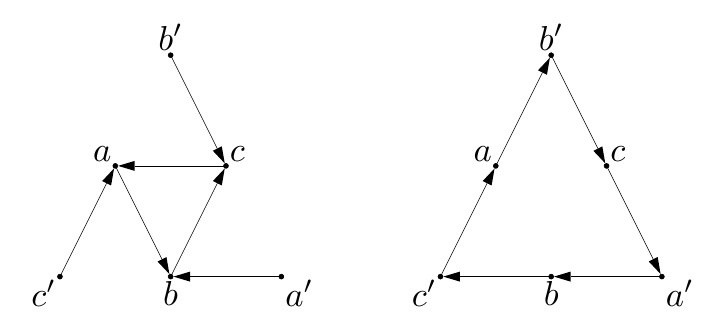}
\caption{Graph representation of two repairs (of the uncertain database of Fig.~\ref{fig:acq}) that falsify $\acq{3}$.
The left cycle is anticlockwise and not encoded in $S_{3}$.
}\label{fig:acqrep}
\end{figure}

\begin{theorem}\label{the:acq}
For $k\geq 2$, $\cqak{\acq{k}}$ is in \PTIME.
\end{theorem}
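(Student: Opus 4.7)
The whole argument exploits one structural fact: the atom $S_{k}$ is all-key, so every $S_{k}$-fact of $\db$ is a singleton block and therefore belongs to every repair of $\db$. By Lemma~\ref{lem:purified} I may assume $\db$ is purified relative to $\acq{k}$; then every $R_{i}$-fact $R_{i}(\underline{c_{i}},c_{i+1})\in\db$ participates in at least one tuple $(c_{1},\dots,c_{k})$ such that $S_{k}(\underline{c_{1},\dots,c_{k}})\in\db$ and all companion $R_{j}$-facts are also in $\db$. A repair is then determined by a choice function $\sigma=(\sigma_{1},\dots,\sigma_{k})$ picking one second component from each nonempty $R_{i}$-block. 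Rewriting the query, $\db\in\cqak{\acq{k}}$ iff for every $\sigma$ there is some $S_{k}(\underline{c_{1},\dots,c_{k}})\in\db$ with $\sigma_{i}(c_{i})=c_{i+1}$ for each $i$ modulo $k$; equivalently, the instance is a NO-instance iff some $\sigma$ ``breaks'' every $S_{k}$-witnessed cycle by disagreeing with the required second component in at least one position.

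First I would run a saturation loop. Whenever a block of $\db$ for relation $R_{i}$ with key value $c$ is a singleton, the corresponding value $\sigma_{i}(c)$ is forced; whenever an $S_{k}$-witnessed cycle has all $k$ of its required $R_{i}$-edges in singleton blocks, every repair completes that cycle and I can output YES. If this closure does not decide the instance, then every surviving $S_{k}$-fact has at least one ``flexible'' position, and it remains to decide breakability of the residual problem in polynomial time. My plan for the residual step is to encode the ``break every cycle'' problem either as a 2-SAT instance or as a maximum-independent-set problem in a claw-free graph, which would parallel the route taken by Kolaitis and Pema~\cite{DBLP:journals/ipl/KolaitisP12} for the two-atom case. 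Note that for $k=2$ the theorem then follows directly from that base case together with the observation that $S_{2}$-facts are always present.

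The main obstacle is $k\geq 3$: a ``break at least one of $k$ required edges'' constraint is a $k$-ary disjunction, and such CSPs are \NP-hard in general. Tractability must therefore come from the specific structure of $\acq{k}$, in which the $k$ choice-variables of a cycle share their first coordinates with choice-variables of other cycles through the values $c_{1},\dots,c_{k}$, and in which the attack graph has only weak cycles (albeit nonterminal ones). The technical heart of the proof will be to show that after saturation the hypergraph of surviving $S_{k}$-cycles has a structure in which each cycle effectively reduces to having only two really-flexible positions, or equivalently that the associated 2-colouring problem admits a matching-based decomposition. Once this combinatorial claim is in place, membership of $\cqak{\acq{k}}$ in \PTIME\ follows by composing purification, the propagation loop, and the residual polynomial-time algorithm.
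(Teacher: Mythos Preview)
Your setup is correct up to the reformulation: after purification, a repair is a choice function $\sigma$, and $\db\not\in\cqak{\acq{k}}$ iff some $\sigma$ avoids completing every cycle recorded in the $S_{k}$-relation. The gap is everything after that. You correctly identify that the residual constraint system consists of $k$-ary disjunctions (``break at least one of the $k$ edges of this cycle''), and you correctly note this is \NP-hard in general. Your proposed escape---that saturation leaves each surviving cycle with at most two genuinely flexible positions, so the residue is 2-SAT or claw-free---is asserted but not argued, and in fact is not true. Nothing in the structure of $\acq{k}$ forces a purified, propagation-stable instance to have this binary shape; one can build instances where many $S_{k}$-cycles each have three or more non-singleton positions sharing vertices in arbitrary ways. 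So the ``technical heart'' you defer is not a routine lemma but the entire difficulty, and the 2-SAT/claw-free route does not go through for $k\geq 3$.

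The paper takes a completely different, graph-theoretic route. View the $R_{i}$-facts as the edges of a directed graph $G$ on the active domain; purification makes $G$ $k$-partite with all edges going from $\type{x_{i}}$ to $\type{x_{i+1}}$, so every elementary cycle has length a multiple of $k$. Let $\calC$ be the set of length-$k$ cycles recorded by $S_{k}$. Choosing $\sigma$ is exactly choosing one outgoing edge per vertex; such a choice induces at least one cycle in each strong component of $G$, and (by extending along shortest paths to already-marked vertices) induces no further cycles beyond those you explicitly pick. Hence $\db\not\in\cqak{\acq{k}}$ iff \emph{every} strong component of $G$ contains a cycle that is not in $\calC$. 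For a fixed strong component this is decidable in polynomial time: either it contains a length-$k$ cycle not listed in $S_{k}$ (there are at most $\card{V}^{k}$ candidates), or it contains an elementary cycle of length $>k$ (detectable by enumerating length-$(k{+}1)$ paths and testing reachability back to the start while avoiding the path's initial vertices). If neither holds, every cycle of the component is in $\calC$ and the answer is YES. This strong-component argument is the missing idea; your CSP reduction does not reach it.
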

\begin{proof}
({\em Extended sketch.\/})
Let $\db$ be an uncertain database with schema $\{R_{1}$,\dots, $R_{k}$, $S_{k}\}$.
By Lemma~\ref{lem:purified}, we can assume without loss of generality that $\db$ is purified relative to $\acq{k}$.
Let $D$ be the active domain of $\db$.
For every $i\in\{1,\dots,k\}$, define $\type{x_{i}}$ as the subset of $D$ that contains $a$ if for some valuation $\mu$, $\substitute{\mu}{x_{i}}{a}(\acq{k})\subseteq\db$. 
Since $\acq{k}$ has no self-join, we can assume without loss of generality that $i\neq j$ implies $\type{x_{i}}\cap\type{x_{j}}=\emptyset$.

For example, assume $R_{i}(\underline{a},b),R_{j}(\underline{c},d)\in\db$ with $i<j$.
Since $a\in\type{x_{i}}$, $b\in\type{x_{i+1}}$, and $c\in\type{x_{j}}$,
it follows that $b\neq a\neq c$ and that $b=c$ implies $j=i+1$.

The $R_{i}$-facts of $\db$ can be viewed as edges of a directed graph ($1\leq i\leq k$).
This is illustrated in Fig.~\ref{fig:acq} for $k=3$.
Let $G=(V,E)$ be the directed graph such that $V=D$ and 
$E=\{(a,b)\mid\mbox{$R_{i}(\underline{a},{b})\in\db$ for some $i$}\}$.
Then, $G$ is $k$-partite with vertex classes $\type{x_{1}}$, \dots, $\type{x_{k}}$.
Furthermore, whenever $(a,b)\in E$ and $a\in\type{x_{i}}$, then $b\in\type{x_{i+1}}$ if $i<k$ (and $b\in\type{x_{1}}$ if $i=k$).
It follows that the length of every elementary cycle in $G$ must be a multiple of $k$.
Since $\db$ is purified, no vertex has zero outdegree.
We define $\calC$ as the set of cycles of length $k$ such that if $\db$ contains $S_{k}(\underline{a_{1},\dots,a_{k}})$,
then $\calC$ contains the cycle $a_{1},a_{2},\dots,a_{k},a_{1}$.


Since $\db$ is purified, $G$ is a vertex-disjoint union of strong components $S_{1},\dots,S_{\ell}$ (for some $\ell\geq 0$) such that for $i\neq j$, no edge leads from a vertex in $S_{i}$ to a vertex in $S_{j}$.\footnote{A strong component of a graph $G$ is a maximal strongly connected subgraph of $G$.
A graph is strongly connected if there is a path from any vertex to any other.} 

In what follows, some vertices and edges of $G$ will be {\em marked\/}.
It is straightforward that $\db\not\in\cqak{\acq{k}}$ is equivalent to the following.
\begin{equation}\label{eq:nocalc}
\begin{minipage}{0.42\textwidth}
It is possible to mark exactly one outgoing edge for each vertex of $G$ without marking all edges of some cycle in $\calC$.
\end{minipage}
\end{equation}

We provide a polynomial-time algorithm for testing condition~(\ref{eq:nocalc}).
Marking one outgoing edge for each vertex will create a cycle of marked edges in each strong component.

For each strong component $S_{i}$, consider the following cases successively and execute the first one that applies.

\myparagraph{Case $S_{i}$ contains a cycle of length $k$ that does not belong to $\calC$.}
Such a cycle is illustrated by Fig.~\ref{fig:acqrep} (left).
Mark all vertices and edges of the cycle.
Notice that the number of cycles of length $k$ is at most $\card{V}^{k}$,
which is polynomial in the size of $\db$.

\myparagraph{Case $S_{i}$ contains an elementary cycle of length (strictly) greater than $k$.}
Such a cycle is illustrated by Fig.~\ref{fig:acqrep} (right).
Mark all vertices and edges of the cycle.
To see that this step is in polynomial time, notice that the following are equivalent:
\begin{itemize}
\item
$S_{i}$ contains an elementary cycle of length greater than $k$.
\item
$S_{i}$ contains a path $a_{1},a_{2},\dots,a_{k},a_{k+1}$ such that $a_{1}\neq a_{k+1}$ and $S_{i}$ contains a path from $a_{k+1}$ to $a_{1}$ that contains no edge from $\{a_{1},a_{2},\dots,a_{k}\}\times V$.
\end{itemize}
The latter condition can be tested in polynomial time, because there are at most $\card{V}^{k+1}$ distinct choices for $a_{1},a_{2},\dots,a_{k},a_{k+1}$ and paths can be found in polynomial time.

\myparagraph{Case neither of the above two cases applies.}
Conclude that~(\ref{eq:nocalc}) is false.

\myparagraph{}
If after the previous step every strong component contains a cycle of marked edges, 
then it is correct to conclude that~(\ref{eq:nocalc}) is true.
Notice that every cycle of $\calC$ now contains at least one unmarked edge.
We can  achieve~(\ref{eq:nocalc}) by marking, for each yet unmarked vertex, the vertices and edges on a shortest path to some marked vertex.
This can be done without creating new cycles of marked edges.
\end{proof}

Since query $\cq{k}$ is acyclic if $k\geq 3$, attack graphs are not defined for $\cq{k}$ if $k\geq 3$.
Nevertheless, the following lemma immediately implies that if $\cqak{\acq{k}}$ is tractable,
then so is $\cqak{\cq{k}}$.

\begin{lemma}\label{lem:cq}
Let $q$ be a Boolean conjunctive query without self-join.
If $q'\subseteq q$ and every atom in $q\setminus q'$ is all-key,
then there exists an \ACO\ many-one reduction from $\cqak{q'}$ to $\cqak{q}$. 
\end{lemma}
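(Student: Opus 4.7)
The plan is to reduce $\cqak{q'}$ to $\cqak{q}$ by padding $\db_{0}$ with all-key facts that make the atoms in $q\setminus q'$ trivially satisfiable in every repair. Given an uncertain database $\db_{0}$ (without loss of generality containing no facts whose relation name occurs in $q\setminus q'$, since such facts do not influence $\cqak{q'}$), let $D$ be the union of the active domain of $\db_{0}$ with the finitely many constants appearing in $q$. For each atom $R(\underline{\vec{x}})\in q\setminus q'$, of arity $n=\card{\vec{x}}$, let $E_{R}=\{R(\underline{\vec{a}})\mid\vec{a}\in D^{n}\}$, and define $\db=\db_{0}\cup\bigcup_{R}E_{R}$. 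Since every such $R$ is all-key and $q$ has no self-join, each added fact is its own block, and no new key conflicts arise. Consequently, the repairs of $\db$ are exactly the sets of the form $\rep_{0}\cup E$, where $\rep_{0}$ ranges over the repairs of $\db_{0}$ and $E=\bigcup_{R}E_{R}$ is fixed. The mapping $\db_{0}\mapsto\db$ is first-order definable (a Cartesian power of the active domain for each of the finitely many atoms of $q\setminus q'$), hence is an \ACO\ reduction.

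It remains to verify $\db_{0}\in\cqak{q'}\iff\db\in\cqak{q}$. For ``$\Rightarrow$'', let $\rep_{0}\cup E$ be any repair of $\db$. Since every repair of $\db_{0}$ satisfies $q'$, pick a witness $\theta'$ with $\theta'(q')\subseteq\rep_{0}$, and extend it to $\theta$ over $\queryvars{q}$ by mapping each variable in $\queryvars{q}\setminus\queryvars{q'}$ to an arbitrary element of $D$. Then every atom of $q\setminus q'$ is sent by $\theta$ into $E$ by construction, while $\theta(q')\subseteq\rep_{0}$, hence $\theta(q)\subseteq\rep_{0}\cup E$. For ``$\Leftarrow$'', let $\rep_{0}$ be any repair of $\db_{0}$; then $\rep_{0}\cup E$ is a repair of $\db$, so some witness $\theta$ satisfies $\theta(q)\subseteq\rep_{0}\cup E$. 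Since the relation names of $q'$ do not occur in $E$, the restriction of $\theta$ to $\queryvars{q'}$ sends $q'$ into $\rep_{0}$, witnessing $\rep_{0}\models q'$.

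The argument is essentially bookkeeping; the two minor points requiring attention are (i) the degenerate case $D=\emptyset$, in which $\db=\db_{0}=\emptyset$ and both sides of the equivalence are false because the empty repair satisfies no nonempty conjunctive query, and (ii) confirming that the construction is in \ACO\ rather than merely polynomial-time, which follows because $q$ is a fixed parameter and $D^{n}$ is first-order definable for each constant $n$.
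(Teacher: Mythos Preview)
Your proof is correct and follows essentially the same approach as the paper: pad the input database with all possible all-key facts over the active domain for each relation in $q\setminus q'$, observe that repairs factor as $\rep_{0}\cup E$, and note that the construction is first-order definable. Your write-up is in fact more careful than the paper's, which simply states the equivalence $\db\in\cqak{q'}\iff f(\db)\in\cqak{q}$ without verifying it; your inclusion of the constants of $q$ in $D$ and your handling of the degenerate case $D=\emptyset$ are refinements the paper omits.
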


\begin{corollary}\label{cor:fuxman}
For $k\geq 2$, $\cqak{\cq{k}}$ is in \PTIME.
\end{corollary}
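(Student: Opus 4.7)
The plan is to derive the corollary immediately from Theorem~\ref{the:acq} and Lemma~\ref{lem:cq}. First I would observe that $\cq{k} \subseteq \acq{k}$: indeed, $\acq{k}$ was defined to be exactly $\cq{k}$ together with one extra atom, namely $S_{k}(\underline{x_{1},x_{2},\dots,x_{k}})$. Thus taking $q = \acq{k}$ and $q' = \cq{k}$, the set difference $q \setminus q' = \{S_{k}(\underline{x_{1},\dots,x_{k}})\}$.

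Next I would check that the hypothesis of Lemma~\ref{lem:cq} is met, i.e., that every atom of $q \setminus q'$ is all-key. This is immediate from the definition of $\acq{k}$: the relation name $S_{k}$ has signature $\signature{k}{k}$, so its primary key coincides with its entire attribute list, which is the definition of an all-key atom. Applying Lemma~\ref{lem:cq} then yields an \ACO\ many-one reduction from $\cqak{\cq{k}}$ to $\cqak{\acq{k}}$.

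Finally I would combine this reduction with Theorem~\ref{the:acq}, which asserts that $\cqak{\acq{k}}$ is in \PTIME\ for every $k \geq 2$. Since \ACO\ reductions compose with polynomial-time computation to yield polynomial-time computation, we conclude that $\cqak{\cq{k}}$ is in \PTIME\ as well. There is no real obstacle here: the work has already been done in establishing Theorem~\ref{the:acq} (the tractability argument based on strong components and cycle analysis for $\acq{k}$) and in Lemma~\ref{lem:cq} (the generic reduction principle for dropping all-key atoms). Corollary~\ref{cor:fuxman} is just the two-line consequence of these, and it settles in the affirmative the tractability question for the $k \geq 3$ case of Fuxman and Miller's cycle queries.
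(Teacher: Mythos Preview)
Your proposal is correct and follows essentially the same approach as the paper's own proof: apply Lemma~\ref{lem:cq} (noting that $\acq{k}\setminus\cq{k}$ consists of the single all-key atom $S_{k}(\underline{x_{1},\dots,x_{k}})$) to obtain an \ACO\ reduction from $\cqak{\cq{k}}$ to $\cqak{\acq{k}}$, and then invoke Theorem~\ref{the:acq}. Your write-up simply makes explicit the verification of the all-key hypothesis that the paper leaves implicit.
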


Unsurprisingly, there exist acyclic queries $q\not\in\{\acq{k}\mid k\geq 2\}$ whose attack graph contains some nonterminal cycle and no strong cycle.
The complexity of $\cqak{q}$ for such queries $q$ is open.


\section{Uncertainty and Probability}\label{sec:bid}

In this section, we study the relationship between the complexities of $\cqak{q}$ and evaluating $q$ on probabilistic databases.
The motivation is that, on input of an uncertain database $\db$, the problem $\cqak{q}$ is solved if we can determine whether query $q$ evaluates to probability~$1$ on the probabilistic database obtained from $\db$ by assuming a uniform probability distribution over the set of repairs of $\db$. 
We show, however, that this approach provides no new insights in the tractability frontier of $\cqak{q}$.

\subsection{Background from Probabilistic Databases}

In this section, we review an important result from probabilistic database theory.

\begin{definition}
A {\em possible world\/} $\pw$ of uncertain database $\db$ is a consistent subset of $\db$.
The set of possible worlds of $\db$ is denoted $\pworlds{\db}$.
Notice that possible worlds, unlike repairs, need not be maximal consistent.

A {\em probabilistic database\/} is a pair $(\db,\probsymbol)$ where $\db$ is an uncertain database and  
$\probsymbol:\pworlds{\db}\rightarrow\probset$ is a total function such that 
$\sum_{\pw\in\pworlds{\db}}\prob{\pw}=1$.
We will assume that the numbers in the codomain of $\probsymbol$ are rational.
\end{definition}

The following definition extends the function $\probsymbol$ to Boolean first-order queries $q$.

\begin{definition}\label{def:extpr}
Let $(\db,\probsymbol)$ be a probabilistic database.
Let $q$ be a Boolean first-order query.
We define 
$$\prob{q}=\sum_{\pw\in\pworlds{\db}:\pw\models q}\prob{\pw}.$$
In words, $\prob{q}$ sums up the probabilities of the possible worlds that satisfy $q$.
\end{definition}

Of special interest is the application of Definition~\ref{def:extpr} in case $q$ is a single fact, or a Boolean combination of facts.
Notice that if $(\db,\probsymbol)$ is a probabilistic database and $A_{1},\dots,A_{n}$ are distinct facts belonging to a same block of $\db$, 
then $\prob{A_{1}\lor A_{2}\lor\dots\lor A_{n}}=\sum_{i=1}^{n}\prob{A_{i}}$, because no possible world can contain two distinct facts that belong to a same block.

\begin{definition}
Probabilistic database $(\db,\probsymbol)$ is called {\em block-independent-disjoint\/} (BID) if the following holds:
whenever $A_{1},\dots,A_{n}$ are facts of $\db$ taken from $n$ distinct blocks (for some $n\geq 1$), then 
$\prob{A_{1}\land A_{2}\land\dots\land A_{n}}=\prod_{i=1}^{n}\prob{A_{i}}$.
\end{definition}

Theorem~2.4 in~\cite{DBLP:journals/jcss/DalviRS11} implies that every BID probabilistic database $(\db,\probsymbol)$ is uniquely determined if $\prob{A}$ is given for every fact $A\in\db$.
This allows for an efficient encoding: rather than specifying $\prob{\pw}$ for every $\pw\in\pworlds{\db}$,
it suffices to specify $\prob{A}$ for every $A\in\db$.
In the complexity results that follow, this efficient encoding is assumed.

Notice that we can turn an uncertain database $\db$ into a BID probabilistic database by assuming that the probabilities of all repairs are equal and sum up to $1$.
A consistent subset of $\db$ that is not maximal, would then have zero probability.

\begin{function}
\KwIn{
$q$ is a Boolean conjunctive query without self-join.}
\KwResult{
Boolean in $\{\waar,\onwaar\}$.}
\Begin{
\SE{1}\If{$\card{q}=1$ and $\queryvars{q}=\emptyset$}
{\Return{$\waar$}\;}

\SE{2}\If{$q=q_{1}\cup q_{2}$ with $q_{1}\neq\emptyset\neq q_{2}$, $\queryvars{q_{1}}\cap\queryvars{q_{2}}=\emptyset$}
{\Return{$\ref{algo:issafe}(q_{1})\land\ref{algo:issafe}(q_{2})$}\;}

\tcc{$a$ is an arbitrary constant}
\SE{3}\If{$\bigcap_{F\in q}\key{F}\neq\emptyset$}
{select $x\in\bigcap_{F\in q}\key{F}$\;
\Return{$\ref{algo:issafe}(\substitute{q}{x}{a})$}\;}

\SE{4}\If{there exists $F\in q$ such that $\key{F}=\emptyset\neq\all{F}$}
{select $F\in q$ such that $\key{F}=\emptyset\neq\all{F}$;
 select $x\in\all{F}$\;
\Return{$\ref{algo:issafe}(\substitute{q}{x}{a})$}\;}

\If{none of the above}
{\Return{$\onwaar$}\;}
}
\caption{IsSafe($q$) Determine whether $q$ is safe~\cite{DBLP:journals/jcss/DalviRS11}}\label{algo:issafe}
\end{function}

\begin{definition}
Given a Boolean query $q$, $\prok{q}$ is the following function problem:
on input of a BID probabilistic database $(\db,\probsymbol)$, determine the value of $\prob{q}$.

A Boolean conjunctive query $q$, without self-join, is called {\em safe\/} if Algorithm~\ref{algo:issafe} returns $\waar$.
\end{definition}

The following result establishes a dichotomy in the complexity of $\prok{q}$.

\begin{theorem}[\cite{DBLP:journals/jcss/DalviRS11}]\label{the:suciu}
Let $q$ be a Boolean conjunctive query without self-join.
\begin{enumerate}
\item
If $q$ is safe, then $\prok{q}$ is in \FPTIME.
\item
If $q$ is not safe, then $\prok{q}$ is \sharpP-hard.
\end{enumerate}
\end{theorem}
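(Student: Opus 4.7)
The plan is to prove the two items separately: for (1), recursively match each rule of Algorithm~\ref{algo:issafe} to a closed-form probability identity that is computable in polynomial time; for (2), show that every query on which IsSafe fails encodes a canonical \sharpP-hard counting problem.

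For the upper bound, I will argue by induction on the recursion depth of $\ref{algo:issafe}(q)$. Rule \se{1} returns a probability that can be read off directly from the BID encoding. For \se{2}, since $q$ has no self-join and $\queryvars{q_1}\cap\queryvars{q_2}=\emptyset$, the atoms of $q_1$ and $q_2$ use disjoint relation names and hence disjoint blocks; the BID independence across distinct blocks then gives $\prob{q}=\prob{q_1}\cdot\prob{q_2}$. For \se{3}, if some variable $x$ appears in $\key{F}$ for every $F\in q$, then distinct groundings $\substitute{q}{x}{a}$ and $\substitute{q}{x}{a'}$ use disjoint sets of facts (since they force different key values), so the events are independent and $\prob{q}=1-\prod_{a\in D}\bigformula{1-\prob{\substitute{q}{x}{a}}}$. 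Rule \se{4}, where $\key{F}=\emptyset\neq\all{F}$, handles the case in which all $F$-facts lie in a single block; here the probability still decomposes by conditioning on which (if any) $F$-fact is chosen and applying the recursion to the residual query. Each step reduces query size and uses at most $\card{D}$-many recursive calls, giving an overall polynomial running time.

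For the lower bound, the plan is to prove that whenever none of \se{1}--\se{4} applies, the query $q$ must contain a ``hardness pattern''—most canonically, (up to substitution and union with independent subqueries) the query $H_0=\{R(\underline{x},y),S(\underline{y},z)\}$. Given such a pattern, I would reduce from a known \sharpP-hard problem, for instance \#PP2DNF (counting satisfying assignments of partitioned positive 2-DNF formulas) or counting independent sets in bipartite graphs, by encoding the graph's bipartition into the two relations so that computing $\prob{H_0}$ on the constructed BID instance returns (a simple algebraic transform of) the target count. Composing this reduction with the combinatorial structure forced by the failure of all four rules of IsSafe would carry \sharpP-hardness back to $\prok{q}$.

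The hard part, and the technical core of the original Dalvi--Suciu argument, is the completeness of \se{1}--\se{4}: proving that if IsSafe returns \onwaar, then some embedded sub-pattern genuinely admits the reduction, and that neither renaming variables nor restricting instances can rescue a polynomial-time algorithm. This requires a careful structural dissection of what the negations of the four preconditions jointly impose on $q$ (no isolated ground atom, no disconnected split, no key-ubiquitous variable, no all-key zero-arity-key atom), together with a robustness argument showing the reduction survives all such queries uniformly.
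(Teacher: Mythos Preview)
This theorem is not proved in the paper at all: it is quoted verbatim from Dalvi, Re, and Suciu~\cite{DBLP:journals/jcss/DalviRS11} and used as a black box (only in the one-line proof of Corollary~\ref{cor:prob}). There is therefore no ``paper's own proof'' to compare your attempt against; the paper treats the result as background.

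As a standalone sketch of the Dalvi--Suciu argument your outline is broadly on target for item~(1): the inductive matching of \se{1}--\se{3} to the identities $\prob{q}=\prob{q_{1}}\prob{q_{2}}$ and $\prob{q}=1-\prod_{a}\bigformula{1-\prob{\substitute{q}{x}{a}}}$ is exactly how the polynomial-time algorithm runs. Your treatment of \se{4} is where the sketch is thinnest. The recursion in \ref{algo:issafe} substitutes a \emph{single} variable $x\in\all{F}$, not all of $\all{F}$, so ``conditioning on which $F$-fact is chosen'' does not directly match the syntactic step; the actual probability identity behind \se{4} requires more care (an inclusion--exclusion over the single block, interacting with the remaining variables of $F$), and you would need to spell out why safety of $\substitute{q}{x}{a}$ for one constant $a$ suffices. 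For item~(2) you correctly identify the structure of the hardness argument and honestly flag the hard part---showing that failure of all four rules forces an embedded hard pattern---but what you have written is a plan rather than a proof. None of this is a criticism relative to the present paper, which simply cites the theorem.
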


\subsection{Comparing Complexities}

The following proposition establishes a straightforward relationship between the problems $\prok{q}$ and $\cqak{q}$.
The only subtlety is that a repair contains a fact of each block,
while a possible world and a block may have an empty intersection (recall that possible worlds, unlike repairs, need not be maximal consistent).
In the statement of this proposition,
$\db'$ restricts $\db$ to the set of blocks whose probabilities sum up to~$1$. 
 
\begin{proposition}\label{prop:bid}
Let $(\db,\probsymbol)$ be a BID probabilistic database.
Let $\db'$ be the smallest subset of $\db$ that contains every block $\block$ of $\db$ such that $\sum_{A\in\block}\prob{A}=1$.
Let $q$ be a Boolean conjunctive query.
Then the following are equivalent:
\begin{enumerate}
\item
$\db'\in\cqak{q}$.
\item
On input $(\db,\probsymbol)$, the answer to the function problem $\prok{q}$ is $1$.
\end{enumerate}
\end{proposition}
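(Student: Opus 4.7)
The plan is to characterize, in the BID setting, which possible worlds of $\db$ carry positive probability, and to match them with repairs of $\db'$. Recall that in a BID database the probability of a possible world $\pw$ factors as a product over blocks $\block$ of $\db$: each block contributes $\prob{A}$ if $\pw\cap\block=\{A\}$, and it contributes $1-\sum_{A\in\block}\prob{A}$ if $\pw\cap\block=\emptyset$. The key observation is that $\prob{q}=1$ is equivalent to saying that every positive-probability possible world satisfies $q$, so both conditions in the proposition are statements about this same family of worlds, and everything reduces to identifying these worlds with repairs of $\db'$.

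First, I would reduce without loss of generality to the case $\prob{A}>0$ for every $A\in\db$, since deleting facts with $\prob{A}=0$ affects neither $\prob{q'}$ for any query $q'$ nor which blocks have probability sum equal to $1$. Under this assumption, a possible world $\pw$ of $\db$ satisfies $\prob{\pw}>0$ iff $\pw$ contains exactly one fact of every block of $\db'$ and at most one fact of each other block.

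Next I would prove the two implications. For $(1)\Rightarrow(2)$, take any positive-probability possible world $\pw$ of $\db$. Then $\rep:=\pw\cap\db'$ contains exactly one fact from each block of $\db'$, hence is a repair of $\db'$. Hypothesis (1) gives $\rep\models q$, and since $\rep\subseteq\pw$ and Boolean conjunctive queries are monotone, $\pw\models q$. For $(2)\Rightarrow(1)$, take any repair $\rep$ of $\db'$. Viewed as a subset of $\db$, $\rep$ is a possible world that picks one fact from each block of $\db'$ and nothing from the remaining blocks, so its probability equals
$$\Bigformula{\prod_{A\in\rep}\prob{A}}\cdot\Bigformula{\prod_{\block}\bigformula{1-\sum_{A\in\block}\prob{A}}},$$
where the second product ranges over the blocks of $\db$ that are not included in $\db'$. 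The first factor is positive by the WLOG assumption, and the second is positive because every such block has probability sum strictly below $1$ by the very definition of $\db'$. Hence $\prob{\rep}>0$, and so by hypothesis $\rep\models q$, establishing $\db'\in\cqak{q}$.

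The only delicate point I expect is the initial bookkeeping around zero-probability facts; once that reduction is in place, the remainder is a direct combination of the BID factorization of world probabilities and the monotonicity of Boolean conjunctive queries.
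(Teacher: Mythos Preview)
Your core argument is exactly the paper's: for $(1)\Rightarrow(2)$, intersect a positive-probability world with $\db'$ to obtain a repair of $\db'$ and invoke monotonicity; for $(2)\Rightarrow(1)$, observe that any repair of $\db'$ is itself a positive-probability possible world. The paper's write-up is terser (it does not spell out the product formula), but the logic is identical.

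The one genuine issue is your WLOG step. You argue that deleting zero-probability facts preserves $\prob{q}$ and preserves which blocks sum to~$1$, but you do not check that it preserves the truth of ``$\db'\in\cqak{q}$'', and in general it does not: removing facts from a block of $\db'$ removes repairs of $\db'$, so certainty can flip from false to true. Concretely, take a single block $\{A,B\}$ with $\prob{A}=1$, $\prob{B}=0$ and let $q$ be the single fact $A$; then $\prob{q}=1$ while the repair $\{B\}$ of $\db'$ falsifies $q$, so $(2)$ holds and $(1)$ fails. This same example shows that the proposition as literally stated breaks down when zero-probability facts are allowed; the paper's own proof of $(2)\Rightarrow(1)$ tacitly assumes all fact probabilities are positive when it asserts $\prob{\rep}>0$ for an arbitrary repair $\rep$ of $\db'$. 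Under that standing assumption your reduction is vacuous and the remainder of your argument coincides with the paper's.
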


The following theorem establishes a nontrivial relationship between the complexities of $\cqak{q}$ and $\prok{q}$. 
Notice that the query $q$ in the theorem's statement is not required to be acyclic.

\begin{theorem}\label{the:prob}
Let $q$ be a Boolean conjunctive query without self-join.
If $q$ is safe, then $\cqak{q}$ is first-order expressible. 
\end{theorem}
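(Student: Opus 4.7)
The plan is to proceed by induction on the number of recursive calls made by Algorithm~\ref{algo:issafe} on input $q$, showing that each of the four rules R1--R4 preserves first-order expressibility of $\cqak{\cdot}$. The base case R1 is immediate: if $q=\{R(\underline{\vec{c}},\vec{b})\}$ is a single ground fact, then $\cqak{q}$ is expressed by $R(\underline{\vec{c}},\vec{b})\land\forall\vec{z}\bigformula{R(\underline{\vec{c}},\vec{z})\rightarrow\vec{z}=\vec{b}}$, saying that $F$ is in $\db$ and is alone in its block. For R2, since $q$ has no self-join, $q_1$ and $q_2$ share no relation names either, and a repair satisfies $q$ exactly when it satisfies both $q_i$; the rewriting for $\cqak{q}$ is the conjunction of the two inductively supplied rewritings.

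For R3, the key observation is that when $x\in\bigcap_{F\in q}\key{F}$, two facts of the same relation whose $x$-position values differ must lie in distinct blocks. Partitioning $\db$ (restricted to the relations of $q$) into sub-databases $\db^a$ indexed by the $x$-position value, every repair $\rep$ decomposes as a disjoint union $\bigcup_a\rep^a$ of repairs $\rep^a$ of $\db^a$, and since any valuation satisfying $q$ fixes $x$ to some single constant $a$, we have $\rep\models q$ iff $\rep^a\models q[x/a]$ for some $a$. A contrapositive argument then yields $\db\in\cqak{q}\iff\exists a\colon\db\in\cqak{q[x/a]}$: if no single $a$ worked, one would pick a bad repair $\rep_\ast^a$ of each $\db^a$ and assemble them into a repair of $\db$ falsifying $q$. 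Writing $\varphi(z)$ for the inductively supplied rewriting of $\cqak{q[x/z]}$ (uniform in the substituted constant), the sentence $\exists z\,\varphi(z)$ is a certain rewriting for $\cqak{q}$.

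R4 is the subtlest case and constitutes the main obstacle. When $F=R_F(\underline{\vec{c}},\vec{y})\in q$ satisfies $\key{F}=\emptyset$, I would first establish, for arbitrary $\db$, the equivalence: $\db\in\cqak{q}$ iff (i) $\db$ contains some fact $R_F(\underline{\vec{c}},\vec{b})$, and (ii) for every such $\vec{b}$, $\db\in\cqak{(q\setminus\{F\})[\vec{y}/\vec{b}]}$. This is a non-purified analogue of Lemma~\ref{lem:SE3}, proved by a swapping argument: because $q$ has no self-join, the query $(q\setminus\{F\})[\vec{y}/\vec{b}]$ mentions no $R_F$-atom, so replacing one fact of $F$'s block by another within a repair does not affect satisfaction of that reduced query. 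The technical heart of the proof is then to verify that each reduced query $(q\setminus\{F\})[\vec{y}/\vec{b}]$ is itself safe, so that the induction supplies a rewriting $\varphi(\vec{z})$ uniform in $\vec{z}$. I would obtain this safety by tracking the recursion of Algorithm~\ref{algo:issafe} on $q$: the variables of $\all{F}$ must all be substituted by R4-steps applied to $F$ (since R3 cannot touch them, as $\key{F}=\emptyset$); once $F$ becomes ground, R2 is forced to separate it from the remainder $(q\setminus\{F\})[\vec{y}/\vec{a}]$, whose safety is certified by the successful recursive call on it. Combining the two ingredients, $\cqak{q}$ is expressed by $\exists\vec{z}\,R_F(\underline{\vec{c}},\vec{z})\land\forall\vec{z}\bigformula{R_F(\underline{\vec{c}},\vec{z})\rightarrow\varphi(\vec{z})}$, completing the induction.
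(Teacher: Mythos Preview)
Your induction on the rules of \ref{algo:issafe} matches the paper's proof exactly for R1--R3. For R4 the paper takes a shortcut: it substitutes all of $\all{F}$ in $q$ (keeping $F$, now ground), observes the resulting query is still safe, applies the induction hypothesis, and then invokes Lemma~8.6 of~\cite{DBLP:journals/tods/Wijsen12} to lift first-order expressibility back to $q$. Your route instead removes $F$ and proves a non-purified analogue of Lemma~\ref{lem:SE3} directly, which is correct and arguably more self-contained; the two are easily seen to be equivalent since $q[\vec{y}/\vec{a}]$ differs from $(q\setminus\{F\})[\vec{y}/\vec{a}]$ only by a ground, variable-disjoint atom.

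One imprecision worth tightening: your argument that $(q\setminus\{F\})[\vec{y}/\vec{a}]$ is safe asserts that ``the variables of $\all{F}$ must all be substituted by R4-steps applied to $F$'' and that afterwards ``R2 is forced to separate it from the remainder.'' Neither step is forced: R2 may split $q$ before all of $\all{F}$ is substituted, R4 may select a different atom with empty key, and once $F$ is ground R2 may choose a split other than $\{F\}\mid q\setminus\{F\}$. The claim is nevertheless true, and follows cleanly once you invoke the confluence of \ref{algo:issafe} (its return value is independent of the nondeterministic choices, as established in~\cite{DBLP:journals/jcss/DalviRS11}): you may then \emph{choose} the run that repeatedly applies R4 to $F$ and then splits off $\{F\}$ via R2, which certifies safety of the remainder. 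Make that appeal explicit and the argument is complete.
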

\begin{proof}
The proof runs by induction on the execution of Algorithm~\ref{algo:issafe}.
Since $q$ is safe, some rule of \ref{algo:issafe} applies to $q$.

\myparagraph{Case \se{1} applies.} 
If $q$ consists of a single fact, then $\cqak{q}$ is obviously first-order expressible.

\myparagraph{Case \se{2} applies.} 
Let $q=q_{1}\cup q_{2}$ with $q_{1}\neq\emptyset\neq q_{2}$ and $\queryvars{q_{1}}\cap\queryvars{q_{2}}=\emptyset$.
Since $q$ is safe, $q_{1}$ and $q_{2}$ are safe by definition of safety.
By the induction hypothesis,
there exists a certain first-order rewriting $\varphi_{1}$ of $q_{1}$, and a certain first-order rewriting $\varphi_{2}$ of $q_{2}$.
Obviously, $\varphi_{1}\land\varphi_{2}$ is a certain first-order rewriting of $q$.

\myparagraph{Case \se{3} applies.} 
Assume variable $x$ such that for every $F\in q$, $x\in\key{F}$.
By definition of safety, $\substitute{q}{x}{a}$ is safe.
It can be easily seen that $\db\in\cqak{q}$ if and only if for some constant $a$,
$\db\in\cqak{\substitute{q}{x}{a}}$.
By the induction hypothesis, $\cqak{\substitute{q}{x}{a}}$ is first-order expressible.
Let $\varphi$ be a certain first-order rewriting of $\substitute{q}{x}{c}$, where we assume without loss of generality that $c$ is a constant that does not occur in $q$.
Let $\varphi(x)$ be the first order formula obtained from $\varphi$ by replacing each occurrence of $c$ with $x$.
Then, $\exists x\varphi(x)$ of a certain first-order rewriting of $q$.

\myparagraph{Case \se{4} applies.} 
Assume $F\in q$ such that $\key{F}=\emptyset$ and $\all{F}\neq\emptyset$.
Let $\vec{x}$ be a sequence of distinct variables such that $\sequencevars{\vec{x}}=\all{F}$. 
Let $\vec{a}=\tuple{a,a,\dots,a}$ be a sequence of length $\card{\vec{x}}$.
By definition of safety, $\substitute{q}{\vec{x}}{\vec{a}}$ is safe.
By the induction hypothesis, $\cqak{\substitute{q}{\vec{x}}{\vec{a}}}$ is first-order expressible.
From Lemma~8.6 in~\cite{DBLP:journals/tods/Wijsen12}, it follows that $\cqak{q}$ is first-order expressible
\end{proof}

\begin{corollary}\label{cor:prob}
Let $q$ be a Boolean conjunctive query without self-join.
If $\cqak{q}$ is not first-order expressible,
then the function problem $\prok{q}$ is \sharpP-hard.
\end{corollary}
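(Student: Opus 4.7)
The plan is to derive Corollary~\ref{cor:prob} immediately from Theorem~\ref{the:prob} combined with the Dalvi--Suciu dichotomy recalled as Theorem~\ref{the:suciu}; no new technical work is needed. Concretely, I would argue by contraposition.

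First, suppose $\cqak{q}$ is not first-order expressible. The contrapositive of Theorem~\ref{the:prob} (``if $q$ is safe, then $\cqak{q}$ is first-order expressible'') yields that $q$ is not safe, i.e., Algorithm~\ref{algo:issafe} does not return $\waar$ on $q$. The hypothesis that $q$ is a Boolean conjunctive query without self-join is preserved, so the premise of Theorem~\ref{the:suciu} is satisfied.

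Second, I would apply item~2 of Theorem~\ref{the:suciu}: since $q$ is not safe, $\prok{q}$ is \sharpP-hard. Chaining the two implications gives exactly the statement of the corollary.

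There is no real obstacle here; the entire substance of the result is already packaged into Theorem~\ref{the:prob}, whose inductive proof along the cases \se{1}--\se{4} of Algorithm~\ref{algo:issafe} matches the syntactic definition of safety used in Theorem~\ref{the:suciu}. The only sanity check worth stating explicitly in the write-up is that both theorems apply to the same class of queries (Boolean conjunctive, without self-join), which is exactly the class assumed in the corollary.
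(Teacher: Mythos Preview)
Your proposal is correct and matches the paper's own proof essentially verbatim: the paper also takes the contrapositive of Theorem~\ref{the:prob} to conclude that $q$ is not safe, and then invokes Theorem~\ref{the:suciu} to obtain \sharpP-hardness of $\prok{q}$.
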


For acyclic queries, the only complexities of $\cqak{q}$ left open by Theorems~\ref{the:acyclic}, \ref{the:strongcycle}, and~\ref{the:outdegree} concern queries $q$ with a cyclic attack graph (in particular, an attack graph without strong cycle and with at least one nonterminal weak cycle).
For such a query $q$, $\cqak{q}$ is not first-order expressible (by Theorem~\ref{the:acyclic}), hence $\prok{q}$ is \sharpP-hard (by Corollary~\ref{cor:prob}).
Consequently, the probabilistic database approach fails to provide further insight into the tractability frontier of $\cqak{q}$.
It turns out that the queries $q$ for which $\prok{q}$ is tractable is a very restricted subset of the queries for which $\cqak{q}$ is tractable (assuming $\FPTIME\neq\sharpP$ and $\PTIME\neq\coNP$).

\section{Discussion}\label{sec:discussion}

In the following, we say that a class $\calP$ of function problems {\em exhibits an effective \FPTIME-\sharpP-dichotomy\/}
if all problems in $\calP$ are either in \FPTIME\ or \sharpP-hard and it is decidable whether a given problem in $\calP$ is in \FPTIME\ or \sharpP-hard.
Likewise, we say that a class $\calP$ of decision problems {\em exhibits an effective \PTIME-\coNP-dichotomy\/}
if all problems in $\calP$ are either in \PTIME\ or \coNP-hard and it is decidable whether a given problem in $\calP$ is in \PTIME\ or \coNP-hard.

Recall from Section~\ref{sec:related} that $\sharpCQA{q}$ is the counting variant of $\cqak{q}$,
which takes as input an uncertain database $\db$ and asks how many repairs of $\db$ satisfy query $q$.
For the probabilistic and counting variants of $\cqak{q}$, the following dichotomies have been established.

\begin{theorem}[\cite{DBLP:journals/jcss/DalviRS11},\cite{MASLOWSKIJCSS2012}]\label{the:dichotomy}
The following classes exhibit an effective \FPTIME-\sharpP-dichotomy:
\begin{enumerate}
\item
the class containing $\prok{q}$ for all Boolean conjunctive queries $q$ without self-join; and
\item
the class containing $\sharpCQA{q}$ for all Boolean conjunctive queries $q$ without self-join.
\end{enumerate}
\end{theorem}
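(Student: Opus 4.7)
The plan is to treat the two parts of the theorem separately, since although they state analogous dichotomies, they concern different problem classes whose tractable fragments need not coincide.

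For part~(1), almost all of the work is already done by Theorem~\ref{the:suciu}: safe queries admit a polynomial-time evaluation algorithm obtained by running the rewriting rules of Algorithm~\ref{algo:issafe} to decompose $\prob{q}$ into a product/sum/integration over simpler subqueries, while unsafe queries are $\sharpP$-hard via a reduction from a canonical counting problem such as counting satisfying assignments of a bipartite monotone 2-CNF. The only remaining point is \emph{effectiveness} of the dichotomy, i.e.\ decidability of safety. This is immediate by structural induction: rules \se{1} and \se{2} strictly reduce the number of atoms, while \se{3} and \se{4} strictly reduce the number of variables, so Algorithm~\ref{algo:issafe} always terminates. Consequently, safety can be decided syntactically on $q$ alone, and combining this with Theorem~\ref{the:suciu} yields the effective \FPTIME-\sharpP-dichotomy for $\prok{q}$.

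For part~(2), the approach is to develop an analogous syntactic criterion that characterizes tractability of $\sharpCQA{q}$. The plan is to exhibit a recursive procedure that, on a ``counting-safe'' query $q$, reduces $\sharpCQA{q}$ on input $\db$ to a polynomial number of simpler counting problems, exploiting the fact that repairs are chosen independently block by block. For the hardness direction, one isolates minimal forbidden patterns that are not counting-safe and reduces to $\sharpCQA{q}$ from a known $\sharpP$-hard problem (e.g., counting independent sets in bipartite graphs). The key subtlety that separates this from part~(1) is the \emph{maximality} requirement on repairs: unlike the BID setting, where each block contributes independent probabilities (including the option of contributing no tuple), every repair must select exactly one tuple from each block. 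This extra combinatorial rigidity changes which queries are tractable and forces the counting algorithm (and the corresponding hardness constructions) to differ from Algorithm~\ref{algo:issafe}. Once the syntactic criterion is in place, decidability of classification is again by structural induction.

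The main obstacle in both cases is the hardness side: one must produce, uniformly in the query, a reduction that turns an arbitrary query violating the syntactic criterion into a gadget encoding a $\sharpP$-hard source problem. For part~(1) this is already carried out in~\cite{DBLP:journals/jcss/DalviRS11}, and for part~(2) in~\cite{MASLOWSKIJCSS2012}; we invoke these results directly. The tractability and decidability sides, in contrast, are self-contained in the sense above: they reduce to verifying termination of a finite set of syntactic rewrite rules on $q$ and correctness of the induced evaluation procedure.
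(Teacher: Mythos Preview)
The paper provides no proof of this theorem: it is stated purely as a citation to~\cite{DBLP:journals/jcss/DalviRS11} and~\cite{MASLOWSKIJCSS2012}, and the surrounding discussion in Section~\ref{sec:discussion} merely uses it as a known backdrop against which Conjecture~\ref{con:dichotomy} is posed. Your proposal is therefore appropriate in that it ultimately defers to those references; the extra sketch you give (termination of Algorithm~\ref{algo:issafe} for effectiveness in part~(1), and the high-level shape of the counting-safe criterion for part~(2)) is reasonable elaboration but goes beyond what the paper itself offers.
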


Theorem~\ref{the:strongcycle} and Conjecture~\ref{con:weak} imply the following conjecture, which is thus weaker than Conjecture~\ref{con:weak}.

\begin{conjecture}\label{con:dichotomy}
The class containing $\cqak{q}$ for all acyclic Boolean conjunctive queries $q$ without self-join exhibits an effective \PTIME-\coNP-dichotomy.
\end{conjecture}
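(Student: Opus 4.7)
The plan is to split the conjecture into (i) an effective classification procedure and (ii) the $\PTIME$/$\coNP$-hard dichotomy itself, with the second part reducing almost entirely to Conjecture~\ref{con:weak}. For (i), given an acyclic query $q$ without self-join, I would compute its attack graph in quadratic time using the procedure of~\cite{DBLP:journals/tods/Wijsen12}, enumerate the elementary cycles of this graph, and test each edge $\attacks{F}{G}{q}$ of a cycle against the criterion $\key{G}\subseteq\keyclsup{F}{q}$. This checks whether the attack graph contains a strong cycle, and is effective since $q$ is a fixed query (data complexity) and the attack-graph size is bounded.

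For (ii), if $q$'s attack graph contains a strong cycle, Theorem~\ref{the:strongcycle} already gives $\cqak{q}\in\coNP$-complete. Otherwise, the goal is to show $\cqak{q}\in\PTIME$, which is exactly Conjecture~\ref{con:weak}. The acyclic-attack-graph case is covered by Theorem~\ref{the:acyclic} (first-order expressibility) and the weak-and-terminal case by Theorem~\ref{the:outdegree}. So the entire remaining task is to prove tractability when the attack graph has at least one nonterminal weak cycle and no strong cycle.

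To attack that remaining case I would follow the inductive skeleton of Theorem~\ref{the:outdegree}: at each inductive step, pick an atom of zero indegree in the attack graph and eliminate it by substituting its key variables with active-domain constants (Corollary~8.11 of~\cite{DBLP:journals/tods/Wijsen12}); if an all-key atom with empty key is exposed, reduce it with Lemma~\ref{lem:SE3}. Lemma~\ref{lem:weakweak} is the crucial invariant — constant substitution never creates new attacks nor turns a weak attack into a strong one — so ``no strong cycle'' propagates to the smaller query. Once every atom is attacked, the attack graph is a union of strongly connected components whose sinks consist only of weak attacks. For each such sink component I would try to encode $\cqak{q}$ restricted to the relevant facts as a polynomial-time graph problem, generalizing the multi-partite edge-marking argument in the proof of Theorem~\ref{the:acq}: facts become directed edges of a coloured digraph, the repairs correspond to choices of one out-edge per block, and ``every repair satisfies $q$'' becomes a forced-cycle-cover condition. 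Lemma~\ref{lem:rsu} and Lemma~\ref{lem:sharedvariables} should then constrain the variable-sharing pattern across attacks tightly enough to make this encoding uniform.

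The hard part is precisely that last encoding. When a weak cycle is nonterminal, facts on the cycle are coupled to facts outside via the outgoing attacks, and repairs can no longer be decomposed into independent per-partition choices as in the proof of Theorem~\ref{the:outdegree}; the polynomial-time trick of Theorem~\ref{the:acq} relied on the fact that the cycle structure was essentially the whole query, so all interaction was internal. What seems to be needed is either (a) a structural decomposition lemma showing that every acyclic query whose attack graph has only weak cycles is built from finitely many $\acq{k}$-like patterns, each individually tractable by Theorem~\ref{the:acq} and recombinable without destroying tractability, or (b) a global graph-flow encoding whose correctness uses a strengthening of Lemma~\ref{lem:sharedvariables} describing how key variables are forced to be shared across distinct strongly connected components of the attack graph. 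Designing either tool is, in my view, the main obstacle standing between the present paper and a full proof of Conjecture~\ref{con:dichotomy}.
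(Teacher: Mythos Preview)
This statement is a \emph{conjecture} in the paper, not a theorem; the paper offers no proof. What the paper does say (immediately after stating Conjecture~\ref{con:dichotomy}) is exactly what you have reconstructed: Theorem~\ref{the:strongcycle} together with Conjecture~\ref{con:weak} would imply it, and by Theorems~\ref{the:strongcycle} and~\ref{the:outdegree} the only case left open is that of attack graphs with at least one nonterminal weak cycle and no strong cycle. Your parts~(i) and~(ii) match the paper's own discussion, including the effective test via the attack graph and the observation that the decision procedure terminates because the attack graph has size bounded in~$q$.

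Where you go beyond the paper is in sketching a strategy for that remaining case. The paper does not attempt this; it only offers Theorem~\ref{the:acq} (tractability for the specific family $\acq{k}$) as supporting evidence, and then in Section~\ref{sec:discussion} explicitly says it expects the dichotomy to be harder to establish than the analogous $\FPTIME$--$\sharpP$ dichotomies, because the tractable instances seem to require sophisticated, query-specific polynomial-time algorithms rather than a single uniform procedure. Your honest assessment that ``designing either tool is the main obstacle'' is therefore in full agreement with the paper's position: the gap you identify is real and is precisely why the statement is presented as a conjecture rather than a theorem.
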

 
From Theorems~\ref{the:strongcycle} and~\ref{the:outdegree}, it follows that in order to prove Conjecture~\ref{con:dichotomy}, it suffices to show that an effective \PTIME-\coNP-dichotomy is exhibited by the class containing $\cqak{q}$ for all queries $q$ whose attack graph contains some nonterminal cycle and no strong cycle.

We confidently believe that the \mbox{\PTIME-\coNP}-dichotomy of Conjecture~\ref{con:dichotomy} (if true) will be harder to prove than the \mbox{\FPTIME-\sharpP}-dichotomies established by Theorem~\ref{the:dichotomy}, for the following reasons. 
All problems $\prok{q}$ that are in \FPTIME\ can be solved by a single, fairly simple polynomial-time algorithm which appears in~\cite{DBLP:journals/jcss/DalviRS11}.
Likewise, all problems $\sharpCQA{q}$ in \FPTIME\ can be solved by a single, fairly simple polynomial-time algorithm~\cite{MASLOWSKIJCSS2012}.
On the other hand, $\cqak{q}$ problems in \PTIME\ seem to ask for sophisticated polynomial-time algorithms.
In their proof that Conjecture~\ref{con:dichotomy} holds for queries with exactly two atoms,
Kolaitis and Pema~\cite{DBLP:journals/ipl/KolaitisP12} made use of an ingenious polynomial-time algorithm of Minty~\cite{DBLP:journals/jct/Minty80}.
Our proof of Theorem~\ref{the:acq} uses algorithms from (directed) graph theory. 
Despite their sophistication, these polynomial-time algorithms only solve restricted cases of $\cqak{q}$. 

Notice also that by Corollary~\ref{cor:prob} and Theorem~\ref{the:acyclic}, the function problem $\prok{q}$ is intractable for all acyclic queries $q$ with a cyclic attack graph.
On the other hand, cycles in attack graphs are exactly what makes Conjecture~\ref{con:dichotomy} hard to prove.

\begin{spacing}{0.9}

\end{spacing}

\appendix

\section{Proofs of Section~\ref{sec:preliminaries}}

\begin{proof}[Lemma~\ref{lem:purified}]
In polynomial time, we can construct a maximal sequence
$\db_{0}\stackrel{A_{1}}{\longrightarrow}\db_{1}\stackrel{A_{2}}{\longrightarrow}\db_{2}\dots\stackrel{A_{n}}{\longrightarrow}\db_{n}$ such that for $i\in\{1,2,\dots,n\}$,
\begin{enumerate}
\item
$A_{i}\in\db_{i-1}$;
\item
there exists no valuation $\theta$ such that $A_{i}\in\theta(q)\subseteq\db_{i-1}$; and 
\item
$\db_{i}=\db_{i-1}\setminus\kc{A_{i}}{\db_{i-1}}$.
\end{enumerate}
Clearly, $\db_{n}$ is purified relative to $q$.
We show that for $i\in\{1,2,\dots,n\}$,
$$\db_{i-1}\in\cqak{q}\iff\db_{i}\in\cqak{q}.$$
\framebox{$\implies$} 
By contraposition.
Assume that $\rep$ is a repair of $\db_{i}$ such that $\rep\not\models q$.
Then, $\rep\cup\{A_{i}\}$ is a repair of $\db_{i-1}$ that falsifies $q$.
\framebox{$\impliedby$}
By contraposition.
Assume that $\rep$ is a repair of $\db_{i-1}$ such that $\rep\not\models q$. 
Obviously, $\rep\setminus\kc{A_{i}}{\db_{i-1}}$ is a repair of $\db_{i}$ that falsifies $q$.
\end{proof}

\section{Proofs of Section~\ref{sec:attackgraph}}

\begin{proof}[Lemma~\ref{lem:rsu}]
Assume $\attacksq{F}{G}{q}$.
Let $\tau$ be a join tree for $q$.
Let $F\step{L_{1}}F_{1}\dots\step{L_{m-1}}F_{m-1}\step{L_{m}}G$ be the path in $\tau$ between $F$ and $G$ ($m\geq 1$).

We have $L_{m}\subseteq\all{G}$.
Since $\FD{q\setminus\{F\}}\models\fd{\key{G}}{L_{m}}$ and $L_{m}\nsubseteq\keycl{F}{q}$ (because $\attacksq{F}{G}{q}$),
it must be the case that $\FD{q\setminus\{F\}}\not\models\fd{\key{F}}{\key{G}}$,
hence $\key{G}\nsubseteq\keycl{F}{q}$.

We have $L_{1}\subseteq\all{F}$.
Since $L_{1}\nsubseteq\keycl{F}{q}$ (because $\attacksq{F}{G}{q}$),
it must be the case that $\all{F}\nsubseteq\keycl{F}{q}$. 
\end{proof}

\section{Proofs of Section~\ref{sec:intractability}}

\begin{proof}[Lemma~\ref{lem:two}]
We show that if the attack graph of $q$ contains a strong cycle of length $n$ with $n\geq 3$,
then it contains a strong cycle of some length $m$ with $m<n$.

Let $H_{0}\attacksymbolq{q}H_{1}\attacksymbolq{q}H_{2}\dotsc\attacksymbolq{q}H_{n-1}\attacksymbolq{q}H_{0}$ be a strong cycle of length $n$ ($n\geq 3$) in the attack graph of $q$, where $i\neq j$ implies $H_{i}\neq H_{j}$.
Assume without loss of generality that the attack $\attacksq{H_{0}}{H_{1}}{q}$ is strong.
Thus, $\FD{q}\not\models\fd{\key{H_{0}}}{\key{H_{1}}}$.

We write $i\oplus j$ as shorthand for for $(i+j)\mod n$.
If $\attacksq{H_{1}}{H_{1\oplus 2}}{q}$,
then $H_{0}\attacksymbolq{q}H_{1}\attacksymbolq{q}H_{1\oplus 2}\dotsc\attacksymbolq{q}H_{n-1}\attacksymbolq{q}H_{0}$ is a strong cycle of length $n-1$, and the desired result holds.
Assume next $\nattacksq{H_{1}}{H_{1\oplus 2}}{q}$.
By Lemma~\ref{lem:A2}, $\attacksq{H_{2}}{H_{1}}{q}$.
We distinguish two cases.

\myparagraph{Case $\attacksq{H_{2}}{H_{1}}{q}$ is a strong attack.}
Then $H_{1}\attacksymbolq{q}H_{2}\attacksymbolq{q}H_{1}$ is a strong cycle of length $2<n$.

\myparagraph{Case $\attacksq{H_{2}}{H_{1}}{q}$ is a weak attack.}
If $\attacksq{H_{1}}{H_{0}}{q}$, then $H_{0}\attacksymbolq{q}H_{1}\attacksymbolq{q}H_{0}$ is a strong cycle of length $2<n$.
Assume next $\nattacksq{H_{1}}{H_{0}}{q}$.
Then, from $H_{0}\attacksymbolq{q}H_{1}\attacksymbolq{q}H_{2}$ and Lemma~\ref{lem:A2},
it follows $\attacksq{H_{0}}{H_{2}}{q}$.
The cycle $H_{0}\attacksymbolq{q}H_{2}\attacksymbolq{q}H_{2\oplus 1}\dotsc\attacksymbolq{q}H_{n-1}\attacksymbolq{q}H_{0}$ has length $n-1$.
It suffices to show that the attack $H_{0}\attacksymbolq{q}H_{2}$ is strong.
Assume towards a contradiction that the attack $H_{0}\attacksymbolq{q}H_{2}$ is weak.
Then, $\FD{q}\models\fd{\key{H_{0}}}{\key{H_{2}}}$.
Since $\attacksq{H_{2}}{H_{1}}{q}$ is a weak attack, $\FD{q}\models\fd{\key{H_{2}}}{\key{H_{1}}}$.
By transitivity, $\FD{q}\models\fd{\key{H_{0}}}{\key{H_{1}}}$, a contradiction.
This concludes the proof.
\end{proof}

\begin{subproof}[Sublemma~\ref{prop:F}]
\framebox{1.$\implies$}
Obviously, $\key{F}\subseteq\keycl{F}{q}$.
From $\attacksq{G}{F}{q}$ and Lemma~\ref{lem:rsu}, it follows $\key{F}\nsubseteq\keycl{G}{q}$.
Thus, we can assume $u\in\key{F}$ such that $u\in\keycl{F}{q}\setminus\keycl{G}{q}$.
Hence, $\rv{\theta_{1}}(u)=\theta_{1}(x)$ and $\rv{\theta_{2}}(u)=\theta_{2}(x)$.
Since $\rv{\theta_{1}}(F)$ and $\rv{\theta_{2}}(F)$ are key-equal by the premise,
$\rv{\theta_{1}}$ and $\rv{\theta_{2}}$ agree on all variables of $\key{F}$.
In particular, $\rv{\theta_{1}}(u)=\rv{\theta_{2}}(u)$.
It follows $\theta_{1}(x)=\theta_{2}(x)$.
\framebox{1.$\impliedby$}
Assume $\theta_{1}(x)=\theta_{2}(x)$.
Since $\key{F}\subseteq\keycl{F}{q}$,
and since neither $y$ nor $z$ occurs inside $\keycl{F}{q}$ in the Venn diagram of Fig.~\ref{fig:schema},
it is correct to conclude that $\rv{\theta_{1}}(F)$ and $\rv{\theta_{2}}(F)$ are key-equal.

\framebox{2.$\implies$}
From $\attacksq{F}{G}{q}$ and Lemma~\ref{lem:rsu}, it follows $\all{F}\nsubseteq\keycl{F}{q}$.
Since $\key{F}\subseteq\keycl{F}{q}$, we can assume a variable $u\in\all{F}\setminus\key{F}$ such that $u\not\in\keycl{F}{q}$.
From the premise $\rv{\theta_{1}}(F)=\rv{\theta_{2}}(F)$, it follows $\rv{\theta_{1}}(u)=\rv{\theta_{2}}(u)$.
Since $y$ occurs in all regions outside $\keycl{F}{q}$ in the Venn diagram, we conclude $\theta_{1}(y)=\theta_{2}(y)$.
Finally, $\theta_{1}(x)=\theta_{2}(x)$ follows from item~$1$ proved before.
\framebox{2.$\impliedby$}
Assume $\theta_{1}(x)=\theta_{2}(x)$ and $\theta_{1}(y)=\theta_{2}(y)$.
Since $\all{F}\subseteq\keyclsup{F}{q}$,
and since $z$ does not occur inside $\keyclsup{F}{q}$ in the Venn diagram,
it is correct to conclude $\rv{\theta_{1}}(F)=\rv{\theta_{2}}(F)$.
This concludes the proof of Sublemma~\ref{prop:F}.
\end{subproof}

\begin{subproof}[Sublemma~\ref{prop:G}]
\framebox{1.$\implies$}
Obviously, $\key{G}\subseteq\keycl{G}{q}$.
Since $\attacksq{F}{G}{q}$ is a strong attack,
$\key{G}\nsubseteq\keyclsup{F}{q}$.
We can assume $u\in\key{G}$ such that $u\in\keycl{G}{q}\setminus\keyclsup{F}{q}$. 
Consequently, $\rv{\theta_{1}}(u)=\skolem{\theta_{1}(y)}{\theta_{1}(z)}$ and $\rv{\theta_{2}}(u)=\skolem{\theta_{2}(y)}{\theta_{2}(z)}$.
Since $\rv{\theta_{1}}(F)$ and $\rv{\theta_{2}}(F)$ are key-equal by the premise,
$\rv{\theta_{1}}$ and $\rv{\theta_{2}}$ agree on all variables of $\key{G}$.
In particular, $\rv{\theta_{1}}(u)=\rv{\theta_{2}}(u)$.
It follows $\theta_{1}(y)=\theta_{2}(y)$ and $\theta_{1}(z)=\theta_{2}(z)$.
\framebox{1.$\impliedby$}
Assume $\theta_{1}(y)=\theta_{2}(y)$ and $\theta_{1}(z)=\theta_{2}(z)$.
Since $\key{G}\subseteq\keycl{G}{q}$,
and since $x$ does not occur inside $\keycl{G}{q}$ in the Venn diagram,
it is correct to conclude that $\rv{\theta_{1}}(G)$ and $\rv{\theta_{2}}(G)$ are key-equal.

\framebox{2.$\implies$}
From $\attacksq{G}{F}{q}$ and Lemma~\ref{lem:rsu}, it follows $\all{G}\nsubseteq\keycl{G}{q}$.
Since $\key{G}\subseteq\keycl{G}{q}$, we can assume a variable $u\in\all{G}\setminus\key{G}$ such that $u\not\in\keycl{G}{q}$.
From the premise $\rv{\theta_{1}}(G)=\rv{\theta_{2}}(G)$, it follows $\rv{\theta_{1}}(u)=\rv{\theta_{2}}(u)$.
Since $x$ occurs in all regions outside $\keycl{G}{q}$ in the Venn diagram,
we conclude $\theta_{1}(x)=\theta_{2}(x)$.
Finally, $\theta_{1}(y)=\theta_{2}(y)$ and $\theta_{1}(z)=\theta_{2}(z)$ follow from item~$1$ proved before.
\framebox{2.$\impliedby$}
Trivial.
This concludes the proof of Sublemma~\ref{prop:G}.
\end{subproof}

\begin{subproof}[Sublemma~\ref{prop:bijection}]
\framebox{1.}
Assume $\rep_{0}$ is a repair of $\db_{0}$.
We first show that $\transform{\rep_{0}}\cap\db_{F}$ contains no two distinct key-equal facts.
Let $A,B\in\transform{\rep_{0}}\cap\db_{F}$.
We can assume $\theta_{1},\theta_{2}\in\calV$ such that $\theta_{1}(F_{0}),\theta_{2}(F_{0})\in\rep_{0}$, $A=\rv{\theta_{1}}(F)$, and $B=\rv{\theta_{2}}(F)$.
By Sublemma~\ref{prop:F}, if $A$ and $B$ are key-equal and distinct,
then $\theta_{1}(F_{0})$ and $\theta_{2}(F_{0})$ are key-equal and distinct, contradicting that $\rep_{0}$ is a repair.
We conclude by contradiction that $\transform{\rep_{0}}\cap\db_{F}$ contains no distinct key-equal facts.

In an analogous way, one can use Sublemma~\ref{prop:G} to show that $\transform{\rep_{0}}\cap\db_{G}$ contains no distinct key-equal facts.
Since $\db_{\rest}$ is consistent, it follows that $\transform{\rep_{0}}$ is consistent.

We next show that $\transform{\rep_{0}}$ is a maximal consistent subset of $\db$.
Let $A\in\db_{F}$.
We need to show that $\transform{\rep_{0}}$ contains a fact that is key-equal to $A$.
We can assume $\theta\in\calV$ such that $A=\rv{\theta}(F)$.
Since $\db_{0}$ contains $\theta(F_{0})$ (by definition of $\calV$),
$\rep_{0}$ contains $\theta'(F_{0})$ for some $\theta'\in\calV$ with $\theta'(x)=\theta(x)$.
Consequently, $\transform{\rep_{0}}$ contains $\rv{\theta'}(F)$.
By Sublemma~\ref{prop:F}, $A$ and $\rv{\theta'}(F)$ are key-equal.
We conclude that $\transform{\rep_{0}}$ contains a fact that is key-equal to $A$.

In an analogous way, one can use Sublemma~\ref{prop:G} to show that for every $A\in\db_{G}$,
$\transform{\rep_{0}}$ contains a fact that is key-equal to $A$.

\framebox{2.}
Let $\rep$ be a repair of $\db$.
Let $\rep_{0}$ be the following subset of $\db_{0}$.
$$
\begin{array}{rccl}
\rep_{0} 
         & = &      & \{\theta(F_{0})\mid\rv{\theta}(F)\in\rep,\theta\in\calV\}\\
         &   & \cup & \{\theta(G_{0})\mid\rv{\theta}(G)\in\rep,\theta\in\calV\}
\end{array}
$$   
We show $\rep=\transform{\rep_{0}}$.
Since $\db_{\rest}\subseteq\rep\cap\transform{\rep_{0}}$,
it suffices to show $\rep\setminus\db_{\rest}\subseteq\transform{\rep_{0}}$ and
$\transform{\rep_{0}}\setminus\db_{\rest}\subseteq\rep$.
Let $A\in\rep\setminus\db_{\rest}$.
Let $A=\rv{\theta}(F)$, $\theta\in\calV$ (the case where $A=\rv{\theta}(G)$ is analogous).
Then by definition of $\rep_{0}$, $\theta(F_{0})\in\rep_{0}$, hence $\rv{\theta}(F)\in\transform{\rep_{0}}$.
Conversely, let $A\in\transform{\rep_{0}}\setminus\db_{\rest}$.
Let $A=\rv{\theta}(F)$, $\theta\in\calV$ (the case where $A=\rv{\theta}(G)$ is analogous).
By~(\ref{eq:F}), $\theta(F_{0})\in\rep_{0}$.
We can assume $\theta'\in\calV$ such that $\rv{\theta'}(F)\in\rep$ and $\theta'(F_{0})=\theta(F_{0})$.
By Sublemma~\ref{prop:F}, $\theta'(F_{0})=\theta(F_{0})$ implies $\rv{\theta'}(F)=\rv{\theta}(F)$,
hence $A\in\rep$.

Using Sublemmas~\ref{prop:F} and~\ref{prop:G}, it is straightforward to show that $\rep_{0}$ is a repair of $\db_{0}$.

\framebox{3.}
Let $\rep_{0},\rep_{0}'$ be distinct repairs of $\db_{0}$.
Then there exist distinct key-equal facts $A,B$ such that $A\in\rep_{0}$ and $B\in\rep_{0}'$. 
Assume $A,B$ are $R_{0}$-facts (the case where $A,B$ are $S_{0}$-facts is analogous).
There exist valuations $\theta_{1},\theta_{2}\in\calV$ such that $A=\theta_{1}(F_{0})$ and $B=\theta_{2}(F_{0})$.
By Sublemma~\ref{prop:F}, $\rv{\theta_{1}}(F)$ and $\rv{\theta_{2}}(F)$ are distinct and key-equal.
Since $\rv{\theta_{1}}(F)\in\transform{\rep_{0}}$ and $\rv{\theta_{2}}(F)\in\transform{\rep_{0}'}$,
and since $\transform{\rep_{0}}$, $\transform{\rep_{0}'}$ are consistent by property~1 shown earlier,
it follows $\transform{\rep_{0}}\neq\transform{\rep_{0}'}$.
This concludes he proof of Sublemma~\ref{prop:bijection}.
\end{subproof}

\section{Proofs of Section~\ref{sec:tractability}}

\begin{proof}[Lemma~\ref{lem:weakweak}]
\framebox{1.}
Let $\tau$ be a join tree for $q$.
Let $\tau'$ be the graph obtained from $\tau$ by replacing each vertex $H$ with $\substitute{H}{z}{c}$,
and by adjusting edge labels (that is, every label $L$ is replaced with $L\setminus\{z\}$).
Clearly, $\tau'$ is a join tree for $q'$.

\framebox{2 and 3.}
Let $Q\subseteq q$.
Let  $X,Y\subseteq\queryvars{Q}$.
Let $Q'=\substitute{Q}{z}{c}$.
In the next paragraph, we show that $\FD{Q}\models\fd{X}{Y}$ implies $\FD{Q'}\models\fd{X\setminus\{z\}}{Y\setminus\{z\}}$.

The computation of the attribute closure $\{y\mid\FD{Q}\models\fd{X}{y}\}$ by means of a standard algorithm~\cite[page~165]{ABITEBOUL95} corresponds to constructing a maximal sequence
$$
\begin{array}{llll}
X  & = & S_{0}   & H_{1}\\
   &   & S_{1}   & H_{2}\\
   &   & \multicolumn{1}{c}{\vdots} & \multicolumn{1}{c}{\vdots}\\
   &   & S_{k-1} & H_{k}\\
   &   & S_{k}
\end{array}
$$
where 
\begin{enumerate}
\item $S_{0}\subsetneq S_{1}\subsetneq\dots\subsetneq S_{k-1}\subsetneq S_{k}$; and
\item for every $i\in\{1,2,\dots,k\}$, 
\begin{enumerate}
\item
$H_{i}\in Q$. Thus, $\FD{Q}$ contains the functional dependency $\fd{\key{H_{i}}}{\all{H_{i}}}$.
\item
$\key{H_{i}}\subseteq S_{i-1}$ and $S_{i}=S_{i-1}\cup\all{H_{i}}$.
\end{enumerate}
\end{enumerate}
Then, $S_{k}=\{y\mid\FD{Q}\models\fd{X}{y}\}$. Consider now the following sequence.
$$
\begin{array}{llll}
X\setminus\{z\}  & = & S_{0}\setminus\{z\}   & \substitute{H_{1}}{z}{c}\\
                 &   & S_{1}\setminus\{z\}   & \substitute{H_{2}}{z}{c}\\
                 &   & \multicolumn{1}{c}{\vdots} & \multicolumn{1}{c}{\vdots}\\
                 &   & S_{k-1}\setminus\{z\} & \substitute{H_{k}}{z}{c}\\
                 &   & S_{k}\setminus\{z\}
\end{array}
$$
Clearly, for every $i\in\{1,2,\dots,k\}$, 
\begin{enumerate}
\item
$\substitute{H_{i}}{z}{c}\in Q'$. Thus, $\FD{Q'}$ contains the functional dependency $\fd{\key{\substitute{H_{i}}{z}{c}}}{\all{\substitute{H_{i}}{z}{c}}}$.
\item
$\key{\substitute{H_{i}}{z}{c}}\subseteq S_{i-1}\setminus\{z\}$ and $S_{i}\setminus\{z\}=\bigformula{S_{i-1}\setminus\{z\}}\cup\all{\substitute{H_{i}}{z}{c}}$.
\end{enumerate}
It follows that if $\FD{Q}\models\fd{X}{y}$ and $y\neq z$, then $\FD{Q'}\models\fd{X\setminus\{z\}}{y}$.

To prove~2, assume $\nattacks{F}{G}{q}$.
Then, the unique path in $\tau$ between $F$ and $G$ contains an edge with label $L$ such that
$\FD{q\setminus\{F\}}\models\fd{\key{F}}{L}$.
It follows  $\FD{q'\setminus\{F'\}}\models\fd{\key{F'}}{L\setminus\{z\}}$.
Since $L\setminus\{z\}$ is a label on the unique path in $\tau'$ between $F'$ and $G'$,
it follows $\nattacks{F'}{G'}{q'}$.

To prove~3, assume the attack $\attacks{F}{G}{q}$ is weak.
Then $\FD{q}\models\fd{\key{F}}{\key{G}}$, hence $\FD{q'}\models\fd{\key{F'}}{\key{G'}}$.
It follows that the attack $\attacks{F'}{G'}{q'}$, if it exists, is weak.
\end{proof}

\begin{proof}[Lemma~\ref{lem:terminalcycles}]
Assume each cycle in the attack graph of $q$ is terminal.
Assume towards a contradiction that three distinct atoms $F,G,H$ of $q$ belong to a same cycle such that $\attacksq{F}{G}{q}$ and $\attacksq{G}{H}{q}$.
By Lemma~\ref{lem:A2},
$\attacksq{F}{H}{q}$ or $\attacksq{G}{F}{q}$.
In both cases, the attack graph contains a nonterminal cycle, a contradiction.
\end{proof}

\begin{proof}[Lemma~\ref{lem:sharedvariables}]
\framebox{1.}
Let $\tau$ be a join tree for $q$.
By Lemma~\ref{lem:terminalcycles}, every cycle in $q$'s attack graph is of the form $H\attacksymbolq{q}H'\attacksymbolq{q}H$.
We show that if $H\attacksymbolq{q}H'\attacksymbolq{q}H$,
then $\tau$ contains an edge $\{H,H'\}$.
Assume towards a contradiction that $H\attacksymbolq{q}H'\attacksymbolq{q}H$ and there exists $I\in q$ such that $H\neq I\neq H'$ and $I$ lies on the (unique) path in $\tau$ between $H$ and $H'$.
From $\attacksq{H}{H'}{q}$, it follows $\attacksq{H}{I}{q}$.
Then the cycle $H\attacksymbolq{q}H'\attacksymbolq{q}H$ is not terminal, a contradiction.

Assume variable $x$ occurs in two distinct cycles of $q$'s attack graph.
We can assume disjoint cycles $F\attacksymbolq{q}F'\attacksymbolq{q}F$
and $G\attacksymbolq{q}G'\attacksymbolq{q}G$ such that $x\in\all{F}\cup\all{F'}$ and
$x\in\all{G}$.
Assume towards a contradiction $x\not\in\key{F}$.
By the {\em Connectedness Condition\/}, if $e$ is an edge on the unique path in $\tau$ between $F$ and $G$ and $e\neq\{F,F'\}$, then the edge label of $e$ contains $x$.
Since the cycle $F\attacksymbolq{q}F'\attacksymbolq{q}F$ is terminal, $\nattacksq{F}{G}{q}$.
It follows $x\in\keycl{F}{q}$.
Since $x\not\in\key{F}$, we can assume an atom $H\in q$ such that $H\neq F$ and $\key{H}\subseteq\key{F}$.
From $\attacksq{F}{F'}{q}$ and Lemma~\ref{lem:rsu}, it follows $H\neq F'$.
By the premise of the lemma, we can assume $H'\in q\setminus\{F,F'\}$ such that $H\attacksymbolq{q}H'\attacksymbolq{q}H$.
By the {\em Connectedness Condition\/}, every edge label on the path in $\tau$ between $F$ and $H$ contains $\key{H}$.
Let $H\step{L_{1}}I_{1}\step{L_{2}}I_{2}\dots\step{L_{m-1}}I_{m-1}\step{L_{m}}I_{m}$ with $I_{m}=F$ be the unique path in $\tau$ between $H$ and $F$.
Two cases can occur.
\begin{description}
\item[Case $I_{1}=H'$.]
From $\nattacksq{H'}{I_{2}}{q}$, it follows $L_{2}\subseteq\keycl{H'}{q}$.
Since $\key{H}\subseteq L_{2}$, $\key{H}\subseteq\keycl{H'}{q}$.
\item[Case $I_{1}\neq H'$.]
Since $\attacksq{H'}{H}{q}$ and $\nattacksq{H'}{I_{1}}{q}$, it follows $L_{1}\subseteq\keycl{H'}{q}$.
Since $\key{H}\subseteq L_{1}$, $\key{H}\subseteq\keycl{H'}{q}$.
\end{description}
Hence, $\key{H}\subseteq\keycl{H'}{q}$.
Then by Lemma~\ref{lem:rsu}, $\nattacksq{H'}{H}{q}$, a contradiction.
We conclude by contradiction $x\in\key{F}$.

\framebox{2.}
Assume $\attacks{F}{G}{q}$ is a weak attack.
Let $x\in\key{G}\setminus\key{F}$.
By property 1 proved above, for all $H\in q\setminus\{F,G\}$, $x\not\in\all{H}$.  
Since the attack $\attacks{F}{G}{q}$ is weak, $\FD{q}\models\fd{\key{F}}{x}$.
Since $x$ does not occur in $q\setminus\{F,G\}$, it must be the case that $x\in\all{F}$.
It follows $\key{G}\subseteq\all{F}$.
\end{proof}

\begin{proof}[Lemma~\ref{lem:SE3}]
Assume that $F$ is an $R$-atom.
Since $\db$ is purified relative to $q$,
we can assume that the set of $R$-facts of $\db$ is $\calR=\{\substitute{F}{\vec{y}}{\vec{b}_{1}},\dots$, $\substitute{F}{\vec{y}}{\vec{b}_{\ell}}\}$ for some $\ell\geq 0$ and $\vec{b}_{1},\dots,\vec{b}_{\ell}\in D^{\card{\vec{y}}}$.
Since  $\key{F}=\emptyset$, all facts of $\calR$ are key-equal.

\framebox{1$\implies$2}
Since $\db\models q$, we have $\ell\geq 1$, hence $\db\neq\emptyset$.
Let $\rep$ be a repair of $\db$ and $i\in\{1,\dots,\ell\}$.
We need to show that $\rep\models\substitute{q'}{\vec{y}}{\vec{b}_{i}}$.
Since $\formula{\rep\setminus\calR}\cup\{\substitute{F}{\vec{y}}{\vec{b}_{i}}\}$ is a repair of $\db$,
it follows from the premise that $\formula{\rep\setminus\calR}\cup\{\substitute{F}{\vec{y}}{\vec{b}_{i}}\}\models q$. 
Since $q$ contains no self-join, it follows $\formula{\rep\setminus\calR}\models\substitute{q'}{\vec{y}}{\vec{b}_{i}}$,
hence $\rep\models\substitute{q'}{\vec{y}}{\vec{b}_{i}}$.

\framebox{2$\implies$1}
Let $\rep$ be a repair of $\db$.
By the premise, for every $j\in\{1,\dots,\ell\}$, $\rep\models\substitute{q'}{\vec{y}}{\vec{b}_{j}}$.
We can assume $i\in\{1,\dots,\ell\}$ such that $\substitute{F}{\vec{y}}{\vec{b}_{i}}\in\rep$.
From $\rep\models\substitute{q'}{\vec{y}}{\vec{b}_{i}}$, it follows $\rep\models q$.
\end{proof}

\begin{subproof}[Sublemma~\ref{prop:shared}]
\framebox{1$\implies$2}
Let $\rep=\rep_{1}\cup\rep_{2}$ be a repair of $\db$ such that for all $i\in\{1,\dots,\ell\}$,
for all partitions $P$ of $\db_{i}$,
\begin{itemize}
\item 
if $P\not\in\cqak{q_{i}}$, then $\rep_{1}$ contains a repair of $P$ falsifying $q_{i}$; and
\item
if $P\in\cqak{q_{i}}$, then $\rep_{2}$ contains a repair of $P$.
\end{itemize}
Since $\db\in\cqak{q}$ by the premise, we have $\rep\models q$.
For all $i\in\{1,\dots,\ell\}$, for every valuation $\theta$, if $\theta(q_{i})\subseteq\rep$,
then $\theta(F_{i}),\theta(G_{i})$ must belong to the same partition of $\db_{i}$,
hence $\theta(F_{i}),\theta(G_{i})\subseteq\rep_{2}$. 
Consequently, $\rep_{2}\models q$.
Since $\rep_{2}\subseteq\bigcup_{1\leq i\leq\ell}\clean{\db_{i}}$,
we have $\bigcup_{1\leq i\leq\ell}\clean{\db_{i}}\models q$. 

\framebox{2$\implies$1}
Let $\rep$ be a repair of $\db$.
Let $\vec{x}$ be a sequence of distinct variables containing every variable $x$ such that for some $1\leq i<j\leq\ell$, $x\in\all{q_{i}}\cap\all{q_{j}}$.
By the premise, we can assume $\vec{a}\in D^{\card{\vec{x}}}$ such that $\bigcup_{1\leq i\leq\ell}\clean{\db_{i}}\models\substitute{q}{\vec{x}}{\vec{a}}$.
Let $\theta$ be the valuation over $\sequencevars{\vec{x}}$ such that $\theta(\vec{x})=\vec{a}$.
For $1\leq i\leq\ell$, $\sequencevars{\vec{x}_{i}}\subseteq\sequencevars{\vec{x}}$.
For $1\leq i\leq\ell$, define $\vec{a}_{i}\in D^{\card{\vec{x}_{i}}}$ as the sequence of constants such that $\vec{a}_{i}=\theta(\vec{x}_{i})$.
Then, for each $i\in\{1,\dots,\ell\}$, $\clean{\db_{i}}\models\substitute{q_{i}}{\vec{x}_{i}}{\vec{a}_{i}}$.
Since $\clean{\db_{i}}$ contains a partition with vector $\vec{a}_{i}$ which belongs to $\cqak{q_{i}}$ (by construction),
every repair of $\clean{\db_{i}}$ satisfies $\substitute{q_{i}}{\vec{x}_{i}}{\vec{a}_{i}}$.
Since $\rep$ contains a repair of $\clean{\db_{i}}$, we conclude $\rep\models\substitute{q_{i}}{\vec{x}_{i}}{\vec{a}_{i}}$.
Since $\vec{x}$ contains all variables that occur in two distinct queries among $q_{1},\dots,q_{\ell}$, 
it is correct to conclude $\rep\models\substitute{q}{\vec{x}}{\vec{a}}$.
Since $\rep$ is an arbitrary repair of $\db$, $\db\in\cqak{q}$.
\end{subproof}

\begin{proof}[Lemma~\ref{lem:cq}]
Let $\calD$ be the set of uncertain databases.
We define a mapping $f:\calD\longrightarrow\calD$ as follows.
If $\db$ is an uncertain database with active domain $D$, then $f(\db)$ is the smallest set such that 
\begin{itemize}
\item
for every fact $A\in\db$, if the relation name of $A$ occurs in $q'$, then $A\in f(\db)$; and
\item
whenever $R(\underline{\vec{x}})$ in $q\setminus q'$ and $\vec{a}\in D^{\card{\vec{x}}}$,
then $f(\db)$ contains $R(\underline{\vec{a}})$.
\end{itemize}
It is straightforward that $f$ is first-order expressible and $\db\in\cqak{q'}\iff f(\db)\in\cqak{q}$.
\end{proof}

\begin{proof}[Corollary~\ref{cor:fuxman}]
By Lemma~\ref{lem:cq}, there exists an \ACO\ many-one reduction from $\cqak{\cq{k}}$ to $\cqak{\acq{k}}$.
Then by Theorem~\ref{the:acq}, $\cqak{\cq{k}}$ is in \PTIME.
\end{proof}

\section{Proofs of Section~\ref{sec:bid}}

\begin{proof}[Proposition~\ref{prop:bid}]
\framebox{1$\implies$2}
Let $\pw\in\pworlds{\db}$ such that $\prob{\pw}>0$.
We need to show $\pw\models q$.
From $\prob{\pw}>0$, it follows that for every block $\block$ of $\db$,
if $\sum_{A\in\block}\prob{A}=1$, then $\pw$ contains a fact of $\block$.
Consequently, there exists a repair $\rep$ of $\db'$ such that $\rep\subseteq\pw$.
Since $\rep\models q$ by the premise, it follows $\pw\models q$.

\framebox{2$\implies$1}
Let $\rep$ be a repair of $\db'$, hence $\rep\in\pworlds{\db}$.
We have $\prob{\rep}>0$, because for every block $\block$ of $\db$,
if $\sum_{A\in\block}\prob{A}=1$, then $\rep$ contains a fact of $\block$.
By the premise, $\rep\models q$.
\end{proof}

\begin{proof}[Corollary~\ref{cor:prob}]
Assume $\cqak{q}$ is not first-order expressible.
By Theorem~\ref{the:prob}, query $q$ is  not safe.
By Theorem~\ref{the:suciu}, $\prok{q}$ is \sharpP-hard.
\end{proof}

\end{document}